\def\fullVersion{1}      									
\newcommand{\onlyshortversion}[1]{\if\fullVersion1\else{#1}\fi}
\newcommand{\onlyfullversion}[1]{\if\fullVersion1#1\fi}
\newcommand{\shortorfullversion}[2]{\if\fullVersion1{#2}\else{#1}\fi}
\newcommand{\bra}[1]{\langle #1|}
\newcommand{\ket}[1]{|#1\rangle}
\newcommand{\braket}[2]{\langle #1|#2\rangle}
\definecolor{dgreen}{rgb}{.1,.5,.1}
\definecolor{grey}{rgb}{.6,.6,.6}
\newcommand{\proj}[1]{\ket{#1}\!\bra{#1}}
\newcommand{\sigp}{\textSigma-protocol\xspace}
\newcommand{\sigps}{\textSigma-protocols\xspace}
\def\bfbot{{\boldsymbol\bot}}
\newcommand{\instance}{\ensuremath{\textit{\textsf{inst}}}}
\def\INST{{\cal I}}
\newcommand{\I}{\mathbb I}
\newcommand{\C}{\mathbb C}
\newcommand{\N}{\mathbb N}
\newcommand{\bfx}{\mathbf{x}}
\newcommand{\bfy}{\mathbf{y}}
\newcommand{\DB}{\mathfrak{D}}
\newcommand{\cO}{{\sf cO}}
\newcommand{\CL}{{\sf CL}}
\newcommand{\SZ}[1][s]{{\sf SZ}_{\leq #1}}
\newcommand{\SUC}{{\sf SUC}\xspace}
\def\spc{\hspace{0.05ex}}
\def\nspc{\hspace{-0.1ex}}
\newcommand{\QTC}[3][]{\big\llbracket\spc#2\nspc\stackrel{#1}{\rightarrow}\nspc#3\spc\big\rrbracket}
\newcommand{\qQTC}[3][q]{\big\llbracket\spc#2\stackrel{#1\spc}{\Longrightarrow}\nspc#3\spc\big\rrbracket}
\renewcommand{\P}{\mathsf{P}}
\renewcommand{\L}{\mathsf{L}}
\newcommand{\trafo}{transformation\xspace}
\newcommand{\Trafo}{Transformation\xspace}
\newcommand{\FST}{\shortorfullversion{FS \trafo}{Fiat-Shamir \trafo}\xspace}
\newcommand{\FS}{\shortorfullversion{FS}{Fiat-Shamir}\xspace}
\newcommand{\PoKOE}{\shortorfullversion{PoK-OE}{proof of knowledge with online extractability}\xspace}
\newcommand{\MTree}{\mathsf{MTree}}
\newcommand{\MRoot}{\mathsf{MRoot}}
\newcommand{\MAuth}{\mathsf{MAuth}}
\newcommand{\MOcto}{\mathsf{MOcto}}
\newcommand{\Tree}{\mathsf{Tree}}
\newcommand{\Auth}{\mathsf{Auth}}
\newcommand{\Octo}{\mathsf{Octo}}
\newcommand{\OctoVerify}{\mathit{Octo\!Verify}}
\def\lf{\mathrm{lf}}
\def\showcomments{1}          
\def\serge#1{\ifnum\showcomments=1{\color{red}\sf [SF: #1]}\fi}
\def\jelle#1{\ifnum\showcomments=1{\color{orange}\sf [JD: #1]}\fi}
\def\cs#1{\ifnum\showcomments=1{\color{violet}\sf [CS: #1]}\fi}
\def\cm#1{\ifnum\showcomments=1{\color{dgreen}\sf [CM: #1]}\fi}
\newcommand{\CnO}{C\&O\xspace}
\newcommand{\MCnO}{Merkle-tree-based C\&O\xspace}
\definecolor{aqua}{rgb}{0.0, 1.0, 1.0}
\DeclareMathSymbol{\shortminus}{\mathbin}{AMSa}{"39}
		\tikzstyle{arrow}=[->]
	\tikzstyle{arrow left}=[<-]
	\tikzstyle{Tilted label}=[rotate=-90]
\title{Efficient NIZKs and Signatures from Commit-and-Open Protocols in the QROM}
\author{Jelle Don\inst{1} \and  Serge Fehr\inst{1,2} \and Christian Majenz\inst{3}\and Christian Schaffner \inst{4,5}}
\institute{
	Centrum Wiskunde \& Informatica (CWI), Amsterdam, Netherlands \and 
	Mathematical Institute, Leiden University, Netherlands \and
		Cyber Security Section, Department of Applied Mathematics and Computer Science, Technical University of Denmark, Kgs. Lyngby, Denmark \and
	Informatics Institute, University of Amsterdam, Amsterdam, Netherlands\and
	QuSoft, Amsterdam, Netherlands 
	\\ \email{jelle.don@cwi.nl}, \email{serge.fehr@cwi.nl}, \email{chmaj@dtu.dk}, \email{c.schaffner@uva.nl}
}
\begin{document}
	
	\maketitle

	\setcounter{footnote}{0}

	\begin{abstract}
		Commit-and-open \sigps are a popular class of protocols for constructing non-interactive zero-knowledge arguments and digital-signature schemes via the Fiat-Shamir \trafo. Instantiated with hash-based commitments, the resulting non-interactive schemes enjoy tight online-extractability in the random oracle model. Online extractability improves the tightness of security proofs for the resulting digital-signature schemes by avoiding lossy rewinding or forking-lemma based extraction.  
		
		In this work, we prove tight online extractability in the quantum random oracle model (QROM), showing that the construction supports post-quantum security. First, we consider the default case where committing is done by element-wise hashing. In a second part, we extend our result to Merkle-tree based commitments. Our results yield a significant improvement of the provable post-quantum security of the digital-signature scheme Picnic.
                
		Our analysis makes use of a recent framework by Chung et al.~\cite{CFHL21} for analysing quantum algorithms in the QROM using purely classical reasoning. Therefore, our results can to a large extent be understood and verified without prior knowledge of quantum information science.
	\end{abstract}
	
	%
	%
	%

	\section{Introduction}

	


	Some interactive proofs come with amazing properties like \emph{zero-knowledge} which intuitively allows a prover to convince a verifier that she knows the witness to an NP-statement without giving away \onlyfullversion{any }information about this witness. Such zero-knowledge proofs of knowledge are some of the most fascinating objects in cryptography, and possibly in all of theoretical computer science. One might suspect that their ``magic'' \shortorfullversion{is due to the prover and verifier running }{is rooted in the fact that the prover and verifier run}  an \emph{interactive} protocol with each other, and that this interaction causes the verifier to be convinced. Surprisingly, if the interactive proof is of suitable form, e.g. a \sigp (i.e. a $3$-round public-coin protocol), the Fiat-Shamir \trafo~\cite{Fiat1987} provides a natural way to remove the interaction from such protocols while preserving (most of) the security properties, resulting in \emph{non-interactive zero-knowledge} proofs (NIZKs). 
The idea is to compute the challenge $c$ as a hash $c = H(a)$ of the first message, rather than letting the verifier choose $c$. 
If the original \sigp has additional soundness properties, the resulting NIZK after the Fiat-Shamir \trafo is ideally suited to \shortorfullversion{construct}{be turned into} a \emph{digital-signature scheme}, simply by hashing the message $m$ to be signed together with the first message $a$ in order to obtain the challenge $c$. The candidates Picnic \cite{Chase2017} and Dilithium \cite{NIST:DILITHIUM} in the ongoing NIST post-quantum cryptography competition follow this design paradigm.

This intuitive preservation of security properties under the Fiat-Shamir \trafo can be formalized in the random-oracle model (ROM), where the hash function $H$ is treated as a uniformly random function, and the security reduction gets \emph{enhanced access} to anybody who queries the random oracle, by seeing which values are queried, and by possibly returning (random-looking) outputs. While this situation is conveniently easy to handle in a non-quantum world, complications arise in the context of post-quantum security. When studying the security of these non-quantum protocols against attackers equipped with large-enough quantum computers, it is natural to assume that such attackers have access to the public description of the employed hash function, and can therefore compute
it in superposition on their quantum computers. Therefore, the proper notion of post-quantum security for random oracles is the \emph{quantum-accessible random-oracle model (QROM)} as introduced in~\cite{Boneh2011}. Due to the difficulty of recording adversarial random-oracle queries in superposition (also referred to as the \emph{recording barrier}), establishing post-quantum security in the QROM has turned out to be quite a bit more difficult compared to the regular ROM.

Previous results 
 in \cite{DFMS19} (and concurrently in~\cite{Liu2019}) establish that for any interactive \sigp $\Pi$ that is a proof of knowledge, the non-interactive $\mathsf{FS}[\Pi]$ is a proof of knowledge in the QROM. \cite{DFM20} simplified the technical proof and extended these results to multi-round interactive proofs. However, the most desirable property from such a proof  of knowledge is \emph{online extractability}. 
 	Indeed, online extractability avoids {\em rewinding}, which typically causes a significant loss in the security reduction (see later for a comparison) and has other disadvantages. 
Thus, 
 online extractability allows for the tightest security reductions.

Chailloux was the first to aim for showing online extractability of the Fiat-Shamir transformation in the QROM 
when considering the relevant 
 class of {\em commit-and-open} (\CnO) \sigps and modelling the hash function used for the commitments (and for computing the challenge) as a random oracle. Indeed, the Fiat-Shamir \trafo of such \CnO \sigps are known to be online extractable in the classical ROM (see e.g. discussion in \cite{Fischlin05})
 . 
 In a first attempt~\cite{Chailloux19}, Chailloux tried to lift the argument to the quantum setting by means of Zhandry's compressed-oracle technique~\cite{Zhandry2018}, which offers a powerful approach for re-establishing ROM results in the QROM, that has been successful in many instances. Unfortunately, this first attempt contained a subtle flaw, which turned out to be unfixable, and despite changing the technical approach, the latest version~\cite{Chailloux20} of this work still contains an open gap in the proof, which is put as an assumption.%
\footnote{Informally, quoting from~\cite{Chailloux20}, the considered Assumption~2 is that the random oracle can be replaced with a random function of a particular form {\em ``without harming too much the studied scheme''}. More formally, the security loss caused by the considered replacement is assumed to remain bounded by a given function of the number of oracle queries. This assumption is rather ad-hoc and non-standard in that it is very much tailored to the scheme and its proof.  
Furthermore, even though Assumption~2 
is an assumption that could potentially be proven in future work
, it is hard to judge whether proving the assumption is actually any easier than proving the security of the considered scheme {\em directly}, avoiding Assumption~2\,---\,as a matter of fact, in this work we show that the latter is feasible, while Assumption~2 remains open.} 

In a recent article~\cite{DFMS21}, 
 online extractability of {\em interactive} 
 \CnO \sigps $\Pi$ in the QROM is established
 	; the result applies as soon as $\Pi$ satisfies some liberal notion of {\em special soundness}, which is typically satisfied
 . As pointed out in Appendix~E of~\cite{DFMS21}, one can use previous results from~\cite{DFMS19,Liu2019,DFM20} to reduce the extractability of the resulting non-interactive protocol ${\sf FS}[\Pi]$ to the extractability of the interactive protocol $\Pi$. However, the resulting extraction error still scales as $O(\varepsilon/q^2)$, which results in a prohibitive loss for digital-signature schemes (see Table~\ref{tab:comparison}), leaving open the main question originally posed by Chailloux:
\begin{quote}
\emph{How to establish tight security reductions of the Fiat-Shamir \trafo for commit-and-open \sigps in the QROM?}
\end{quote}
As the technical quantum details of Zhandry's 
compressed-oracle technique are rather complicated and only accessible for experts, a recent article by Chung, Fehr, Huang and Liao~\cite{CFHL21} \onlyfullversion{attempts to give a comprehensive exposition of Zhandry's technique. In addition, they }establish a framework that allows researchers without extensive quantum knowledge to still deploy the compressed-oracle technique (in certain cases), basically by reasoning about classical quantities only. In short, the punchline of~\cite{CFHL21} is that, if applicable, one can prove {\em quantum} query complexity lower bounds (think of collision finding, for instance) by means of the following recipe, which is an abstraction of the technique developed in a line of works started by Zhandry \cite{Zhandry2018,LZ19,CGLQ20,HM20}. First, one considers the corresponding {\em classical} query complexity problem, analyzing it by simulating the random oracle using lazy sampling and showing that the database, which keeps track of the oracle queries and the responses, is unlikely to satisfy a certain property (e.g. to contain a collision) after a bounded number of queries. Then, one lifts the analysis to the quantum setting by plugging \onlyfullversion{certain }key observations from the classical analysis into generic theorems provided by the \cite{CFHL21} framework.

	\subsection{Our Contributions}

	In this work, we \onlyfullversion{slightly} extend the framework from \cite{CFHL21}, and use it in a conceptually new (and arguably roundabout) way to establish strong and tight security statements for a large, popular class of non-interactive zero-knowledge proofs and digital signature schemes. 
        In broad strokes, our contributions are threefold.
	
	\subsubsection{Online extractability for a class of NIZKs in the QROM.} 
		We prove online extractability of the Fiat-Shamir transformation in the QROM for (a large class of) \CnO \sigps
		. This solves the problem considered and attacked by Chailloux. In more detail, 
	we prove that if 
		the considered \CnO \sigp satisfies some very liberal notion of special soundness
	, then the resulting NIZK is a proof of knowledge with online extractability in the QROM, 
		i.e., when the hash function used for the commitments and the \FST is modeled as a quantum-accessible random oracle. 
	Our security reduction is tight: Whenever a prover outputs a valid proof, the online-extractor succeeds, except with a small probability accounting for collision and preimage attacks on the involved hash functions. For previous reductions, the guaranteed extraction success probability was at least by a factor of $q^2$ smaller than the succes probability of the prover subjected to extraction (see Table~\ref{tab:comparison}). This is our main technical contribution, see Theorem \ref{thm:extractor}.  Our result also applies to a variant of the Fiat-Shamir \trafo where a digital signature scheme (DSS) is constructed. It thereby, for the first time, enables a multiplicatively tight security reduction for, e.g., DSS based on the MPC-in-the-head paradigm \cite{IKOS07}, like Picnic \cite{Chase2017}, Banquet \cite{CdSGKOSZ21} and Rainier \cite{DKRSZ21}, in the QROM.
	

	\subsubsection{A more efficient Unruh \trafo.}  When a 
		\sigp 
	does not have the mentioned \CnO structure, a non-interactive proof of knowledge with online extractability in the QROM can be obtained using the Unruh \trafo \cite{Unruh2015}. 
		For technical reasons, the Unruh \trafo requires the hash function to be {\em length preserving}, which may result in large commitments, and thus large NIZKs and digital signature schemes. We revisit this \trafo and show, by a rather direct application of our main result above, that the online extractability of the Unruh transform still holds when using a {\em compressing} hash function. The crucial observation is that the Unruh \trafo can be viewed as the composition of a pre-Unruh \trafo, which makes use of hash-based commitments and results in a \CnO protocol, and the Fiat-Shamir \trafo. By applying our security reduction, we obtain the tight online extractability without requiring the hash function to be length preserving. 

	\subsubsection{More efficient NIZKs via Merkle tree based commitments.}  In real-world constructions based on \CnO protocols, like e.g., the Picnic digital signature scheme, commitments and their openings are responsible for a significant fraction of the signature/proof size. For certain parameters, this cost can be reduced by using a collective commitment mechanism based on Merkle trees. This was observed in passing, e.g. in \cite{Fischlin05}, and is exploited in the most recent versions of Picnic. We formalize \MCnO protocols and extend our main result to NIZKs constructed from them (see Theorem \ref{thm:extractor-M}). Applications \onlyfullversion{of this result} include a security reduction of Picnic 3, the newest version of the Picnic digital signature scheme, that is significantly tighter than existing ones: An adversary against the Picnic~3 signature scheme in the QROM with success probability $\varepsilon$ can now be used to break the underlying hard problem with probability $\varepsilon$, up to some additive error terms, while previous reductions yielded at most $\varepsilon^5/q^{10}$, where $q$ is the number of random oracle queries. We outline this reduction in Section \ref{sec:picnic}.
	
	We compare our reductions in detail to existing techniques in Table \ref{tab:comparison}.
	
	\begin{table}
		\begin{center}
			\begin{tabular}{c||c|c|c}
				& 2-s$\Rightarrow$PoK&\makecell{ PoK$\stackrel{\mathrm{FS}}{\Rightarrow}$NIZK-PoK,\\ PoK$\stackrel{\mathrm{FS}}{\Rightarrow}$UF-NMA DSS}&\makecell{2-s$\stackrel{\mathrm{FS}}{\Rightarrow}$NIZK-PoK,\\2-s$\stackrel{\mathrm{FS}}{\Rightarrow}$UF-NMA DSS}\\
				\hline 
				\hline
				\makecell{Unruh rewinding \cite{Unruh2012}\\ + generic FS \cite{DFMS19}}& $O(\varepsilon^3)$& $O(\varepsilon/q^2)$&$O(\varepsilon ^3/q^6)$\\
				\hline
				\makecell{\textcolor{grey}{\sigp OE \cite{DFMS21}} \\+ generic FS \cite{DFMS19}}& \textcolor{grey}{$\varepsilon-g(q,r,n)$}& $O(\varepsilon/q^2)$& \textcolor{grey}{$O(\varepsilon/q^2)-g(q,r,n)$}\\
				\hline
				\makecell{\textbf{this work}:\\ \textbf{NIZK OE}}&-&-& $
					\boldsymbol{\varepsilon-h(q,r,n)}
				$
			\end{tabular}
		\end{center}
	\caption{Comparison of the losses of different reductions for the construction of a NIZK proof of knowledge (NIZK-PoK) from a special-sound (Merkle tree based) \CnO protocol with constant challenge space size $C$ using $r$-fold parallel repetition and the Fiat-Shamir \trafo. ``OE'' stands for online extraction, 2-s for special soundness, UF-NMA for plain unforgeability and DSS for digital signature scheme. If the content of a cell in row ``security property A $\Rightarrow$ security property B'' is $f(\varepsilon)$, this means that an adversary breaking property B with probability $\varepsilon$ yields an adversary breaking property A with probabilty $f(\varepsilon)$. \textcolor{grey}{Grey text} indicates results that do  not apply to \MCnO protocols like the one used to construct the digital signature schemes Picnic 2 \cite{KZ20} and Picnic 3 \cite{Picnic}.  The additive error terms are $g(q,r,n)=C^{-r}+O(rq2^{-n/2})+O(q^32^{-n})$ and $h(q,r,n)=O(q^32^{-n})+O(q^2C^{-r})$, where $n$ is the output length of the random oracles, and $q$ is the number of adversarial (quantum) queries to the random oracle.  Finally, we note that the constants hidden by the big-O in $h(q,r,n)$ are reasonable, see Theorems~\ref{thm:extractor} and~\ref{thm:extractor-M}. \vspace{-3ex}
}\label{tab:comparison}
	\end{table}
	
	\subsection{Technical Overview}


Our starting point is the fact that the compressed-oracle technique can be \shortorfullversion{seen}{appreciated} as a variant of the classical lazy-sampling technique that is applicable in the QROM. 
Namely, to some extent and informally described here, the compressed-oracle technique gives access to a database that contains the hash values that the adversary $\cal A$, who has interacted with the random oracle (RO), may know. In particular, up to a small error, for any claimed-to-be hash value $y$ output by $\cal A$, one can find its preimage $x$ by inspecting the database (and one can safely conclude that $\cal A$ does not know a preimage of $y$ if there is none in the database). 
Recalling that a \CnO \sigp $\Pi$ is an interactive proof where the first message consists of hash-based commitments, and exploiting that typically some sort of special soundness property ensures that knowing sufficiently many preimages of these commitments/hashes allows one to efficiently compute a witness, constructing an online extractor for the Fiat-Shamir \trafo ${\sf FS}[\Pi]$ then appears straightforward:
%
%
The extractor $\cal E$ simply runs the (possibly dishonest) prover $P^*$, answering RO queries using the compressed oracle. Once $P^*$ has finished and outputs a proof, $\cal E$ measures the compressed-oracle database and classically reads off any preimages of the commitments in the proof. 
Finally, $\cal E$ run the special soundness extractor that computes a witness from 
	the obtained preimages
. It is, however, not obvious that the database contains the preimages of the commitments that are {\em not} opened in the proof, or that these preimages are correctly formed. Intuitively this should be the case: the RO used for the Fiat-Shamir \trafo replaces interaction in that it forces the prover to chose a full set of commitments \emph{before} knowing which ones need to be opened. 
	The crux lies in replacing this intuition by a rigorous proof
. 

The main insight leading to our proof 
 is that the event that needs to be controlled, namely that \emph{the prover succeeds yet the extractor fails}, can be translated into a property $\SUC$ (as in ``adversarial SUCcess'' ) of the compressed-oracle database, which needs to be satisfied for the event to hold. It is somewhat of a peculiar property though. The database properties that have led to query complexity lower bounds in prior work, e.g. for (multi-)collision finding \cite{LZ19,HM20,CFHL21} and similar problems \cite{Zhandry2018,CGLQ20,BLZ21},
require the database to contain some particular input-output pairs (e.g. pairs that collide), while the database property $\SUC$ additionally {\em forbids} certain input-output pairs to be contained.

Indeed, the framework from \cite{CFHL21} is almost expressive enough to treat our problem. So, after a mild extension, we can apply it to prove that it is hard for any query algorithm to cause the compressed-oracle database to have property $\SUC$. Analyzing the relevant classical statistical properties of $\SUC$ is somewhat tedious but can be done (see the proof of Lemma \ref{lem:SUC-Oc-CLbound}). 
	The resulting bound on the probability for the database to satisfy $\SUC$ then gives us a bound on the probability of the event that the prover succeeds in producing a valid proof while at the same time fooling the extractor.


Whenever it is advantageous for communication complexity, a Merkle tree can be used to collectively commit to all required messages in a \CnO protocol. 
 This collective commitment is one of the optimizations that improve the performance of, e.g. Picnic 2 \cite{KZ20} over Picnic \cite{Chase2017}. As the above-described argument for the extractability of \CnO protocols already analyses iterated hashing (the hash-based commitments are hashed to compute the challenge), it generalizes to \MCnO protocols without too much effort. We present this generalization in Section \ref{sec:MCnO}, and obtain similar bounds (see Theorem \ref{thm:extractor-M}).

\subsection{Additional Related Work} \label{sec:relatedwork}
Besides the already mentioned work above, we note that Chiesa, Manohar and Spooner~\cite{CMS19} consider and prove security of various SNARG constructions, while we consider the Fiat-Shamir \trafo of \CnO protocols with a form of special soundness. 
	Similar in to \cite{CFHL21}, they also provide some tools for deducing security of certain oracle games against quantum attacks by bounding a natural classical variant of the game.

	\section{Preliminaries}\label{sec:Prelim}
	
	Our main technical proofs reliy on the recently introduced framework by Chung, Fehr, Huang, and Liao~\cite{CFHL21} for proving query complexity bounds in the QROM. This framework exploits Zhandry's compressed-oracle technique but abstracts away all the quantum aspects, so that the reasoning becomes purely classical. We give here an introduction to a simplified, and slightly adjusted version that does not consider parallel queries. We start with recalling (a particular view on) the compressed oracle. 
	Along the way, we also give an improved version of Zhandry's central lemma for the compressed oracle. 
	
	Before getting into this, we fix the following standard notation. For any positive integer $\ell>0$, we set $[\ell]:=\{1,2,\ldots, \ell\}$, and we let $2^{[\ell]}$ denote the power set of $[\ell]$, i.e., the set of all subsets of $[\ell]$. 
	\onlyfullversion{We write $\{0,1\}^{\leq \ell}$ for the set of bit strings of size at most $\ell$, including the empty string denoted $\emptyset$; similarly for $\{0,1\}^{< \ell}$. Concatenation of two bit strings $v \in \{0,1\}^m$ and $w \in \{0,1\}^n$ is denoted by $v\|w \in \{0,1\}^{m+n}$. } 
	
	Finally, for any finite non-empty set $\cal Z$, $\C[{\cal Z}]$ denotes the Hilbert space $\C^{|{\cal Z}|}$ together with a basis $\{\ket{z}\}$ labeled by the elements $z \in \cal Z$.

	\subsection{The Compressed Oracle\,---\,Seen as Quantum Lazy Sampling}
	
	With the goal to analyze oracle algorithms that interact with a RO $H: {\cal X} \to {\cal Y}$, consider the set $\DB$ of all functions $D: {\cal X} \to {\cal Y} \cup \{\bot\}$, where $\bot$ is a special symbol. Such a function is referred to as a {\em database}. Later, we will fix ${\cal X} = \{0,1\}^{\leq B}$ and ${\cal Y} = \{0,1\}^n$. For $D \in \DB$, $x \in {\cal X}$ and $y \in {\cal Y} \cup \{\bot\}$, $D[x \!\mapsto\! y]$ denotes the database that maps $x$ to $y$ and otherwise coincides with $D$, i.e., $D[x \!\mapsto\! y](x) = y$ and $D[x \!\mapsto\! y](\bar x) = D(\bar x)$ for all $\bar x \in {\cal X} \setminus \{x\}$. 
	
	Following the exposition of~\cite{CFHL21}, the compressed-oracle technique is a quantum analogue of the classical lazy-sampling technique, commonly used to analyze algorithms in the classical ROM. In the classical lazy-sampling technique, the (simulated) RO starts off with the empty database, i.e., with $D_0 = \bfbot$, which maps any $x \in \cal X$ to $\bot$. Then, recursively, upon a query~$x$, the current database $D_i$ is updated to $D_{i+1} := D_i$ if $D_i(x) \neq \bot$, and to $D_{i+1} := D_i[x \!\mapsto\! y]$ for a randomly chosen $y \in \cal Y$ otherwise. This construction ensures that $|\{ x \,|\, D_i(x) \!\neq\! \bot \}| \leq i$; 
	after $i$ queries 
	thus, using standard sparse-encoding techniques, the database $D_i$ can be efficiently represented and updated. 
	
	In the compressed-oracle quantum analogue of this lazy-sampling technique, the (simulated) RO also starts off with the empty database, but now considered as a quantum state $\ket{\bfbot}$ in the $|\DB|$-dimensional state space $\C[\DB]$, and after $i$ queries the state of the compressed oracle is then supported by databases $\ket{D_i}$ for which $|\{ x \,|\, D_i(x) \!=\! \bot \}| \leq i$.%
	\footnote{This means that the density operator that describes the state of the compressed oracle has its support contained in the span of these $\ket{D_i}$. }
	Here, the update is given by a unitary operator $\cO$ acting on $\C[{\cal X}] \otimes \C[{\cal Y}]\otimes \C[\DB]$, i.e., on the query register, the response register, and the state of the compressed oracle. With respect to the computational basis $\{\ket{x}\}$ of $\C[{\cal X}]$  and the Fourier basis $\{\ket{\hat y}\}$ of $\C[{\cal Y}]$, $\cO$ is a {\em control} unitary, i.e., of the form $\cO = \sum_{x,\hat y} \proj{x} \otimes \proj{\hat y} \otimes \cO_{x,\hat y}$, where $\cO_{x,\hat y}$ is a unitary on $\C[{\cal Y} \cup \{\bot\}]$, which in the above expression is understood to act on the register that carries the value of the database at the point $x$. More formally, $\cO_{x,\hat y}$ acts on register $R_x$ when identifying $\C[\DB]$ with $\bigotimes_{x \in \cal X} \C[{\cal Y} \cup \{\bot\}]$ by means of the isomorphism $\ket{D} \mapsto \bigotimes_{x \in \cal X}\ket{D(x)}_{R_x}$. We refer to Lemma~4.3 in the full version of \cite{CFHL21} for the full specification of $\cO_{x,\hat y}$; it is not really relevant here. 
	
	The compressed oracle is tightly related to the {\em purified} oracle, which initiates its internal state with a uniform superposition $\sum_h \ket{H} \in \C[\DB]$ of all functions $H: {\cal X} \to {\cal Y}$, and then answers queries ``in superposition''. Indeed, at any point in time during the interaction with an oracle quantum algorithm $\cal A$, the joint state of $\cal A$ and the compressed oracle coincides with the joint state of $\cal A$ and the purified oracle after ``compressing'' the latter.%
	\footnote{The terminology is somewhat misleading here; the actual compression takes place when invoking the sparse encoding (see below). }
	Formally, identifying $\C[\DB]$ with $\bigotimes_{x \in \cal X} \C[{\cal Y} \cup \{\bot\}]$ again, the compression of the state of the purified oracle works by applying the unitary ${\sf Comp}$ to each register $R_x$, where 
	$$
	{\sf Comp}: \ket{y} \mapsto (\ket{y} +\shortorfullversion{(\ket{\bot} - \ket{\hat 0})/\sqrt{|\mathcal Y|}}{\frac{1}{\sqrt{|\mathcal Y|}}(\ket{\bot} - \ket{\hat 0})}
	$$
	for any $y \in \cal Y$, and ${\sf Comp}: \ket{\bot} \mapsto \ket{\hat 0}$. Here, $\ket{\hat 0}$ is the $\hat 0$-vector from the Fourier basis $\{\ket{\hat y}\}$ of $\C[{\cal Y}]$. 
	
	Similarly to the classical case, by exploiting a quantum version of the sparse-encoding technique, both the internal state of the compressed oracle and the evolution $\cO$ can be efficiently computed. Furthermore, for any classical function $f: \DB \to {\cal T}$ that can be efficiently computed when given the sparse representation of $D \in \DB$, the corresponding quantum measurement given by the projections $P_t = \sum_{D: f(D)=t} \proj{D}$ can be efficiently performed when given the sparse representation of the internal state of the compressed oracle. In particular, in Lemma~\ref{lemma:zha} below, the condition ${\bf y} = D({\bf x})$ for given $\bf x$ and $\bf y$ can be efficiently checked by a measurement. See Appendix~A in (the full version of) \cite{CFHL21}, or Appendix~B in \cite{DFMS21}
	for more details on this technique. 
	
	In the classical lazy-sampling technique, if at the end of the execution of an oracle algorithm $\cal A$, having made $q$ queries to the (lazy-sampled) RO, the database $D_q$ is such that, say, $D_q(x) \neq 0$ for any $x \in \cal X$, then $\cal A$'s output is unlikely to be a $0$-preimage, i.e., an $x$ that is hashed to $0$ upon one more query. $\cal A$'s best chance is to output an $x$ that he has not queried yet, and thus $D_q(x) = \bot$, and then he has a $1/|{\cal Y}|$-chance that $D_{q+1}(x) := D_{q}[x \!\mapsto\! y](x) = 0$, given that $y$ is randomly chosen. Something similar holds in the quantum setting, with some adjustments. The general statement is given by the following result by Zhandry.%
	
	\begin{lemma}[Lemma~5 in \cite{Zhandry2018}]\label{lemma:zha}
		Let $R \subseteq {\cal X}^\ell \times {\cal Y}^\ell \times {\cal Z}$ be a relation, and let $\cal A$ be an oracle quantum algorithm that outputs ${\bf x} \in {\cal X}^\ell$, ${\bf y} \in {\cal Y}^\ell$ and $z \in \cal Z$. 
		Furthermore, let 
		$$
		p = p({\cal A}) :=  \Pr[ {\bf y} \!=\! H({\bf x}) \wedge ({\bf x},{\bf y},z) \!\in\! R ] 
		$$ 
		be the considered probability when $\cal A$ has interacted with the standard RO, initialized with a uniformly random function $H$, and let 
		\begin{align*}
			p' = p'({\cal A}) &:=  \Pr[ {\bf y} \!=\! D({\bf x}) \wedge ({\bf x},{\bf y},z) \!\in\! R ]  
		\end{align*} 
		be the considered probability when $\cal A$ has interacted with the compressed oracle instead and $D$ is obtained by measuring its internal state (in the basis $\{\ket{D}\}_{D \in \DB}$). Then
		$$
		\sqrt{p} \leq \sqrt{p'} + \shortorfullversion{\sqrt{\ell/|{\cal Y}|}}{\sqrt\frac{\ell}{|{\cal Y}|} }\, . 
		$$
	\end{lemma}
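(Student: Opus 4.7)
The plan is to compare $\cal A$'s joint state with the oracle in two worlds. The standard RO can be replaced by the \emph{purified} oracle (same output statistics), whose database is initialized to $\ket{\hat 0}^{\otimes \cal X}$ and whose state always lies in $\C[\cal Y]^{\otimes\cal X}$, i.e., has no $\ket\bot$ component at any register. Under the isomorphism $\C[\DB] \cong \bigotimes_x \C[\cal Y \cup \{\bot\}]$, the final joint states $\ket\psi$ (purified) and $\ket\phi$ (compressed) satisfy $\ket\psi = U\ket\phi$ with $U := {\sf Comp}^{\otimes\cal X}$. A key structural property of $\ket\phi$ that I will exploit is that $\bra{\hat 0}_{R_x}\ket\phi = 0$ for every $x \in \cal X$; this follows by combining $\bra{\hat 0}{\sf Comp} = \bra\bot$ (from ${\sf Comp}^\dagger = {\sf Comp}$ and ${\sf Comp}\ket{\hat 0} = \ket\bot$, both of which are consequences of ${\sf Comp}^2 = \I$) with $\bra\bot_{R_x}\ket\psi = 0$.

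Next, introduce the coherent projector $\Pi$ on $\cal A$'s output registers and $R_{x_1},\ldots,R_{x_\ell}$ enforcing $(\mathbf x,\mathbf y,z) \in R$ and ``$R_{x_i}$ holds $\ket{y_i}$'' for each $i$. Since measuring the purified database in the computational basis is equivalent to sampling $H$, we get $p = \|\Pi\ket\psi\|^2 = \|\Pi U\ket\phi\|^2$, while $p' = \|\Pi\ket\phi\|^2$ (the same operator, now evaluated on the compressed state). A triangle inequality then gives
\begin{align*}
\sqrt p \;=\; \|\Pi U\ket\phi\| \;\leq\; \|\Pi\ket\phi\| + \|\Pi(U - \I)\ket\phi\| \;=\; \sqrt{p'} + \|\Pi(U - \I)\ket\phi\|,
\end{align*}
so the whole task reduces to proving the defect bound $\|\Pi(U - \I)\ket\phi\| \leq \sqrt{\ell/|\cal Y|}$.

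For this, I commute the ${\sf Comp}$ factors on registers outside $\{R_{x_1},\ldots,R_{x_\ell}\}$ through $\Pi$ (on which they act as identity) to effectively restrict $U$ to ${\sf Comp}^{\otimes\ell}$ on the $\ell$ relevant registers, controlled on $\mathbf x$. Expanding $\Pi(U - \I)$ via $\bra{y_i}{\sf Comp} = \bra{y_i} + \tfrac{1}{\sqrt{|\cal Y|}}(\bra\bot - \bra{\hat 0})$ at each register $R_{x_i}$, and subtracting the pure $\bra{y_i}$-product (which is $\Pi$ itself), the structural property $\bra{\hat 0}_{R_{x_i}}\ket\phi = 0$ kills every summand containing a $\bra{\hat 0}$-factor, leaving a sum over proper subsets $S \subsetneq [\ell]$ of $\bra\bot$-positions with prefactor $(1/\sqrt{|\cal Y|})^{|S^c|}$. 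A careful Cauchy--Schwarz over these subsets, combined with orthogonality of the underlying pattern states at the rel registers in $\ket\phi$, then delivers the claimed $\sqrt{\ell/|\cal Y|}$.

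The main obstacle I anticipate is the coherent bookkeeping when the $x_j$'s sit in superposition and may collide across branches: in branches where some $x_j = x_i$ with $j < i$, the ${\sf Comp}$ factors at the ``same'' register compose (and ${\sf Comp}^2 = \I$) and interact non-trivially with the projections and with the structural property. Handling these collisions branch-by-branch, and ensuring that the per-branch amplitude arithmetic still yields the clean $\sqrt{\ell/|\cal Y|}$ rather than the looser $\sqrt{(1+1/|\cal Y|)^\ell - 1}$-type bound one would get from a naive term-by-term triangle inequality, is the delicate step. Once that is in place, the remaining argument is essentially a direct triangle-inequality calculation.
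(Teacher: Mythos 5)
The paper does not prove this lemma itself; it is cited verbatim as Lemma~5 of Zhandry~\cite{Zhandry2018}, and the paper's own technical contribution here is the \emph{alternative} bound of Lemma~\ref{lemma:link} and Corollary~\ref{cor:link}, proved by a different, much shorter argument (conditioning on classical re-query outputs and using $|\bra{y}{\sf Comp}\ket{y}|^2 \ge 1-2/|\mathcal Y|$). So there is no in-paper proof to compare against; I will assess your attempt on its own terms.

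Your overall strategy — relate the purified and compressed final states by $\ket\psi = U\ket\phi$ with $U = {\sf Comp}^{\otimes\mathcal X}$, exploit the structural property $\bra{\hat 0}_{R_x}\ket\phi = 0$ (derived correctly), and triangle-inequalize a defect — is the right kind of argument, but there is a genuine bug in the reduction step. You claim the task reduces to bounding $\|\Pi(U-\I)\ket\phi\|$ and that commuting the ${\sf Comp}$ factors on registers outside $\{R_{x_i}\}$ ``through'' $\Pi$ restricts $U$ to $U_{\mathrm{in}} := {\sf Comp}^{\otimes\{x_1,\dots,x_\ell\}}$. That commutation gives $\Pi U = U_{\mathrm{out}}\,\Pi\,U_{\mathrm{in}}$, so $\Pi(U-\I)\ket\phi = U_{\mathrm{out}}\Pi U_{\mathrm{in}}\ket\phi - \Pi\ket\phi$, which is \emph{not} $\Pi(U_{\mathrm{in}}-\I)\ket\phi$ and is in fact large: after $q$ queries $\ket\phi$ has almost all outside registers in $\ket\bot$, and ${\sf Comp}\ket\bot = \ket{\hat 0}$, so $(U_{\mathrm{out}}-\I)$ moves the state by $\Theta(1)$ in norm. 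The fix is to use $U_{\mathrm{out}}$'s unitarity \emph{before} the triangle inequality: $\sqrt p = \|\Pi U\ket\phi\| = \|U_{\mathrm{out}}\Pi U_{\mathrm{in}}\ket\phi\| = \|\Pi U_{\mathrm{in}}\ket\phi\| \le \|\Pi\ket\phi\| + \|\Pi(U_{\mathrm{in}}-\I)\ket\phi\|$, and only then bound the last term. There is a second, milder issue: after eliminating the $\bra{\hat 0}$ contributions, the functional on the $\ell$ relevant registers becomes $\bigotimes_i\bigl(\bra{y_i}+\tfrac{1}{\sqrt{|\mathcal Y|}}\bra\bot\bigr)$, and the defect is $\bigl\|\bigl(\bigotimes_i\bra{\eta_i}-\bigotimes_i\bra{y_i}\bigr)\ket\xi\bigr\|$; since the patterns $\bigotimes_{i\in S^c}\ket{y_i}\bigotimes_{i\in S}\ket\bot$ are orthonormal, Cauchy--Schwarz gives $\sqrt{(1+1/|\mathcal Y|)^\ell-1}$, and by Bernoulli this is $\ge\sqrt{\ell/|\mathcal Y|}$, not $\le$; the ``careful Cauchy--Schwarz'' you allude to for landing exactly on $\sqrt{\ell/|\mathcal Y|}$ is not supplied and I do not see how to get it from the machinery you describe. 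Both gaps are localized and the second is at most a constant-factor slack for $\ell\le|\mathcal Y|$, but as written the proof is not complete.
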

	
	\onlyfullversion{
	\begin{remark}\label{rem:Z=0}
		This bound is particular useful in case ${\cal Z} = \emptyset$ (or $R$ does not depend on its third input~$z$), since then $p'$ is bounded by $\Pr[ \exists \, \tilde{\bf x}: (\tilde{\bf x},D(\tilde{\bf x})) \!\in\! R ]$ and the latter is determined solely by the evolution of the compressed oracle (when interacting with $\cal A$) and does not depend on the actual output of $\cal A$. 		
	\end{remark}
	}
	
	In \shortorfullversion{Appendix~\ref{app:Corollaries}}{Section~\ref{sec:ImprovedLemma}}, Corollary~\ref{cor:link}, we will give an alternative such relation between the success probability of an algorithm interacting with the actual RO, and probabilities obtained by inspecting the compressed oracle instead. Strictly speaking, the results of Lemma~\ref{lemma:zha} and Corollary~\ref{cor:link} are incomparable, but in typical applications the latter gives a significantly better bound.

	\subsection{The Quantum Transition Capacity and Its Relevance}

	The above discussion shows that, in order to bound the success probability $p$ of an oracle algorithm $\cal A$, it is sufficient to bound the probability of the database $D$, obtained by measuring the internal state of the compressed oracle after the interaction with $\cal A$, satisfying a certain property (e.g., the property of there existing an $x$ such that $D(x) = 0$). 
	
	To facilitate that latter, Chung et al.~\cite{CFHL21} introduced a framework that, in certain cases, allows to bound this alternative figure of merit by means of purely classical reasoning. We briefly recall here some of the core elements of this framework, which are relevant to us. Note that \cite{CFHL21} considers the parallel-query model, where in each of the $q$ (sequential) interactions with the RO, an oracle algorithm $\cal A$ can make $k$ queries simultaneously in parallel with each interaction. Here, we consider the (more) standard model of one query per interaction, i.e., setting $k = 1$. On the other hand, we state and prove a slight generalization of Theorem~5.16 in \cite{CFHL21} (when restricted to $k = 1$). 
	
	A subset $\P \subseteq \DB$ is called a {\em database property}. We say that $D \in \DB$ {\em satisfies} $\P$ if $D \in \P$, and the complement of $\P$ is denoted $\neg\P = \DB \setminus \P$. For such a database property $\P$, \cite{CFHL21} defines \smash{$\qQTC{\bfbot}{\P}$} as the square-root of the maximal probability of $D$ satisfying $\P$ when $D$ is obtained by measuring the internal state of the compressed oracle after the interaction with $\cal A$, maximized over all oracle quantum algorithms $\cal A$ with query complexity $q$, i.e., in short
	\begin{equation}\label{eq:QTC}
		\qQTC{\bfbot}{\P} :=  \max_{\cal A}  \sqrt{\Pr[D \in \P] } \, .
	\end{equation} 
	In the context of Lemma~\ref{lemma:zha} for the case ${\cal Z} = \emptyset$\onlyfullversion{ (see Remark~\ref{rem:Z=0})}, we can define the database property $\P^R := \{ D \!\in\! \DB \,|\, \exists\, {\bf x} \!\in\! {\cal X}^\ell \!: ({\bf x},D({\bf x})) \!\in\! R \}$ induced by~$R$, and thus bound 
	\begin{equation}\label{eq:connection}
		p'({\cal A}) \leq \Pr[ ({\bf x},D({\bf x})) \!\in\! R ] \leq  \Pr[D \in \P^R] \leq \qQTC{\bfbot}{\P^R}^2 
	\end{equation}
	for any oracle quantum algorithm $\cal A$ with query complexity $q$. 
	
	Furthermore, Lemma 5.6 in \cite{CFHL21} shows that for any target database property $\P$ and for any sequence 
	$\P_0,\P_1,\ldots,\P_q$ with $\neg \P_0 = \{\bfbot\}$ and $\P_q = \P$, 
	\begin{equation}\label{eq:CapacityChainRule}
		\qQTC{\bfbot}{\P} \leq \sum_{s=1}^q \QTC{\neg\P_{s-1}}{\P_s} \, ,
	\end{equation}
	where, for any database properties $\P$ and $\P'$, the definition of the {\em quantum transition capacity} $\QTC{\P}{\P'}$ is recalled in Definition~\ref{def:QTC}\onlyshortversion{ in Appendix~\ref{app:QTC}}. 
	
	The nice aspect of the framework of~\cite{CFHL21} is that it provides means to manipulate and bound quantum transition capacities using purely classical reasoning, i.e., without the need to understand and work with the definition. Indeed, for instance Theorem~\ref{thm:simple} below, which is a variant of Theorem~5.17 in (the full version of) \cite{CFHL21}, shows how to bound $\QTC{\P}{\P'}$ by means of a certain classical probability; furthermore, to facilitate the application of such theorems, \cite{CFHL21} showed that the quantum transition capacity satisfies several natural manipulation rules, like $\QTC{\P}{\P'} = \QTC{\P'}{\P}$ (i.e., it is symmetric), and 
	\begin{align}
	\begin{split}\label{eq:QTC_props}
		\QTC{\P \cap {\sf Q}}{\P'} &\leq \min \bigl\{ \QTC{\P}{\P'}, \QTC{ {\sf Q}}{\P'} \bigr\} \qquad \text{and} \\[1ex]
		\min \bigl\{ \QTC{\P}{\P'}, \QTC{\P}{ {\sf Q}'} \bigr\}  &\leq \QTC{\P}{\P' \cup {\sf Q}'} \leq \QTC{\P}{\P'} + \QTC{\P}{{\sf Q}'} \, ,
		\end{split}
	\end{align}
	which allow to decompose complicated capacities into simpler ones. 
	Therefore, by means of the above series of inequalities with $p$ from Lemma~\ref{lemma:zha} on the left hand side, it is possible (in certain cases) to bound the success probability of any oracle quantum algorithm $\cal A$ in the QROM by means of the following recipe: (1) Choose suitable transitions $\P_{s-1} \to \P_s$, (2) decompose the capacities $\QTC{\neg\P_{s-1}}{\P_s}$ into simpler ones using manipulation rules as above, and (3) bound the simplified capacities by certain classical probabilities, exploiting results like Theorem~\ref{thm:simple}. We will closely follow this recipe. 
	
	In order to state and later use Theorem~\ref{thm:simple}, we need to introduce the following additional concepts. As explained above, there is no need to actually spell out the definition of the quantum transition capacity in order to use Theorem~\ref{thm:simple}; for completeness, and since it is needed for the proof of Theorem~\ref{thm:simple}, we do provide it \shortorfullversion{in Appendix~\ref{app:QTC} (where we also give the proof of Theorem~\ref{thm:simple})}{below}. 
	
	
	For any database $D \in \DB$ and any $x \in \cal X$,\shortorfullversion{
	$
	D|^x := \{ D[x \!\mapsto\! y] \mid y \in {\cal Y} \cup \{\bot\}  \} 
	$}{
	$$
	D|^x := \{ D[x \!\mapsto\! y] \mid y \in {\cal Y} \cup \{\bot\}  \} 
	$$
	}
	denotes the set of all databases that coincide with $D$ outside of $x$. 
	Furthermore, for a database property~$\P$, 
	$$
	\P|_{D|^x} := \{ y \in {\cal Y} \cup \{\bot\} \mid D[x \!\mapsto\! y] \in  \P \} \subseteq {\cal Y} \cup \{\bot\}
	$$
	denotes the set of values $y$ for which $D[x \!\mapsto\! y]$ satisfies $\P$. 
	Following the convention used in \cite{CFHL21}, we identify the subset $\P|_{D|^x} \subseteq {\cal Y} \cup \{\bot\}$ with the projector $\P|_{D|^x}  = \sum_y \proj{y}$ acting on $\C[{\cal Y} \cup \{\bot\}]$, where the sum is over all $y \in \P|_{D|^x}$. 
	
	\begin{definition}[Def.~5.5 of~\cite{CFHL21}{\rm, case $k=1$}]\label{def:QTC}
		Let $\P, \P'$ be two database properties. Then, the {\em quantum transition capacity} (of order $1$) is defined as
		$$
		\QTC{\P}{\P'} := \max_{\bfx,\hat\bfy,D} \|\P'|_{D|^\bfx} \,\cO_{\bfx, \hat\bfy} \, \P|_{D|^\bfx}\| 
		$$
		where the max is over all $\bfx \in {\cal X}^k$, $\hat\bfy \in \hat{\cal Y}^k$, and $D \in \DB$. 
	\end{definition}
	The following is a variation of Theorem~5.17 in (the full version of) \cite{CFHL21}, obtained by restricting $k$ to $1$. On the other hand, we exploit and include some symmetry that is not explicit in the original statement. The proof\onlyshortversion{, given in Appendix~\ref{app:QTC},} is a small adjustment to the original proof. 
	
	\begin{theorem}\label{thm:simple}
		Let $\P$ and $\P'$ be database properties with trivial intersection, i.e., $\P \cap \P' = \emptyset$, and for every $D \in \DB$ and $x \in {\cal X}$ let
		$$
		\L^{x,D} := \left\{\begin{array}{ll}
			\P|_{D|^x} & \text{if $\bot \in \P'|_{D|^x}$} \\[1ex]
			\P'|_{D|^x} & \text{if $\bot \in \P|_{D|^x}$} \; ,
		\end{array}\right. 
		$$
		with $\L^{x,D}$ being either of the two if $\bot \not\in \P|_{D|^x} \cup \P'|_{D|^x}$.%
		\footnote{By the disjointness requirement, $\bot$ cannot be contained in both. }
		Then
		$$
		\QTC{\P}{\P'} \leq \max_{x,D}\sqrt{10 P\bigl[U \!\in\! \L^{x,D} \bigr]} \, ,
		$$
		where $U$ is uniform over $\cal Y$, and the maximization can be restricted to $D \in \DB$ and $x \in \cal X$ for which both $\P|_{D|^x}$ and $\P'|_{D|^x}$ are non-empty.  
	\end{theorem}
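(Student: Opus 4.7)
The plan is to reduce the capacity bound to a pointwise analysis of the local update operator $\cO_{x,\hat y}$ acting on the single register $R_x \cong \C[\mathcal Y \cup \{\bot\}]$. By Definition~\ref{def:QTC}, it suffices to show that for every $x$, $\hat y$ and $D$ for which $\P|_{D|^x}$ and $\P'|_{D|^x}$ are both non-empty,
$$
\bigl\| \P'|_{D|^x} \, \cO_{x,\hat y} \, \P|_{D|^x} \bigr\| \leq \sqrt{10 \, \Pr[U \in \L^{x,D}]} \, .
$$
The disjointness assumption $\P \cap \P' = \emptyset$ carries over pointwise: $\P|_{D|^x}$ and $\P'|_{D|^x}$ are disjoint subsets of $\mathcal Y \cup \{\bot\}$, and in particular $\bot$ belongs to at most one of them. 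This is exactly what makes the two-case definition of $\L^{x,D}$ well-posed.

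My next step would be to expand $\cO_{x,\hat y}$ using its factorisation $\mathsf{Comp}^\dagger V_{\hat y} \mathsf{Comp}$, where $\mathsf{Comp}$ is the compression unitary recalled before Lemma~\ref{lemma:zha} and $V_{\hat y}$ is diagonal in the computational basis, applying a phase to each $\ket y$ with $y \in \mathcal Y$ and stabilising $\ket\bot$. A direct calculation using the explicit formulas for $\mathsf{Comp}$ then yields all matrix elements of $\cO_{x,\hat y}$ in the basis $\{\ket z : z \in \mathcal Y\} \cup \{\ket\bot\}$: the diagonal entries have magnitude $1 - O(1/|\mathcal Y|)$, the off-diagonal entries coupling $\ket\bot$ to any $\ket y$ have magnitude $\Theta(1/\sqrt{|\mathcal Y|})$, and the off-diagonal entries between distinct $\ket y, \ket{y'}$ have magnitude $O(1/|\mathcal Y|)$.

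With these estimates in hand the remainder of the argument is bookkeeping. Writing $\cO_{x,\hat y}$ as a diagonal part plus an off-diagonal correction $M$, and observing that a pair of disjoint subset-projectors annihilates any diagonal operator sandwiched between them, only $M$ contributes to the norm of interest. In the sub-case $\bot \in \P|_{D|^x}$, one has $\L^{x,D} = \P'|_{D|^x} \subseteq \mathcal Y$, and the surviving sub-matrix of $M$ is dominated by the $|\L^{x,D}|$ entries of the $\bot$-column, each of squared magnitude $O(1/|\mathcal Y|)$, giving a squared operator norm of at most $c \, |\L^{x,D}|/|\mathcal Y| = c \, \Pr[U \in \L^{x,D}]$. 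The sub-case $\bot \in \P'|_{D|^x}$ is handled symmetrically via $\|M\| = \|M^\dagger\|$, combined with the symmetry $\QTC{\P}{\P'} = \QTC{\P'}{\P}$. In the remaining sub-case $\bot \notin \P|_{D|^x} \cup \P'|_{D|^x}$, only the $\mathcal Y \times \mathcal Y$ off-diagonal entries contribute, and a Hilbert--Schmidt bound yields a squared operator norm of order at most $|\L^{x,D}|/|\mathcal Y|$ for either choice of $\L^{x,D}$, since the omitted factor is always at most~$1$.

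The main obstacle is pinning down the numerical constant: the cross-terms produced when expanding $\mathsf{Comp}^\dagger V_{\hat y} \mathsf{Comp}$ must be tallied carefully to verify that $c \leq 10$ suffices uniformly over all three sub-cases. This is essentially the same calculation as in the proof of Theorem~5.17 of~\cite{CFHL21}, so the only new ingredient here is the symmetric formulation: the symmetry $\QTC{\P}{\P'} = \QTC{\P'}{\P}$ lets us always place $\L^{x,D}$ on whichever side of the disjoint pair is more convenient in applications.
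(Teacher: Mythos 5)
Your proposal is fundamentally the same route as the paper's: reduce to the local operator-norm bound from Definition~\ref{def:QTC}, exploit a symmetry (your $\QTC{\P}{\P'} = \QTC{\P'}{\P}$, or equivalently the paper's pointwise identity $\|\P'|_{D|^x}\,\cO_{x,\hat y}\,\P|_{D|^x}\| = \|\P|_{D|^x}\,\cO_{x,-\hat y}\,\P'|_{D|^x}\|$) to settle into the sub-case where $\bot \notin \L^{x,D}$, and defer the constant $10$ to the calculation behind Theorem~5.17 of \cite{CFHL21}. The one move you miss, which the paper uses to make that deferral clean, is the projector relaxation $\P|_{D|^x} \le \I - \L^{x,D}$, valid by disjointness: this turns the quantity of interest into \emph{exactly} the CFHL21 quantity $\|\L^{x,D}\,\cO_{x,\hat y}\,(\I - \L^{x,D})\|$, so Theorem~5.17's bound applies verbatim. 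By instead re-deriving matrix-element estimates for $\cO_{x,\hat y} = \mathsf{Comp}^\dagger V_{\hat y}\,\mathsf{Comp}$ you end up having to argue that the $\bot$-column ``dominates,'' which as written is not justified (the remaining $\mathcal{Y}\times\mathcal{Y}$ block can contribute a comparable Hilbert--Schmidt mass); the relaxation step sidesteps this entirely. So: same essential proof, but the projector inequality is what makes the reduction to \cite{CFHL21} a one-liner rather than a re-derivation you cannot fully close yourself.
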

	
	\begin{remark}\label{rem:IndOfD(x)}
		Both, $\P|_{D|^x}$ and $\P'|_{D|^x}$, and thus also $\L^{x,D}$, do not depend on the value of $D(x)$, only on the values of $D$ outside of $x$. 
	\end{remark}
	
	\begin{proof}
		For any $D \in \DB$ and $x \in \cal X$, we observe that 
		$$
		\|\P'|_{D|^x} \,\cO_{x, \hat y} \, \P|_{D|^x}\| = \| \P|_{D|^x} \,\cO_{x , -\hat y} \, \P'|_{D|^x}\| \, ,
		$$ 
		and so it is sufficient to argue for the case when $\L^{x,D}$ is set to $\P'|_{D|^x}$. By the disjointness requirement, as subsets of ${\cal Y} \cup \{\bot\}$, the complement of $\L^{x,D} = \P'|_{D|^x}$ is a superset of $\P|_{D|^x}$. Thus, as projections acting on $\C[{\cal Y} \cup \{\bot\}]$, $\P|_{D|^x} \leq \I - \L^{x,D}$. Therefore, the above norm is upper bounded by $\|\L^{x,D} \,\cO_{x , y} \, (\I - \L^{x,D})\|$. Given that $\bot \not\in \L^{x,D}$, the square norm $\|\L^{x,D} \,\cO_{x , \hat y} \, (\I - \L^{x,D})\|^2$ can be upper bounded exactly as in the proof of Theorem~5.17 in \cite{CFHL21} by $10 P\bigl[U \!\in\! \L^{x,D} \bigr]$, giving the claimed bound. 
		\qed
	\end{proof}

	\subsection{An Improved Variant of Zhandry's Lemma}\label{sec:ImprovedLemma}
	
	We show here an alternative to Zhandry's lemma (Lemma~\ref{lemma:zha}), which offers a better bound in typical applications. To start with, note that Lemma~\ref{lemma:zha} considers an algorithm $\cal A$ that not only outputs ${\bf x} = (x_1,\ldots,x_\ell)$ but also ${\bf y} = (y_1,\ldots,y_\ell)$, where the latter is supposed to be the point-wise hash of~$\bf x$; indeed, this is what is being checked in the definition of the probability $p$, along with $({\bf x},{\bf y},z) \in R$. This requirement is somewhat unnatural, in that an algorithm $\cal A$ for, say, finding a collision, i.e., $x_1 \neq x_2$ with $H(x_1) = H(x_2)$, does {\em not} necessarily output the (supposed to be equal) hashes $y_1 = H(x_1)$ and $y_2 = H(x_2)$. Of course, this is no problem since one can easily transform such an algorithm $\cal A$ that does not output the hashes into one that does, simply by making a few more (classical) queries to the RO at the end of the execution, and then one can apply Lemma~\ref{lemma:zha} to this tweaked algorithm $\tilde{\cal A}$. 
	
	We show below that if we anyway consider this tweaked algorithm $\tilde{\cal A}$, which is {\em promised} to query the RO to obtain and then output the hashes of ${\bf x} = (x_1,\ldots,x_\ell)$, then we can actually improve the bound and avoid the square-roots in Lemma~\ref{lemma:zha}. On top, the proof is much simpler than Zhandry's proof for his lemma. 
	\shortorfullversion{}{
	
}
	At the core is the following lemma; Corollary~\ref{cor:link} below then puts it in a form that is comparable to Lemma~\ref{lemma:zha} and shows the improvement.

	\begin{lemma}\label{lemma:link} 
		Let $\mathcal A$ be an oracle quantum algorithm that outputs ${\bf x} = (x_1,...,x_\ell) \in {\mathcal X}^\ell$ and $z \in \cal Z$. Let $\tilde{\mathcal A}$ be the oracle quantum algorithm that runs $\mathcal A$, makes $\ell$ classical queries on the outputs $x_i$ to obtain ${\bf y} = H(\bf{ x})$, and then outputs  $({\bf x}, {\bf y},z)$. When $\tilde{\mathcal A}$ interacts with the compressed oracle instead, and at the end $D$ is obtained by measuring the internal state of the compressed oracle, then, conditioned on $\tilde{\mathcal A}$'s output $({\bf x}, {\bf y},z)$, 
		\begin{equation*}
			\Pr[ {\bf y} \!=\! D({\bf x}) | ({\bf x}, {\bf y},z)]\ge 1 - \frac{2\ell}{|\mathcal Y|} \, .
		\end{equation*}
	\end{lemma}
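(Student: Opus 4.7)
The plan is to work in the \emph{purified oracle} picture, in which the database is initialized in a uniform superposition over all functions $H:\mathcal X \to \mathcal Y$ and $\cO$ simply XOR-copies $D(x)$ into the response register, with the compressed oracle obtained by applying the single-register isometry ${\sf Comp}$ to each position $R_x$. The analysis then reduces to computing overlaps of explicit pure states, once one recognises that the $\ell$ classical queries of $\tilde{\mathcal A}$ act as projective computational-basis measurements on the database registers.

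First, I would observe that in the purified picture a classical query on $x$---prepare $\ket 0$ in the response register, apply $\cO$, measure the response register---is equivalent to measuring the register $R_x$ of the database in the computational basis, because $\cO$ there acts as a CNOT copying $R_x$ into the response register. Writing $S := \{x_1,\ldots,x_\ell\}$ and $y_x$ for the common value of $y_i$ over all $i$ with $x_i = x$ (well-defined since repeated classical queries on the same point return the same value in the purified picture), it follows that after the $\ell$ classical queries each register $R_x$ with $x \in S$ is in the pure state $\ket{y_x}$ and is in product with every other register. This pure-state factor is unaffected by the subsequent conditioning on the (separately measured) output $(\mathbf x, z)$ of $\mathcal A$.

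Translating back to the compressed picture via ${\sf Comp}$, the register $R_x$ is in the pure state ${\sf Comp}\ket{y_x} = \ket{y_x} + (\ket{\bot} - \ket{\hat 0})/\sqrt{|\mathcal Y|}$, still in product with everything else. Measuring $R_x$ in the computational basis $\{\ket{y}\}_{y \in \mathcal Y \cup \{\bot\}}$ then yields outcome $y_x$ with probability
\begin{equation*}
|\bra{y_x}{\sf Comp}\ket{y_x}|^2 = \Bigl(1 - \frac{1}{|\mathcal Y|}\Bigr)^2 \geq 1 - \frac{2}{|\mathcal Y|},
\end{equation*}
using $\langle y_x|\hat 0\rangle = 1/\sqrt{|\mathcal Y|}$. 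By the product structure, the measurements at distinct $x \in S$ are independent, so Bernoulli's inequality together with $|S| \leq \ell$ gives the claimed bound $\Pr[\mathbf y = D(\mathbf x) \mid (\mathbf x, \mathbf y, z)] \geq 1 - 2\ell/|\mathcal Y|$.

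The main care to take will be in the first step: verifying that the collapse induced by each classical query really is a pure-state collapse of the single register $R_{x_i}$ that survives through the remaining classical queries and through the final conditioning on $(\mathbf x, \mathbf y, z)$. This is essentially automatic because the classical queries are the last operations of $\tilde{\mathcal A}$ and each acts as a projective measurement on a single, distinct database register, so no further adversarial interaction can re-entangle these registers with anything else.
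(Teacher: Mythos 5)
Your proposal is correct and follows essentially the same route as the paper's proof: work in the purified picture, observe that conditioning on $\tilde{\mathcal A}$'s output leaves each register $R_{x_i}$ in the pure state $\ket{y_i}$, then compute $|\bra{y_i}{\sf Comp}\ket{y_i}|^2 = (1-1/|\mathcal Y|)^2 \geq 1-2/|\mathcal Y|$ and combine over $i$. The only cosmetic difference is in the final step, where the paper applies a union bound (which does not even require the product structure you carefully establish) while you invoke independence and Bernoulli's inequality; both yield the same bound.
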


	\begin{proof}
		Consider first $\tilde{\mathcal A}$ interacting with the {\em purified} (yet uncompressed) oracle. Conditioned on $\tilde{\mathcal A}$'s output $({\bf x}, {\bf y},z)$, the state of the oracle is then supported by $\ket{H}$ with $H(x_i) = y_i$ for all $i \in \{1,\ldots,\ell\}$, i.e., the registers labeled by $x_1,...,x_\ell$ are in state $\ket{y_1}\cdots\ket{y_\ell}$. Given that the compressed oracle is obtained by applying ${\sf Comp}$ to all the registers, we thus have that
		\begin{align*}
			\Pr[y_i\!=\!y_i' | ({\bf x}, {\bf y},z)] &= \big|\bra{y_i} {\sf Comp} \ket{y_i}\big|^2 = \Big|\bra{y_i} \Bigl(\ket{y_i} + \textstyle\frac{1}{\sqrt{|\mathcal Y|}}(\ket{\bot} - \ket{\hat 0})\Big)\Big|^2 \\
			&= \Big| 1 - {\textstyle\frac{1}{\sqrt{|\mathcal Y|}}}\braket{y_i}{\hat 0} \Big|^2 = \Big| 1 - {\textstyle\frac{1}{|\mathcal Y|}} \Big|^2 \geq 1 - \frac{2}{|\mathcal Y|} \, .
		\end{align*}
		Applying union bound concludes the claim. 
		\qed
	\end{proof}
\onlyshortversion{	In Appendix~\ref{app:Corollaries}, we show a couple of corollaries of Lemma~\ref{lemma:link}, one where $\tilde{\mathcal A}$ may make a more involved computation on $\bf x$, possibly calling $H$ adaptively, and one is put in a form that can be nicely compared with Lemma~\ref{lemma:zha}, understanding that typically Lemma~\ref{lemma:zha} is applied to~$\tilde{\mathcal A}$. }
	The following generalization of Lemma~\ref{lemma:link} follows immediately by enhancing $\cal A$ so that it computes and outputs all the values $x$ that need to be queried in order to compute ${\cal F}^H(z)$, and then apply Lemma~\ref{lemma:link} above. 
	
	\begin{corollary}\label{cor:linkgen} 
		Let $\mathcal A$ be an oracle quantum algorithm that produces an arbitrary output $z \in \cal Z$, and let $\cal F$ be an arbitrary {\em classical} $\ell$-query oracle algorithm. Let $\tilde{\mathcal A}:= {\cal F} \circ {\cal A}$ be the oracle quantum algorithm that first runs $\mathcal A$ to obtain $z$, then $\cal F$ to obtain $y := {\cal F}^H(z)$, and finally outputs $(y,z)$. When $\tilde{\mathcal A}$ interacts with the compressed oracle instead, and at the end $D$ is obtained by measuring the internal state of the compressed oracle, then, conditioned on $\tilde{\mathcal A}$'s output $(y,z)$, 
		\begin{equation*}
			\Pr[ y \!=\! {\cal F}^D(z) | (y,z)]\ge 1 - \frac{2\ell}{|\mathcal Y|}.
		\end{equation*}
	\end{corollary}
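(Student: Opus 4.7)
The plan is to reduce Corollary~\ref{cor:linkgen} directly to Lemma~\ref{lemma:link} by absorbing the classical queries of $\mathcal{F}$ into the algorithm, exactly along the lines sketched in the paragraph preceding the corollary. Concretely, I would define an intermediate oracle quantum algorithm $\mathcal{A}'$ that first runs $\mathcal{A}$ to obtain $z$ and then simulates the classical $\ell$-query algorithm $\mathcal{F}$ on input $z$, forwarding each of its (possibly adaptive) queries to the (compressed) oracle and recording the queried points ${\bf x}=(x_1,\ldots,x_\ell)$; the output of $\mathcal{A}'$ is $({\bf x},z)$.

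Next I would form the tweaked algorithm $\tilde{\mathcal{A}'}$ as prescribed by Lemma~\ref{lemma:link}: run $\mathcal{A}'$, then issue $\ell$ further classical queries on the recorded outputs ${\bf x}$ to obtain ${\bf y}=H({\bf x})$, and output $({\bf x},{\bf y},z)$. The key observation is that these extra classical queries are redundant, since the registers $R_{x_1},\ldots,R_{x_\ell}$ of the compressed oracle are already in computational-basis states after $\mathcal{F}$'s simulation inside $\mathcal{A}'$, so the re-queries simply return the responses that $\mathcal{F}$ already saw. In particular, the output $y=\mathcal{F}^H(z)$ of the original $\tilde{\mathcal{A}}$ from the corollary is a deterministic function of $({\bf x},{\bf y},z)$, and the joint distribution of these classical outputs together with the measured database $D$ coincides with the one produced by $\tilde{\mathcal{A}}$ (augmented to also output ${\bf x}$).

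From here, I would apply Lemma~\ref{lemma:link} to $\mathcal{A}'$ to obtain, conditioned on $({\bf x},{\bf y},z)$, that $\Pr[{\bf y}=D({\bf x})]\ge 1-2\ell/|\mathcal Y|$. To translate this back into the target statement, I would use that whenever $D$ agrees with $H$ on the points ${\bf x}$, the execution of $\mathcal{F}^D(z)$ follows the exact same adaptive trajectory as $\mathcal{F}^H(z)$ query-by-query: the $i$-th query is a deterministic function of $z$ together with the first $i-1$ responses, which coincide under $D$ and $H$ by hypothesis. Hence ${\bf y}=D({\bf x})$ implies $\mathcal{F}^D(z)=\mathcal{F}^H(z)=y$, so the probability lower bound transfers to $\Pr[y=\mathcal{F}^D(z)\mid({\bf x},{\bf y},z)]$, and averaging this conditional probability over ${\bf x},{\bf y}$ given only $(y,z)$ gives the desired $\Pr[y=\mathcal{F}^D(z)\mid(y,z)]\ge 1-2\ell/|\mathcal Y|$.

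The main (and only) point that needs a little care is the interplay between the adaptivity of $\mathcal{F}$ and the classical-query mechanics of the compressed oracle: one needs that issuing classical queries on points that were already classically queried inside $\mathcal{F}$'s simulation does not disturb the compressed-oracle state in a way that invalidates Lemma~\ref{lemma:link}, which follows because a classical query on an already-classically-queried point acts deterministically and leaves the internal state on that register unchanged. Everything else is straightforward bookkeeping.
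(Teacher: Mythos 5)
Your proposal is correct and takes essentially the same route as the paper: the paper's proof is a one-liner stating that the corollary "follows immediately by enhancing $\cal A$ so that it computes and outputs all the values $x$ that need to be queried in order to compute ${\cal F}^H(z)$, and then apply Lemma~\ref{lemma:link}." You have simply spelled out the implicit steps — that the enhanced algorithm must actually run $\cal F$ to determine the adaptive query points, that the extra classical re-queries required by Lemma~\ref{lemma:link} are idempotent, that ${\bf y}=D({\bf x})$ forces $\mathcal{F}^D(z)$ to follow the same adaptive trajectory and hence equal $y$, and the final averaging over $({\bf x},{\bf y})$ given $(y,z)$ — all of which are correct and are exactly what makes the paper's terse argument go through.
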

	The following corollary of  Lemma~\ref{lemma:link} is put in a form that can be nicely compared with Lemma~\ref{lemma:zha}, understanding that typically Lemma~\ref{lemma:zha} is applied to~$\tilde{\mathcal A}$. 
	
	\begin{corollary}\label{cor:link}
		Let $R \subseteq {\cal X}^\ell \times {\cal Y}^\ell \times {\cal Z}$ be a relation. 
		Let $\cal A$ be an oracle quantum algorithm that outputs ${\bf x} \in {\cal X}^\ell$ and $z \in \cal Z$, and let $\tilde{\mathcal A}$ be as in Lemma~\ref{lemma:link}. Let 
		$$
		p_\circ({\cal A}) := \Pr[ ({\bf x},H({\bf x}),z) \in R ]
		$$ 
		be the considered probability when $\cal A$ has interacted with the RO. Furthermore, let $p(\tilde{\cal A})$, $p'(\tilde{\cal A})$ and $p''(\tilde{\cal A})$ be defined as in Lemma~\ref{lemma:zha} (but now for~$\tilde{\cal A}$). Then 
		$$
		p_\circ({\cal A}) = p(\tilde{\cal A}) \leq p'(\tilde{\cal A}) + \frac{2\ell}{|\mathcal Y|} \, . 
		$$
	\end{corollary}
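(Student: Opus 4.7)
The equality $p_\circ({\cal A}) = p(\tilde{\cal A})$ should come essentially for free. Indeed, by construction $\tilde{\mathcal A}$ runs $\mathcal A$ to obtain $({\bf x},z)$, then makes $\ell$ classical RO queries to set ${\bf y} = H({\bf x})$, and finally outputs $({\bf x}, {\bf y}, z)$. So when $\tilde{\mathcal A}$ interacts with the actual RO, the constraint ${\bf y} = H({\bf x})$ in the definition of $p(\tilde{\mathcal A})$ is automatically satisfied and the remaining constraint $({\bf x}, {\bf y}, z) \in R$ is exactly the event defining $p_\circ({\mathcal A})$.

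For the inequality $p(\tilde{\mathcal A}) \leq p'(\tilde{\mathcal A}) + 2\ell/|\mathcal Y|$, my plan is to switch from the real RO to the compressed oracle and then invoke Lemma~\ref{lemma:link}. The compressed oracle is a perfect simulation of the real RO from the point of view of the algorithm (it is obtained from the purified oracle by applying a unitary to registers the algorithm never touches). In particular, the joint distribution of $\tilde{\mathcal A}$'s classical output $({\bf x}, {\bf y}, z)$ is identical in both settings, so $\Pr[({\bf x}, {\bf y}, z) \in R]$ in the compressed-oracle experiment equals $p(\tilde{\mathcal A})$.

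Now, in the compressed-oracle experiment, additionally measuring the internal database yields $D$. Writing
\begin{align*}
p(\tilde{\mathcal A}) &= \Pr[({\bf x}, {\bf y}, z) \in R \wedge {\bf y} = D({\bf x})] + \Pr[({\bf x}, {\bf y}, z) \in R \wedge {\bf y} \neq D({\bf x})] \\
&\leq p'(\tilde{\mathcal A}) + \Pr[{\bf y} \neq D({\bf x})]
\end{align*}
reduces the task to bounding the second term. This is exactly what Lemma~\ref{lemma:link} provides: conditioned on any output $({\bf x}, {\bf y}, z)$, the probability that ${\bf y} \neq D({\bf x})$ is at most $2\ell/|\mathcal Y|$, and averaging over outputs preserves this bound. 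Combining everything yields the claim.

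I do not expect any real obstacle here; the statement is essentially a repackaging of Lemma~\ref{lemma:link}, modulo the transparent observation that $\tilde{\mathcal A}$ behaves the same against the real RO and the compressed oracle and that for $\tilde{\mathcal A}$ the constraint ``${\bf y} = H({\bf x})$'' is vacuous in the RO world. The only mild subtlety to be careful about is making sure the conditional bound from Lemma~\ref{lemma:link} is applied before averaging, so that no extra factor appears.
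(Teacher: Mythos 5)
Your proof is correct and follows essentially the same route as the paper's: both reduce the claim to Lemma~\ref{lemma:link} by noting that the compressed oracle perfectly simulates the RO and that $\tilde{\cal A}$ always satisfies ${\bf y}=H({\bf x})$, so $p(\tilde{\cal A})=\Pr[({\bf x},{\bf y},z)\in R]$ in either experiment. The paper phrases the final step as $p'(\tilde{\cal A}) = \Pr[{\bf y}=D({\bf x})\mid ({\bf x},{\bf y},z)\in R]\cdot\Pr[({\bf x},{\bf y},z)\in R]\geq(1-2\ell/|{\cal Y}|)\,p(\tilde{\cal A})$, while you split $p(\tilde{\cal A})$ into the two cases ${\bf y}=D({\bf x})$ and ${\bf y}\neq D({\bf x})$; these are the same calculation rearranged, and your remark about averaging the conditional bound before concluding is exactly the care the paper also takes.
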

	For convenience, we recall that 
	\begin{align*}
		p'(\tilde{\cal A}) &=  \Pr[ {\bf y} \!=\! D({\bf x}) \wedge ({\bf x},{\bf y},z) \!\in\! R ] \leq \Pr[ ({\bf x},D({\bf x}),z) \!\in\! R ] \, . 
	\end{align*} 
	
	\begin{proof}
		The equality holds by construction of $\tilde{\cal A}$. For the first inequality, we observe that
		\begin{align*}
			p'(\tilde{\cal A}) &
			=   \Pr[{\bf y} \!=\! D({\bf x}) | ({\bf x},{\bf y},z) \!\in\! R] \Pr[({\bf x},{\bf y},z) \!\in\! R ] \\
			&\geq \big(1- \textstyle\frac{2\ell}{|\mathcal Y|}  \big) \Pr[({\bf x},{\bf y},z) \!\in\! R ] 
			\geq \big(1- \textstyle\frac{2\ell}{|\mathcal Y|}  \big) p(\tilde{\cal A})  
			\geq  p(\tilde{\cal A})  - \textstyle\frac{2\ell}{|\mathcal Y|} \, ,
		\end{align*}
		where the first inequality is by Lemma~\ref{lemma:link}. The second and last inequality in the statement holds trivially by definition of $p'$. 
		\qed
	\end{proof}

	
	\section{Some Background on (Non-)Interactive Proofs}
	
	Throughout this and later sections, we consider a hash function $H: {\cal X} \to {\cal Y}$, to be modeled as a RO then. 
	For concreteness and simplicity, we assume that all relevant variables are encoded as bit strings, and that we can therefore choose $H: \{0,1\}^{\leq B} \to \{0,1\}^n$ for sufficiently large $B$ and $n$.%
	\footnote{$B$ and $n$ may depend on the security parameter $\lambda \in \N$. We will then assume that $B$ and $n$ can be computed from $\lambda$ in polynomial time (in $\lambda$). } 
	
		Let $\{\INST_\lambda\}_{\lambda \in \N}$ and $\{{\cal W}_\lambda\}_{\lambda \in \N}$ be two families of sets, with the members being labeled by the security parameter $\lambda \in \N$. Let $R_\lambda \subseteq \INST_\lambda \times {\cal W}_\lambda$ be a relation that is polynomial-time computable in $\lambda$. $w \in {\cal W}_\lambda$ is called a {\em witness} for $\instance \in \INST_\lambda$ if $R_\lambda(\instance,w)$, and $L_\lambda := \{\instance \in \INST_\lambda \mid \exists \, w \in {\cal W}_\lambda : R_\lambda(\instance,w) \}$. 
		
	Below, we recall some concepts in the context of interactive and non-interactive proofs for such families $\{R_\lambda\}_{\lambda \in \N}$ of relations. We start by discussing the aspired security definition for non-interactive proofs.

	\subsection{Non-interactive Proofs and Online Extractability}
	
	An {\em non-interactive proof in the random-oracle model} for a family $\{R_\lambda\}_{\lambda \in \N}$ of relations consists of a pair $({\cal P},{\cal V})$ of oracle algorithms, referred to as {\em prover} and {\em verifier}, both making queries to the RO $H: {\cal X} \to {\cal Y}$. The prover $\cal P$ takes as input $\lambda \in \N$ and an instance $\instance \in L_\lambda$ and outputs a {\em proof} $\pi \in \Pi_\lambda$, and ${\cal V}$ takes as input $\lambda \in \N$ and a pair $(\instance,\pi) \in \INST_\lambda \times \Pi_\lambda$ and outputs a Boolean value, $0$ or $1$, or $\tt accept$ or $\tt reject$. The verifier ${\cal V}$ is required to run in time polynomial in $\lambda$, while, {\em per-se}, $\cal P$ may have unbounded running time.%
	\footnote{Alternatively, one may consider a witness $w$ for $\instance$ to be given as additional input to $\cal P$, and then ask $\cal P$ to be polynomial-time as well.} 
	
	\shortorfullversion{
	By default, we require correctness and soundness, i.e., that for any $\lambda \in \N$ and any $\instance \in L_\lambda$ the probability 
	$
	\Pr\bigl[{\cal V}^H(\lambda,\instance,\pi) : \pi \leftarrow {\cal P}^H(\lambda,\instance)\bigr] 
	$
	is close to $1$, while for any $\lambda \in \N$ and any oracle quantum algorithm ${\cal P}^*$ with bounded query complexity the probability 
	$
	\Pr\bigl[  \instance \not\in L_\lambda  \, \wedge \, {\cal V}^H(\lambda,\instance,\pi) : (\instance, \pi) \leftarrow {{\cal P}^*}{}^H(\lambda) \bigr] 
	$
	is close to vanishing. 
	}
	{	By default, we require correctness and soundness, i.e., that for any $\lambda \in \N$ and any $\instance \in L_\lambda$
	$$
	\Pr\bigl[{\cal V}^H(\lambda,\instance,\pi) : \pi \leftarrow {\cal P}^H(\lambda,\instance)\bigr] \geq 1-\varepsilon_\text{\rm cor}(\lambda) , 
	$$
	while for any $\lambda \in \N$ and any oracle quantum algorithm ${\cal P}^*$ (a {\em dishonest prover}) with query complexity $q$
	$$
	\Pr\bigl[  \instance \not\in L_\lambda  \, \wedge \, {\cal V}^H(\lambda,\instance,\pi) : (\instance, \pi) \leftarrow {{\cal P}^*}{}^H(\lambda) \bigr] \leq \varepsilon_\text{\rm snd}(\lambda,q,n)
	$$
	for certain $\varepsilon_\text{\rm cor}$ and $\varepsilon_\text{\rm snd}$, respectively referred to as {\em correctness error} and {\em soundness error}. }
	The fact that the instance $\instance$, for which ${\cal P}^*$ tries to forge a proof, is not given as input to ${\cal P}^*$ but is instead chosen by ${\cal P}^*$ is referred to as ${\cal P}^*$ being {\em adaptive}. 
	
	We now move towards defining {\em online extractability} (for adaptive ${\cal P}^*$). For that purpose, let ${\cal P}^*$ be a dishonest prover as above, except that it potentially outputs some additional auxiliary (possibly quantum) output $Z$ next to $(\instance, \pi)$. 
	We then consider an interactive algorithm $\cal E$, called {\em online extractor}, which takes $\lambda \in \N$ 
	as input and simulates the answers to the oracle queries in the execution of ${\cal V}^H \circ {\cal P}^*{}^H(\lambda)$, which we define to run  $(\instance, \pi,Z) \leftarrow {{\cal P}^*}{}^H(\lambda)$ followed by $v \leftarrow {\cal V}^H(\lambda,\instance,\pi)$; furthermore, at the end, $\cal E$ outputs $w \in {\cal W}_\lambda$. 
	We denote the execution of ${\cal V}^H \circ {\cal P}^*{}^H(\lambda)$ with the calls to $H$ simulated by $\cal E$, and considering $\cal E$'s final output $w$ as well, as $(\instance,\pi,Z;v;w) \leftarrow {\cal V}^{\cal E} \circ {\cal P}^*{}^{\cal E}(\lambda)$.
	
	\begin{definition}\label{def:OnlineExtr}
		A non-interactive proof in the (quantum-accessible) RO model (QROM) for $\{R_\lambda\}_{\lambda \in \N}$ is a {\em proof of knowledge with online extractability} (PoK-OE) against adaptive adversaries if there exists an online extractor $\cal E$, and functions $\varepsilon_\text{\rm sim}$ (the {\em simulation error}) and $\varepsilon_\text{\rm ex}$ (the {\em extraction error}), with the following properties. For any $\lambda \in \N$ and for any dishonest prover ${\cal P}^*$ with query complexity~$q$, 
		\shortorfullversion{
        \begin{align*}
		&\delta\bigl( [(\instance,\pi,Z,v)]_{{\cal V}^H \circ {\cal P}^*{}^H(\lambda)} , [(\instance,\pi,Z,v)]_{{\cal V}^{\cal E} \circ {\cal P}^*{}^{\cal E}(\lambda)} \bigr) \leq \varepsilon_\text{\rm sim}(\lambda,q,n) \qquad\text{and} \\[0.5ex]
		\Pr\bigl[& v = {\tt accept} \,\wedge\, (\instance,w) \not\in R: (\instance,\pi,Z;v;w) \leftarrow {\cal V}^{\cal E} \circ {\cal P}^*{}^{\cal E}(\lambda) \bigr] \leq \varepsilon_\text{\rm ex}(\lambda,q,n) \, .
        \end{align*}
		Furthermore, }
		{
		$$
		\delta\bigl( [(\instance,\pi,Z,v)]_{{\cal V}^H \circ {\cal P}^*{}^H(\lambda)} , [(\instance,\pi,Z,v)]_{{\cal V}^{\cal E} \circ {\cal P}^*{}^{\cal E}(\lambda)} \bigr) \leq \varepsilon_\text{\rm sim}(\lambda,q,n)
		$$
		and 
		$$
		\Pr\bigl[ v = {\tt accept} \,\wedge\, (\instance,w) \not\in R: (\instance,\pi,Z;v;w) \leftarrow {\cal V}^{\cal E} \circ {\cal P}^*{}^{\cal E}(\lambda) \bigr] \leq \varepsilon_\text{\rm ex}(\lambda,q,n) \, .
		$$
		Furthermore, }
		the runtime of $\cal E$ is polynomial in $\lambda+q+n$, and $\varepsilon_\text{\rm sim}(\lambda,q,n)$ and $\varepsilon_\text{\rm ex}(\lambda,q,n)$ are negligible in $\lambda$ whenever $q$ and $n$ are polynomial in $\lambda$. 
		\label{def:PoKOnline}\end{definition}
	
	\begin{remark}
		In the classical definition of a proof of knowledge, the extractor $\cal E$ interacts with ${\cal P}^*$ only, and the verifier $\cal V$ is not explicitly involved, but would typically be run by $\cal E$. Here, in the context of online extractability, it is necessary to explicitly go through the verification procedure, which also makes oracle queries, to determine whether a proof is valid, i.e., for the event $v = {\tt accept}$ to be well defined. 
	\end{remark}

	\subsection{(Commit-and-Open) $\Sigma$-Protocols} \label{sec:CandOsigp}
	
	A  {\em \sigp} is a $3$-round public-coin interactive proof $({\cal P},{\cal V})$ for a relation $R_\lambda \subseteq \INST_\lambda \times {\cal W}_\lambda$, indexed by the security parameter. From now on, we leave any dependencies on the security parameter implicit. We therefore simply write $R$ etc. By definition, a \sigp has the following communication pattern. In the first round, $\cal P$ sends a {\em first message} $a$; in the second round, $\cal V$ sends a random {\em challenge} $c \in \cal C$; and in the third round, $\cal P$ sends a {\em response} $z$ (see Fig.~\ref{fig:plainsigp}\onlyshortversion{ in Appendix~\ref{sec:sigpfigures}}). By a slight abuse of notation, we sometimes write ${\cal V}(\instance,a,c,z)$ for the predicate that determines whether $\cal V$ accepts the transcript $(a,c,z)$ on input $\instance$. 
	
		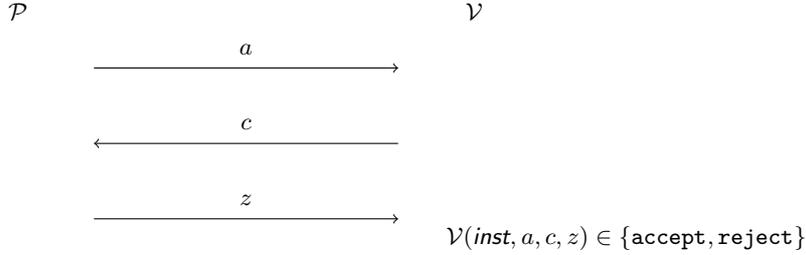
\begin{figure}\centering
		\begin{tikzpicture}
		\begin{pgfonlayer}{nodelayer}
		\node  (0) at (-19, 3) {};
		\node  (1) at (-15, 3) {};
		\node  (2) at (-15, 2) {};
		\node  (3) at (-19, 2) {};
		\node  (4) at (-19, 1) {};
		\node  (5) at (-15, 1) {};
		\node  (7) at (-14, 3.75) {$\cal V$};
		\node  (8) at (-20, 3.75) {$\cal P$};
		\node  (9) at (-17, 3.25) {$a$};
		\node  (10) at (-17, 2.25) {$c$};
		\node  (11) at (-17, 1.25) {$z$};
		\node  (35) at (-13.25, 0.75) {};
		\node  (ar) at (-12, 0.75) {${\cal V}(\instance,a,c,z) \in \{{\tt accept}, {\tt reject}\}$};
		\end{pgfonlayer}
		\begin{pgfonlayer}{edgelayer}
		\draw [style=arrow] (0.center) to (1.center);
		\draw [style=arrow] (2.center) to (3.center);
		\draw [style=arrow] (4.center) to (5.center);
		\end{pgfonlayer}
		\end{tikzpicture}
		\caption{A plain \sigp, formally introduced in Section~\ref{sec:CandOsigp}.\label{fig:plainsigp}}
	\end{figure}

	For the purpose of this work, a {\em commit-and-open \sigp}, or {\em C\&O \sigp} or {\em C\&O protocol} for short, is a \sigp $\Pi=({\cal P},{\cal V})$ of a special form, involving a hash function $H$ that is modeled as a RO.%
	\footnote{One could also refer to \sigps that use non-hash-based commitments, and/or are analyzed in the standard model, as {\em \CnO protocols}, but this is not the scope here. } Concretely, in a C\&O protocol, the transcript $(a,c,z)$ is of the following form (see Fig.~\ref{fig:ordinarysigp}). The first message $a$ consists of  {\em commitments} $y_1,\ldots,y_\ell$, computed as $y_i = H(m_i)$ for {\em messages} $m_1,\ldots,m_\ell \in \cal M$, and possibly an additional string $a_\circ$
	\footnote{Note that $m_i \in \cal M$ may consist of the actual ``message'' (computed by the prover using the witness $w$), possibly concatenated with randomness.}.
	The challenge $c$ is picked uniformly at random from the challenge space ${\cal C} \subseteq 2^{[\ell]}$, which is set to be a subset of $2^{[\ell]}$. Finally, the response $z$ is given by ${\bf m}_c = (m_i)_{i \in c}$. Eventually, $\cal V$ accepts if and only if $H(m_i)=y_i$ for all $i \in c$ and some given predicate $V(\instance,c,{\bf m}_c,a_\circ)$ is satisfied. 
	
	For the above to be meaningful, we obviously need that ${\cal M} \subseteq {\cal X}$, i.e., the bit size of the possible $m_i$'s are upper bounded by $B$. Furthermore, the parameter $n$ determines the hardness of finding a collision in $H$ (in the random oracle model), and thus the level of binding the commitments provide. 

	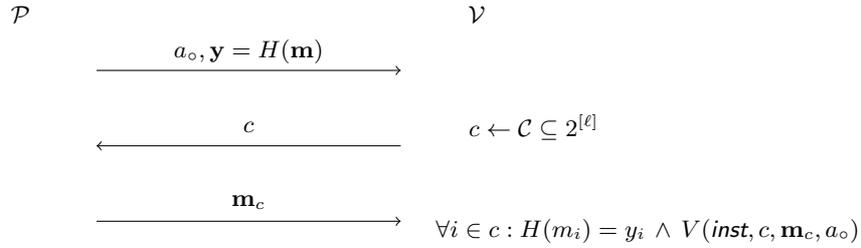
\begin{figure}\centering
		\begin{tikzpicture}
		\begin{pgfonlayer}{nodelayer}
		\node  (0) at (-19, 3) {};
		\node  (1) at (-15, 3) {};
		\node  (2) at (-15, 2) {};
		\node  (3) at (-19, 2) {};
		\node  (4) at (-19, 1) {};
		\node  (5) at (-15, 1) {};
		\node  (7) at (-14, 3.75) {$\cal V$};
		\node  (8) at (-20, 3.75) {$\cal P$};
		\node  (9) at (-17, 3.25) {$a_\circ, {\bf y} = H({\bf m})$};
		\node  (10) at (-17, 2.25) {$c$};
		\node  (11) at (-17, 1.25) {${\bf m}_c$};
		\node  (35) at (-13.25, 0.75) {};
		\node  (in) at (-11.75, 0.875) {$\forall i \in c : H(m_i) = y_i \,\wedge\, V(\instance,c,{\bf m}_c,a_\circ)$};
		\node  (40) at (-13.25, 2.25) {$c \leftarrow{\cal C} \subseteq 2^{[\ell]}$};
		\end{pgfonlayer}
		\begin{pgfonlayer}{edgelayer}
		\draw [style=arrow] (0.center) to (1.center);
		\draw [style=arrow] (2.center) to (3.center);
		\draw [style=arrow] (4.center) to (5.center);
		\end{pgfonlayer}
		\end{tikzpicture}
		\caption{An (ordinary) C\&O \textSigma-protocol,  formally introduced in Section~\ref{sec:CandOsigp}. 
		\label{fig:ordinarysigp}}
	\end{figure}

	\begin{remark}\label{rem:generalCandO}
		Looking ahead, we may also consider a generalization of the above notion of a C\&O protocol, where the first message is parsed as a {\em single} commitment $y$ of the $\ell$ messages $m_1,\ldots,m_\ell$ and where this commitment is computed by means of an arbitrary ``multi-message'' commitment scheme involving $H$, which has the property that any subset of $m_1,\ldots,m_\ell$ can be opened without revealing the remaining $m_i$'s.  The above component-wise hashing is then one particular instantiation, but alternatively one can for instance also compute $y$ by means of a Merkle tree (see Section~\ref{sec:Octopus}), and then open individual $m_i$'s by revealing the corresponding authentication paths. 
		We stress that the concepts discussed below: the notions of $\frak{S}$-soundness and $\frak{S}$-soundness$^*$ and the probability $p^{\mathfrak S}_{triv}$, do not depend on the choice of commitment scheme, and thus remain unaffected when considering such a {\em \MCnO} protocol. To emphasize the default choice of the commitment scheme, which is element-wise hashing, we sometimes also speak of an {\em ordinary} \CnO protocol. 
	\end{remark}

	\subsection{$\frak{S}$-soundness of \CnO $\Sigma$-Protocols}
	
	We briefly recall the notion of $\frak{S}$-soundness and  $\frak{S}$-soundness$^*$ for \CnO protocols, as considered in~\cite{DFMS21}, which offers a convenient general notion of special soundness, or more generally $k$-soundness for \CnO protocols.  
	A similar notion of $\frak{S}$-soundness naturally exists for plain \sigps, i.e., \sigps in the plain model. For completeness, we formalize the latter in Appendix~\ref{sec:S-soundOrd}. 
	
	Here and below, given a \CnO protocol $\Pi$ with challenge space ${\cal C} \subseteq 2^{[\ell]}$,  we let $\frak{S} \subseteq 2^{\cal C}$ be an arbitrary non-empty, monotone increasing set of subsets $S \subseteq \cal C$, where the monotonicity means that $S \in \frak{S}\, \wedge \,S \subseteq S' \: \Rightarrow \: S' \in \frak{S}$. We then also set $\frak{S}_{\min} := \{S \in \frak{S} \mid S_\circ \subsetneq S \Rightarrow S_\circ \not\in \frak{S} \}$ to be the minimal sets in $\frak{S}$.

	For simplicity, the reader can consider $\frak{S} = \frak{T}_k := \{ S \subseteq {\cal C} \mid |S| \geq k \}$ for some threshold $k$, and thus $\frak{S}_{\min} = \{ S \subseteq {\cal C}  \mid |S| = k \}$. This then corresponds to the notion of $k$-soundness for \CnO protocols, which in turn means that the witness can be computed from valid responses to $k$ (or more) distinct challenges for a given first message $y_1,\ldots,y_\ell$, assuming the messages $m_1,\ldots,m_\ell$ to be uniquely determined by their commitments.

	\begin{definition}[\cite{DFMS21}{\rm\ Def.~5.1}]\label{def:S-sound}
		A \CnO protocol $\Pi$ is $\frak{S}$\emph{-sound} if there exists an efficient deterministic algorithm $\mathcal E_\frak{S}(\instance,m_1,\ldots,m_\ell, a_\circ,S)$ that takes as input an instance $\instance \in \INST$, messages $m_1,\ldots,m_\ell \in {\cal M} \cup \{\bot\}$, a string $a_\circ$, and a set $S \in \frak{S}_{\min}$, and outputs a witness for $\instance$ if $V(\instance, c,{\bf m}_c,a_\circ)$  for all $c \in S$.%
		\footnote{The restriction for $S$ to be in $\frak{S}_{\min}$, rather than in $\frak{S}$, is to avoid an exponentially sized input while asking $\mathcal E_\frak{S}$ to be efficient. }
	\end{definition}
	\onlyfullversion{
	We note that Wikstr\"om~\cite{Wik18} also considers a general notion of special soundness (but then for multi-round protocols); however, the notion in~\cite{Wik18} is more restrictive in that it requires some matroid structure on top. For instance, the $r$-fold parallel repetition of a $k$-sound protocol does not fit into the formalism by Wikstr\"om. 
	}
	
	A slightly stronger condition than $\frak{S}$\emph{-soundness} is the following variant, which differs in that the extractor needs to work as soon as there {\em exists} a set $S$ as specified, without the extractor being given $S$ as input. We refer to~\cite{DFMS21} for a more detailed discussion of this aspect. As explained there, whether $S$ is given or not often makes no (big) difference. 
	
	For instance, when $\frak{S}_{\min}$ consists of a polynomial number of sets $S$ then the extractor can do a brute-force search to find~$S$, and so $\frak{S}$-soundness$^*$ is then implied by $\frak{S}$-soundness. Also, the $r$-fold parallel repetition of a $\frak{S}$-sound protocol, which by default is a $\frak{S}^{\vee r}$-sound protocol (see~\cite{DFMS21}), is automatically $\frak{S}^\vee$-sound$^*$ if $\frak{S}_{\min}$ is polynomial in size:  the extractor can then do a brute-force search in every repeated instance. 
	
	\begin{definition}[\cite{DFMS21}{\rm\ Def.~5.2}]\label{def:S-sound-star}
		A \CnO protocol $\Pi$ is $\frak{S}$\emph{-sound}$^*$ if there exists an efficient deterministic algorithm $\mathcal E^*_{\frak{S}}(\instance,m_1,\ldots,m_\ell, a_\circ)$ that takes as input an instance $\instance \in \INST$ and strings $m_1,\ldots,m_\ell \in {\cal M} \cup \{\bot\}$ and $a_\circ$, and it outputs a witness for $\instance$ if there exists $S \in \frak{S}$ such that  $V(\instance, c,{\bf m}_c,a_\circ)$ for all $c \in S$. 
	\end{definition}
	As in~\cite{DFMS21}, we define
	\begin{align} \label{eq:ptriv}
		p^{\mathfrak S}_{triv} := \frac{1}{|\mathcal{C}|} \max_{\hat S \not\in  \frak{S}} |\hat S| \, , 
	\end{align}
	capturing the ``trivial'' attack of picking a set $\hat S = \{\hat c_1,\ldots,\hat c_{m}\} \not\in \frak{S}$ of challenges $\hat c_i \in \mathcal{C}$ and then prepare $\hat{\bf m} = (\hat m_1,\ldots,\hat m_\ell)$ and $a_\circ$ in such a way that $V(\instance, c, \hat{\bf m}_c,a_\circ)$ holds if $c \in \hat S$. After committing to $\hat m_1,\ldots,\hat m_\ell$, the prover can successfully answer to challenges $c \in \hat S$. 
	
	\subsection{The Fiat-Shamir \Trafo of (\CnO) $\Sigma$-Protocols}
	
	The Fiat-Shamir (FS) \trafo \cite{Fiat1987} turns arbitrary $\Sigma$-protocols into non-interactive proofs in the random oracle model by setting the challenge $c \in \cal C$ to be the hash of the instance and the first message $a$. For this \trafo to work smoothly, it is typically assumed that $|{\cal C}|$ is a power of $2$ and its elements are represented as bit strings of size $\log|{\cal C}|$, so that one can indeed set $c$ to be (the first $\log|{\cal C}|$ bits of) the hash $H(\instance,a)$. The assumption on $|{\cal C}|$ is essentially without loss of generality (WLOG), since one can always reduce the size of $|{\cal C}|$ to the next lower power of $2$, at the cost of losing at most $1$ bit of security. 
	However, for a \CnO $\Sigma$-protocol, where a challenge space $\cal C$ is a (typically strict) subset of $2^{[\ell]}$, there is not necessarily a natural way to represent $c \in \cal C$ as a bitstring of size $\log|{\cal C}|$. 
	Therefore, we will make it explicit that the challenge-set $c \in {\cal C}  \subset 2^{[\ell]}$ is computed from the ``raw randomness'' $H(\instance, y_1,\ldots,y_\ell, a_\circ)$ in a deterministic way as  $c = \gamma \circ H(\instance, y_1,\ldots,y_\ell, a_\circ)$ for an appropriate function $\gamma: {\cal Y} \to {\cal C}$, mapping a uniformly random hash in $\cal Y$ to a random challenge-set in $\cal C$. 
	Obviously, for $H(\instance, y_1,\ldots,y_\ell, a_\circ)$ to be defined, in addition to ${\cal M} \subseteq {\cal X}$ we also need that $\INST \times {\cal Y}^\ell \subseteq \cal X$, which again just means that $B$ needs to be large enough. We write ${\sf FS}[\Pi]$ for the \FST of a (\CnO) $\Sigma$-protocol $\Pi$. 
	
	\begin{remark}
		Additionally, we need that $n$ is sufficiently large, so that there is a sufficient amount of randomness in the hash value $H(\instance, y_1,\ldots,y_\ell)$ in order to be mapped to a random $c \in \cal C$. The canonical choice for $\gamma$ is then the function that the {\em interactive} verifier applies to his local randomness to compute the random challenge $c \in {\cal C}$. To simplify the exposition, we assume that $n$ is indeed sufficiently large. Otherwise, one can simply set ${\cal Y} := \{0,1\}^{n'}$ instead, for sufficiently large $n'$, and then let $y_i$ be $H(m_i)$ {\em truncated} to the original number $n$ of bits again. This truncation has no effect on our results. 
	\end{remark}
	
	\begin{remark}
		We assume WLOG that the two kinds of inputs to $H$, i.e., $m_i$ and $(\instance, y_1,\ldots,y_\ell, a_\circ)$, are differently formatted, e.g., bit strings of different respective sizes or prefixes (this is referred to as \emph{domain separation}). In other words, we assume that ${\cal M}$ and $\INST \times {\cal Y}^\ell$ are disjoint. 
	\end{remark}
	
	\begin{remark}\label{rem:NoString}
		When considering the adaptive security of a \FST ${\sf FS}[\Pi]$ of a \CnO protocol $\Pi$ for a relation $R$, the additional string $a_\circ$, which may be part of the first message $a$ of the original protocol $\Pi$, may WLOG be considered to be part of the instance $\instance$ instead. 
		
		Indeed, any dishonest prover ${\cal P}^*$ against ${\sf FS}[\Pi]$, which (by Definition~\ref{def:OnlineExtr}) outputs an instance $\instance$ and a proof $\pi = (a_\circ, y_1,\ldots y_\ell)$, can alternatively be parsed as a dishonest prover that outputs an instance $\instance' = (\instance, a_\circ)$ and a proof $\pi' = (y_1,\ldots y_\ell)$. Thus, ${\cal P}^*$ can be parsed as a dishonest prover against ${\sf FS}[\Pi']$, where the \CnO protocol $\Pi'$ works as $\Pi$, except that $a_\circ$ is considered as part of the instance, rather than as part of the first message, and thus $\Pi'$ is a \CnO protocol for the relation $((\instance,a_\circ),w)\in R' :\Leftrightarrow  (\instance,w)\in R$.%
		\footnote{We do not specify the local computation of the honest prover ${\cal P}'$ in $\Pi' = ({\cal P}',{\cal V}')$, i.e., how to act when $a_\circ$ is part of the input, and in general it might not be efficient, but this is fine since we are interested in the security against dishonest provers. }
		Therefore, security (in the sense of Definition~\ref{def:OnlineExtr}) for ${\sf FS}[\Pi']$ implies that of ${\sf FS}[\Pi]$. 
	\end{remark}

	\section{Online Extractability of the FS-\Trafo: \\ The Case of Ordinary \CnO Protocols}\label{sec:FSCnO}
	
	We now consider the \FST ${\sf FS}[\Pi]$ of an ordinary \CnO protocol $\Pi$. Our goal is to show that ${\sf FS}[\Pi]$ admits online extraction. We note that by exploiting Remark~\ref{rem:NoString}, we may assume WLOG that the first message of $\Pi$ consists of the commitments $y_1,\ldots,y_\ell$ only, and no additional string~$a_\circ$. In Section~\ref{sec:MCnO}, we then consider the case of \MCnO protocols. 
	
	Our analysis \onlyfullversion{of the online extractability }of ${\sf FS}[\Pi]$ uses the framework of Chung et al.~\cite{CFHL21}, discussed and outlined in Section~\ref{sec:Prelim}. Thus, at the core of our analysis is a bound on a certain quantum transition capacity. This is treated in the upcoming subsection. 
	
	\subsection{Technical Preface}
	
	
	We first introduce a couple of elementary database properties (related to CoLlisions and the SiZe of the database) that will be useful for us: 
	$$
	\CL := \{D \,|\, \exists\, x \!\neq\! x':D(x) \!=\! D(x') \!\neq\! \bot\} 
	\;\text{ and }\;
	\SZ := \{D \,|\, \#\{z|D(z) \!\neq\! \bot\}\leq s \} .
	$$ 
	Next, for an instance $\instance \in \INST$, we want to specify the database property that captures a cheating prover that succeeds in producing an accepting proof while fooling the extractor. For the purpose of specifying this database property, we introduce the following notation. 
	For a given database $D \in \DB$ and for a commitment $y \in \cal Y$, we define $D^{-1}(y)$ to be the smallest $x \in \cal X$ with $D(x) = y$, with the convention that $D^{-1}(y) := \bot$ if there is no such $x$, as well as $D^{-1}(\bot):=\bot$. \shortorfullversion{B}{We note that b}y removing collisions, we ensure that there is at most one such $x$; thus, taking the smallest one in case of multiple choices is not important but only for well-definedness. \shortorfullversion{}{

}The database property of interest can now be defined as
	\begin{equation} \label{eq:defSUC}
		\SUC := \!\left\{D \,\bigg|\, \begin{array}{c}\exists\, {\bf y} \in {\cal Y}^\ell \text{ and }\instance\in \INST\text{ so that } {\bf m}:= D^{-1}({\bf y}) \text{ satisfies} \\
			V(\instance,c,{\bf m}_c) \text{ for } c := \gamma \circ D(\instance, {\bf y})  \,\;\text{and}\;\,
			\big(inst,{\cal E}^*(\instance,  {\bf m})\big) \not\in R
		\end{array} \right\}\! .
	\end{equation}
	
	Informally, assuming no collisions (i.e., restricting to $D \not\in \CL$), the database property $\SUC$ captures whether a database $D$ admits a {\em valid} proof $\pi = ({\bf y},{\bf m}_c)$ for an instance $\instance$ for which the (canonical) extractor, which first computes $\bf m$ by inverting $D$ and then runs ${\cal E}^*$, {\em fails} to produce a witness. 
	
	Our (first) goal is to show that $\qQTC{\bot}{\SUC \cup \CL}$ is small, capturing that it is unlikely that after $q$ queries the compressed database contains collisions or admits a valid proof upon which the extractor fails. Indeed, we show the following, where $p^{\mathfrak S}_{triv}$ is the trivial cheating probability of $\Pi$ as defined in~\eqref{eq:ptriv}. 
	
	\begin{lemma}\label{lem:SUCCLbound}
		$
		\qQTC{\bot}{\SUC \cup \CL} \leq 2eq^{3/2} 2^{-n/2} + q \sqrt{10 \max\left(q \ell\cdot 2^{-n}, p_{triv}^{\mathfrak S}\right)}
		$.
	\end{lemma}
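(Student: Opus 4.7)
The plan is to follow the recipe of \cite{CFHL21} outlined in Section~\ref{sec:Prelim}: pass from $\qQTC{\bfbot}{\SUC\cup\CL}$ to a sum of quantum transition capacities via \eqref{eq:CapacityChainRule}, and bound each capacity classically via Theorem~\ref{thm:simple}. I choose the telescoping sequence $\P_s:=(\SUC\cup\CL)\cup\{D:|D|>s\}$ for $s=0,\ldots,q$, where $|D|$ counts the non-$\bot$ entries. The all-$\bot$ database $\bfbot$ lies in neither $\CL$ nor $\SUC$ (the latter would require a preimage of every opened commitment), so $\neg\P_0=\{\bfbot\}$; after $q$ queries the database has at most $q$ non-$\bot$ entries, so $\P_q$ effectively coincides with $\SUC\cup\CL$. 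A single compressed-oracle step touches only one register $R_x$, so it can grow the database by at most one entry, and therefore cannot move from $\SZ[s-1]$ into $\{D:|D|>s\}$. Using \eqref{eq:QTC_props} to split the remaining target,
\[
\qQTC{\bfbot}{\SUC\cup\CL}\;\leq\;\sum_{s=1}^q\Bigl(\QTC{\neg\P_{s-1}}{\CL}+\QTC{\neg\P_{s-1}}{\SUC}\Bigr).
\]

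For the collision summand I apply Theorem~\ref{thm:simple}: starting from a $D\notin\CL$ of size $\leq s-1$, a fresh value $y\in\mathcal Y$ placed at a new $x$ creates a collision exactly when $y$ equals one of the $\leq s-1$ values currently stored in $D$, so the relevant probability is $\leq(s-1)/|\mathcal Y|$. Summing $\sqrt{10(s-1)/|\mathcal Y|}$ over $s$ produces a term of order $q^{3/2}2^{-n/2}$, which can comfortably be absorbed into $2eq^{3/2}2^{-n/2}$.

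For the $\SUC$ summand I distinguish the two possible formats of the new query $x\in\mathcal X$, which are disjoint by domain separation. In the Fiat-Shamir case $x=(\instance,{\bf y})$, the map $D^{-1}$ on commitments is untouched, so the only way to newly trigger $\SUC$ is through this FS-entry itself: one needs $V(\instance,c,{\bf m}_c)$ with $c=\gamma(y)$ and ${\bf m}=D^{-1}({\bf y})$, together with $(\instance,\mathcal E^*(\instance,{\bf m}))\notin R$. The latter and $\mathfrak S$-soundness$^*$ force the answerable set $\{c\in\mathcal C:V(\instance,c,{\bf m}_c)\}$ to lie outside $\mathfrak S$, whence it has size at most $p^{\mathfrak S}_{triv}|\mathcal C|$, and the probability over uniform $y$ that $\gamma(y)$ hits it is at most $p^{\mathfrak S}_{triv}$. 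In the commitment case $x\in\mathcal M$, a $\SUC$-entry of $D[x\mapsto y]$ that is not already a $\SUC$-entry of $D$ requires the preimage structure of some previously recorded FS query to change, which only happens if $y$ matches one of the at most $(s-1)\ell\leq q\ell$ commitment values appearing in those FS queries. Hence the relevant probability is at most $\max(p^{\mathfrak S}_{triv},q\ell\cdot 2^{-n})$, Theorem~\ref{thm:simple} yields $\sqrt{10\max(\cdots)}$ per summand, and summing over $s$ gives the claimed $q\sqrt{10\max(q\ell\cdot 2^{-n},p^{\mathfrak S}_{triv})}$.

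The main obstacle will be the Fiat-Shamir case of the $\SUC$ transition: I must argue carefully that inserting a single FS-entry cannot generate a valid proof tied to a \emph{different} FS query (since that query's recorded challenge and its preimages are unaffected by the update), and then invoke $\mathfrak S$-soundness$^*$ at the right moment to translate ``extractor fails'' into the combinatorial bound $p^{\mathfrak S}_{triv}$. Once this is pinned down, the remaining bookkeeping\,---\,verifying that $\bot$ does not belong to the relevant $\L^{x,D}$ sets so that Theorem~\ref{thm:simple} applies, decomposing via \eqref{eq:QTC_props}, and summing the chain\,---\,is routine.
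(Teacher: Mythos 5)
Your chain-rule setup, the vanishing size-increment term, and your analyses of a fresh commit query (giving $q\ell\cdot 2^{-n}$) and a fresh challenge query (giving $p^{\mathfrak S}_{triv}$ via $\mathfrak S$-soundness${}^*$) all match the paper's Cases~2 and~3 in substance. The gap is a third case that Theorem~\ref{thm:simple} forces on you and which your proposal does not handle. Your closing remark about ``verifying that $\bot$ does not belong to the relevant $\L^{x,D}$ sets'' misreads the theorem: $\L^{x,D}$ is \emph{always} selected so that $\bot\notin\L^{x,D}$ (if $\bot\in\P'|_{D|^x}$ it picks $\L^{x,D}=\P|_{D|^x}$, which cannot contain $\bot$ by disjointness, and vice versa), so there is nothing to verify — but there is a direction to bound. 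Concretely, the maximization in Theorem~\ref{thm:simple} ranges over all $(x,D)$ with both restrictions non-empty; when $D$ (with $D(x)=\bot$, as one may assume by Remark~\ref{rem:IndOfD(x)}) is \emph{already} in $\SUC$, one has $\bot\in\P'|_{D|^x}$, the theorem dictates $\L^{x,D}=\P|_{D|^x}$, and you must bound the probability that $D[x\mapsto U]$ \emph{leaves} $\SUC$ and lands in $\SZ\setminus\SUC\setminus\CL$. The paper handles this (its Case~1) by observing that, for such a $D$ with witnessing commitments ${\bf y}=(y_1,\dots,y_\ell)$, assigning a value at $x$ can only destroy the recorded proof if $U$ collides with one of the $y_i$ and thereby corrupts $D^{-1}$; this gives $P[U\in\L^{x,D}]\leq \ell/|\mathcal Y|$. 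That term is dominated by your $q\ell\cdot 2^{-n}$, so the final estimate is unchanged, but the case needs a real argument, not a wave of the hand.

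Two smaller notes. First, for the collision summand the paper simply imports $\QTC{\SZ\setminus\CL}{\CL}\leq 2e\sqrt{(s+1)/2^n}$ from Example~5.28 of~\cite{CFHL21}; your rederivation via Theorem~\ref{thm:simple} works too (and even gives a slightly smaller constant), provided you observe that when $D[x\mapsto\bot]$ already contains a collision, $\P|_{D|^x}$ is empty and that $(x,D)$ is excluded from the maximum. Second, in the challenge-query case you should also record the observation (as the paper does) that a freshly inserted FS-entry at $x=(\instance,{\bf y})$ cannot retroactively complete a proof anchored at a \emph{different} FS-query $x'=(\instance',{\bf y}')$, since $D^{-1}$ and the challenge recorded at $x'$ are untouched; you flag this as the ``main obstacle,'' and indeed it is exactly the contradiction the paper derives to pin down $(\instance',{\bf y}')=(\instance,{\bf y})$.
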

	\shortorfullversion{We begin with an outline of the proof.}{The formal proof is given below; we first give some informal outline here.} In a first step, by using (\ref{eq:CapacityChainRule}) and union-bound-like properties of the transition capacity, and additionally exploiting a bound from~\cite{CFHL21} to control the transition capacity of $\CL$, we reduce the problem to bounding the quantum transition capacity $ \QTC{ \SZ \backslash \SUC}{\SUC}$ for $s<q$. 
	Informally, this capacity is a measure of the ``likelihood''\,---\,but then in a {\em quantum}-sense\,---\,that a database $D \in \DB$ that is bounded in size and not in $\SUC$ 
	turns into a database $D'$ that {\em is} in $\SUC$, when $D$ is updated to $D' = D[x\!\mapsto\!U]$ with $U$ uniformly random in~$\cal Y$. 
	
	We emphasize that\onlyfullversion{ in the considered quantum setting,} the state of the compressed oracle at any point is a {\em superposition} of databases, and a query is made up of a {\em superposition} of inputs; nevertheless, due to Theorem~\ref{thm:simple}, the above classical intuition is actually very close to what needs to be shown to rigorously bound the considered quantum transition capacity. Formally, as will become clear in the proof below, we need to show that for any database $D \in \SZ \backslash \SUC$ and for any $x \in \cal X$ with $D(x) = \bot$, the probability that $D[x\!\mapsto\!U] \in \SUC$ is small. Below, this probability is bounded in the {\em Case~2} and {\em Case~3} parts of the proof, where the two cases distinguish between~$x$ being a ``commit query'' or a ``challenge query''. 
	
	Informally, for $D$ with $D(x) = \bot$, if $x$ is a ``commit query'' then assigning a value to $D(x)$ can only \onlyfullversion{make a difference, i.e.,} turn $D \not\in \SUC$ into $D[x\!\mapsto\!u] \in \SUC$, if $u$ is a coordinate of some ${\bf y} \in {\cal Y}^\ell$ for which $D(\instance, {\bf y}) \neq \bot$ for some $\instance$. Indeed, otherwise, $D[x\!\mapsto\!u]$ does not contribute to a valid proof $\pi$ that did not exist before. Thus,  given the bound $s < q$ on the size of $D$, this happens with probability at most $q\ell/2^n$ for a random $u$. Similarly, if $x$ is a ``challenge query'', i.e. of the form $x = (\instance,{\bf y})$, then assigning a value $u$ to $D(x)$ can only make a difference if $V(\instance,c, {\bf m}_c)$ is satisfied for $c = \gamma(u)$ and ${\bf m} = D^{-1}({\bf y})$, while ${\cal E}^*(\instance,  {\bf m})$ is not a witness for $\instance$. However, for a random $u$, this is bounded by $p_{triv}^{\mathfrak S}$. 
	
	But then, on top of the above, due to the quantum nature of the quantum transition capacity,%
	\footnote{At the core, this is related to the reversibility of quantum computing and the resulting ability to ``uncompute'' a query. }
	Theorem~\ref{thm:simple} requires to also show the ``reverse'', i.e., that for any $D \in \SUC$ and for any $x \in \cal X$ with $D(x) \neq \bot$, the probability that $D[x\!\mapsto\!U] \in  \SZ \backslash \SUC$ is small; this is analyzed in {\em Case~1} below. 
	
	Thus, by exploiting the framework of~\cite{CFHL21}, the core of the reasoning is purely classical, very closely mimicking how one would have to reason the classical setting with a classical RO. Due to the rather complex definition of $\SUC$, the formal argument in each case is still somewhat cumbersome.

	\begin{proof}
		We first observe that, by (\ref{eq:CapacityChainRule}) (which is Lemma~5.6 in~\cite{CFHL21}) and basic properties of the quantum transition capacity as in (\ref{eq:QTC_props}),
	\shortorfullversion{
	\begin{align}
		&\qQTC{\bot}{\SUC \cup \CL} 
		\leq  \sum_{s=0}^{q-1} \QTC{\SZ \backslash \SUC \backslash \CL}{\SUC \cup \CL  \cup \neg\SZ[s+1]} \nonumber\\[-1ex]
		&\leq  \sum_{s=0}^{q-1} \big(\QTC{\SZ}{\neg\SZ[s+1]} + \QTC{\SZ \backslash \CL}{\CL} +  \QTC{ \SZ \backslash \SUC}{\SUC} \big) \, .\label{eq:splitthecapacity}
	\end{align}
}{
		\begin{align}
			\qQTC{\bot&}{\SUC \cup \CL} 
			\leq  \sum_{s=0}^{q-1} \QTC{\SZ \backslash \SUC \backslash \CL}{\SUC \cup \CL  \cup \neg\SZ[s+1]} \nonumber\\[-1ex]
			\begin{split}&\leq  \sum_{s=0}^{q-1} \big( \QTC{ \SZ \backslash \SUC \backslash \CL}{\neg\SZ[s+1]}  + \QTC{ \SZ \backslash \SUC \backslash \CL}{\CL } \\
				& \qquad \qquad + \QTC{ \SZ \backslash \SUC \backslash \CL}{\SUC} \big) 
			\end{split}\nonumber\\
			&\leq  \sum_{s=0}^{q-1} \big(\QTC{\SZ}{\neg\SZ[s+1]} + \QTC{\SZ \backslash \CL}{\CL} +  \QTC{ \SZ \backslash \SUC}{\SUC} \big) \, .\label{eq:splitthecapacity}
		\end{align}
	}
		The first term, $\QTC{\SZ}{\neg\SZ[s+1]}$, vanishes, while the second term was shown to be bounded as 
		\begin{equation}\label{eq:collision}
			\QTC{\SZ \backslash \CL}{\CL} \leq 2e\sqrt{(s+1)/|{\cal Y}|} \leq 2e\sqrt{q/2^n}
		\end{equation}
		in Example~5.28 in~\cite{CFHL21}.  Thus, it remains to control the third term, which we will do by means of Theorem~\ref{thm:simple} with $\P := \SZ \setminus \SUC$ and $\P' := \SUC$. 
		
		To this end, we consider arbitrary but fixed $D \in \DB$ and input $x \in \cal X$. By Remark~\ref{rem:IndOfD(x)}, we may assume that $D(x) = \bot$. Furthermore, for $\P|_{D|^x}$ to be non-empty, it must be that $D \in \SZ$, i.e., $D$ is bounded in size. We now distinguish between the following cases for the considered $D$ and $x$. 
		
		\paragraph{Case 1:} $D \in \SUC$. In particular, $\bot \in \SUC|_{D|^x} = \P'_{D|^x}$. So, Theorem~\ref{thm:simple} instructs us to set $\L:= \P_{D|^x}$, where we leave the dependency of $\L$ on $D$ and $x$ implicit to simplify notation. Given that $D \in \SUC$, we can consider $\instance$ and $\bf y$ as promised by the definition of $\SUC$ in~\eqref{eq:defSUC}, i.e., such that
		$V(\instance,c, {\bf m}_{c})$ and $\big(\instance,{\cal E}^*(\instance,  {\bf m})\big) \not\in R$
		for
		$$
		c := \gamma \circ D(\instance, {\bf y})  \quad\text{and}\quad m_i: = D^{-1}(y_i) \, ,
		$$
		where it is understood that ${\bf m} = (m_1,\ldots,m_\ell)$. 
		Recall that $D(x) = \bot$; thus, by definition of the $m_i$'s, it must be that $x \neq m_i$ for all $i$, and the fact that $V(\instance,c, {\bf m}_{c})$ is satisfied for $c$ as defined implies that $x \neq (\instance, {\bf y})$. 
		Furthermore, 
		$$
		u \in \L \:\Longleftrightarrow\: D[x \!\mapsto\! u] \in \P \:\Longrightarrow\: D[x \!\mapsto\! u] \not\in \SUC  \:\Longrightarrow\: u \in \{y_1,\ldots,y_\ell\} \, ,
		$$
		where the last implication is easiest seen by contraposition: Assume that $u \not\in \{y_1,\ldots,y_\ell\}$. Then, also recalling that $x \neq m_i$, we have that $m_i = D^{-1}(y_i) = D[x \!\mapsto\! u]^{-1}(y_i)$. But also $c = \gamma \circ D(\instance, {\bf y}) = \gamma \circ D[x \!\mapsto\! u](\instance, {\bf y})$. Together, this implies that the defining property of $\SUC$ is also satisfied for $D[x \!\mapsto\! u]$, i.e., $D[x \!\mapsto\! u] \in \SUC$, as was to be shown. 
		Thus, we can bound
		\begin{equation}\label{eq:case1}
			P[U \!\in\! \L] \leq P[ U \!\in\! \{y_1,\ldots,y_\ell\}] \leq \frac{\ell}{|{\cal Y}|} \, .
		\end{equation}
		
		\paragraph{Case 2:}  $D \not\in \SUC$, and $x$ is a ``commit query'', i.e., $x=m \in \cal{M}$. In particular, $\bot \not\in \P'|_{D|^x}$ (by the assumption that $D(x)=\bot$) and so in light of Theorem~\ref{thm:simple} we may choose $\L := \P'|_{D|^x}$. We then have
		\begin{equation}\label{eq:needstochangesomething}
			u \in \L \:\Longleftrightarrow\: D[x \!\mapsto\! u] \in \P' = \SUC  \:\Longrightarrow\:  \exists \, \instance,{\bf y}, i:  D(\instance, {\bf y}) \neq \bot \,\wedge\, u = y_i \, .
		\end{equation}
		\shortorfullversion{The last}{This final} implication can be seen as follows. 
		By definition of $\SUC$, the assumption $D[x \!\mapsto\! u] \in \SUC$ implies the existence of $\instance$ and ${\bf y} = (y_1,\ldots,y_\ell)$ with
		$V(\instance,c, {\bf m}_{c})$ and $\big(\instance,{\cal E}^*(\instance,  {\bf m})\big) \not\in R$
		for
		$$
		c := \gamma \circ D[x \!\mapsto\! u](\instance, {\bf y}) = \gamma \circ D(\instance, {\bf y})  \quad\text{and}\quad m_i: = D[x \!\mapsto\! u]^{-1}(y_i) \, ,
		$$
		where the equality in the definition of $c$ exploits that $x$ is not a ``challenge'' query. 
		With the goal to reach a contradiction, assume that $u \neq y_i$ for all~$i$. This assumption implies that $D[x \!\mapsto\! u](x) = u \neq y_i$. But also $D(x) = \bot \neq y_i$, and hence for all $\xi \in \cal X$ and $i \in \{1,\ldots,\ell\}$: $D(\xi) = y_i \,\Leftrightarrow\, D[x \!\mapsto\! u](\xi) = y_i$. 
		Therefore, $m_i = D[x \!\mapsto\! u]^{-1}(y_i) = D^{-1}(y_i)$ for all~$i$, and the above then implies that $D \in \SUC$, a contradiction. 
		Thus, there exists $i$ for which $u = y_i$; furthermore, $D(\instance, {\bf y}) \neq \bot$ given that $V(\instance,u, {\bf m}_{c})$ is satisfied for $c = \gamma \circ D(\instance, {\bf y}) $. This shows the claimed implication. 
		\shortorfullversion{}{
		
	}
		Thus, we can bound
	\shortorfullversion{
	\begin{equation}\label{eq:case2}
		P[U \!\in\! \L] \leq P[\, \exists \, \instance,{\bf y}, i :   D(\instance, {\bf y}) \neq \bot \,\wedge\, u = y_i ] \leq s \ell/|{\cal Y}| \leq q \ell/|{\cal Y}| \, .
	\end{equation}
}{
	\begin{equation}\label{eq:case2}
			P[U \!\in\! \L] \leq P[\, \exists \, \instance,{\bf y}, i :   D(\instance, {\bf y}) \neq \bot \,\wedge\, u = y_i ] \leq \frac{s \ell}{|{\cal Y}|} \leq \frac{q \ell}{|{\cal Y}|} \, .
		\end{equation}
	}
		\paragraph{Case 3:}  $D \not\in \SUC$, and $x$ is a ``challenge query'', i.e., $x = (\instance, {\bf y}) \in \INST \times {\cal Y}^\ell$. Set ${\bf m}  = (m_1,\ldots,m_\ell)$ for $m_i := D^{-1}(y_i)$. Again, we have that $\bot \not\in \SUC|_{D|^x} = \P'_{D|^x}$, and so by Theorem~\ref{thm:simple} we may set $\L:= \P'_{D|^x}$. 
		Here, we can argue that 
		\shortorfullversion{
		\begin{align*}
			u \in \L\Longleftrightarrow D[x \!\mapsto\! u] \in \P' =\SUC 
			\Longrightarrow V(\instance,u, {\bf m} _{\gamma(u)})\shortorfullversion{\wedge}{\text{ and }}\big(\instance,{\cal E}^*(\instance,  {\bf m})\big) \not\in R \, ,
		\end{align*}
	}{
	\begin{align*}
			u \in \L \:\Longleftrightarrow\: &D[x \!\mapsto\! u] \in \P' =\SUC \\
			\Longrightarrow\:  &V(\instance,u, {\bf m} _{\gamma(u)})\text{ and }\big(\instance,{\cal E}^*(\instance,  {\bf m})\big) \not\in R \, ,
		\end{align*}
	}
		where the final implication can be seen as follows. 
		By definition of $\SUC$, the assumption $D[x \!\mapsto\! u] \in \SUC$ implies the existence of $\instance'$ and ${\bf y}' = (y'_1,\ldots,y'_\ell)$ with
		$V(\instance',u, {\bf m}'_{c})$ and ${\cal E}^*(\instance', {\bf m}')  \neq w$
		for
		$$
		c := \gamma \circ D[x \!\mapsto\! u](\instance', {\bf y'})  \quad\text{and}\quad m'_i: = D[x \!\mapsto\! u]^{-1}(y'_i) = D^{-1}(y'_i) \, ,
		$$
		where the very last equality exploits that $x$ is not a ``commit'' query. 
		With the goal to come to a contradiction, assume that $(\instance',{\bf y'}) \neq (\instance, {\bf y}) = x$. Then, $c  = \gamma \circ D[x \!\mapsto\! u](\instance', {\bf y'}) = \gamma \circ D(\instance', {\bf y'})$, and the above then implies that $D \in \SUC$, a contradiction. Thus, $(\instance',{\bf y'}) = (\instance, {\bf y}) = x$. In particular, ${\bf m}' = {\bf m}$ and $c = \gamma \circ D[x \!\mapsto\! u](\instance', {\bf y'}) = \gamma \circ D[x \!\mapsto\! u](x) = \gamma(u)$. Hence, the claimed implication holds. 
		
		Thus, we can bound 
		\shortorfullversion{
			\begin{align}
			&P[U \!\in\! \L] \leq P[V(\instance,\gamma(U), {\bf m}_{\gamma(U)}) \, \wedge \, {\cal E}^*(\instance,  {\bf m} ) \neq w] \nonumber\\[0.5ex]
			&\leq P[V(\instance,\gamma(U), {\bf m}_{\gamma(U)}) \, \wedge \, S := \{ c \,|\, V(\instance, c, {\bf m}_{c}) \} \not\in \frak{S} ] \nonumber\\[0.5ex]
			&\leq P[\gamma(U) \in S := \{ c \,|\, V(\instance, c, {\bf m}_{c}) \} \not\in \frak{S} ] \leq \max_{S  \not\in \frak{S}}P[\gamma(U) \in S ] \leq p_{triv}^{\mathfrak S} \, .\label{eq:case3}
		\end{align}
	}{
\begin{align}
	P[U \!\in\! \L] &\leq P[V(\instance,\gamma(U), {\bf m}_{\gamma(U)}) \, \wedge \, {\cal E}^*(\instance,  {\bf m} ) \neq w] \nonumber\\[0.5ex]
	&\leq P[V(\instance,\gamma(U), {\bf m}_{\gamma(U)}) \, \wedge \, S := \{ c \,|\, V(\instance, c, {\bf m}_{c}) \} \not\in \frak{S} ] \nonumber\\[0.5ex]
	&\leq P[\gamma(U) \in S := \{ c \,|\, V(\instance, c, {\bf m}_{c}) \} \not\in \frak{S} ] \nonumber\\[0.5ex]
	&\leq \max_{S  \not\in \frak{S}}P[\gamma(U) \in S ] \nonumber\\[-0.7ex]
	&\leq p_{triv}^{\mathfrak S} \, .\label{eq:case3}
\end{align}
}
		By Theorem \ref{thm:simple}, we now get
	\shortorfullversion{
			\begin{align}
		&	\QTC{ \SZ \backslash \SUC \backslash \CL}{\SUC}\leq \max_{x,D}\sqrt{10 P\bigl[U \!\in\! \L^{x,D} \bigr]} \nonumber\\
			&\le\sqrt{10}\sqrt{ \max\left(\ell/|{\cal Y}|,q \ell/|{\cal Y}|, p_{triv}^{\mathfrak S}\right)}\le\sqrt{10}\sqrt{ \max\left(q \ell\cdot 2^{-n}, p_{triv}^{\mathfrak S}\right)},\nonumber
		\end{align}
	}{
\begin{align}
	\QTC{ \SZ \backslash \SUC \backslash \CL}{\SUC}&\leq \max_{x,D}\sqrt{10 P\bigl[U \!\in\! \L^{x,D} \bigr]} \nonumber\\
	&\le\sqrt{10}\sqrt{ \max\left(\frac{\ell}{|{\cal Y}|},\frac{q \ell}{|{\cal Y}|}, p_{triv}^{\mathfrak S}\right)}\nonumber\\
	&\le\sqrt{10}\sqrt{ \max\left(q \ell\cdot 2^{-n}, p_{triv}^{\mathfrak S}\right)},\nonumber
\end{align}
}
where we have used Equations \eqref{eq:case1}, \eqref{eq:case2} and \eqref{eq:case3} in the second inequality. Combining with Equations \eqref{eq:collision} and \eqref{eq:splitthecapacity} yields the desired bound.
		\qed
	\end{proof}

	\subsection{Online Extractability of the Fiat-Shamir \Trafo}\label{sec:OnExFS}
	
	We are now ready to state and proof the claimed online-extractability result for the \FST of (ordinary) \CnO protocols. 
	
	\begin{theorem}\label{thm:extractor}
		Let $\Pi$ be a $\frak{S}$-sound$^*$ ordinary \CnO protocol  with challenge space ${\cal C}_\lambda$ and $\ell = \ell(\lambda)$ commitments, and set $\kappa = \kappa(\lambda)  := \max_{c\in{\cal C}_\lambda}|c|$. Then, ${\sf FS}[\Pi]$ is a \PoKOE in the QROM (as in Definition~\ref{def:PoKOnline}), with $ \varepsilon_\text{\rm sim}(\lambda,q,n) = 0$ and 
		\begin{align*}
			\varepsilon_\text{\rm ex}(\lambda,q,n)  &\leq 2(\kappa+1) \cdot 2^{-n} + \bigg(2eq^{3/2}2^{-n/2} + q \sqrt{10 \max\left(q \ell\cdot 2^{-n}, p_{triv}^{\mathfrak S}\right)}\bigg)^2 \\
			&\leq (22\ell+60
			)q^{3}2^{-n} + 20 q^2 p_{triv}^{\mathfrak S} \, .
		\end{align*}
		The runtime of the extractor is dominated by running the compressed oracle, which has complexity $O(q^2) \cdot poly(n,B)$, and running~${\cal E}^*$.
	\end{theorem}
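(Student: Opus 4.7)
The plan is to define the online extractor $\mathcal E$ as follows. During the execution of ${\cal V}^{\mathcal E}\circ {\cal P}^*{}^{\mathcal E}(\lambda)$, the extractor answers every oracle query of ${\cal P}^*$ and of $\cal V$ by means of the compressed oracle. After $\cal V$ outputs its verdict $v$, $\mathcal E$ measures the compressed-oracle register in the computational basis $\{\ket{D}\}_{D \in \DB}$ to obtain a classical database $D \in \DB$. From the proof $\pi = ({\bf y},{\bf m}_c)$ output by ${\cal P}^*$, it computes ${\bf m} := (D^{-1}(y_1),\ldots,D^{-1}(y_\ell)) \in (\mathcal M \cup \{\bot\})^\ell$ and outputs $w := {\cal E}^*(\instance,{\bf m})$. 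Since the compressed oracle is a perfect stateful simulation of a uniformly random function, the joint distribution of the classical transcript $(\instance,\pi,Z,v)$ in the execution with $\mathcal E$ coincides with the corresponding distribution in the actual QROM, so $\varepsilon_\text{\rm sim}(\lambda,q,n)=0$.

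Now let $p_\text{\rm bad}$ denote the probability that $v={\tt accept}$ and simultaneously $(\instance,w)\notin R$ in the compressed-oracle execution; this is precisely the extraction error that needs to be bounded. The verifier $\cal V$ makes at most $\kappa+1$ classical queries to the oracle, namely $H(m_i)$ for each $i\in c$ and $H(\instance,{\bf y})$, in order to check $H({\bf m}_c)={\bf y}_c$ and to recompute $c=\gamma\circ H(\instance,{\bf y})$. I would view ``run ${\cal P}^*$, then run $\cal V$'' as a single quantum-then-classical oracle algorithm and apply Corollary~\ref{cor:linkgen}: conditioned on the classical transcript, the measured database $D$ is consistent with the verifier's query answers except with probability $2(\kappa+1)/|{\cal Y}|$. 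Hence, up to this additive loss, on the bad event we have $D(m_i)=y_i$ for every $i\in c$ and $\gamma\circ D(\instance,{\bf y})=c$. Assuming additionally that $D\notin\CL$, collision-freeness forces $D^{-1}(y_i)=m_i$ for every $i\in c$, so acceptance by $\cal V$ yields $V(\instance,c,{\bf m}_c)$ with ${\bf m}=D^{-1}({\bf y})$, while failure of the extractor yields $(\instance,{\cal E}^*(\instance,{\bf m}))\notin R$. This is exactly the defining condition of $\SUC$ with witnesses $\instance$ and ${\bf y}$, and so $D\in\SUC$. Putting everything together,
\begin{equation*}
p_\text{\rm bad}\;\leq\;\Pr[D\in\SUC\cup\CL]+\frac{2(\kappa+1)}{|{\cal Y}|}\,.
\end{equation*}

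Combining the bound $\Pr[D\in\SUC\cup\CL]\leq \qQTC{\bfbot}{\SUC\cup\CL}^2$ with Lemma~\ref{lem:SUCCLbound} immediately yields the first form of the extraction-error estimate; expanding the square via $(a+b)^2\leq 2a^2+2b^2$ and absorbing the $2(\kappa+1)2^{-n}$ term into the $q^3 2^{-n}$ terms (using $\kappa\leq\ell$ and $q\geq 1$) then gives the coarser form $(22\ell+60)q^3 2^{-n}+20q^2 p^{\mathfrak S}_{triv}$. The runtime of $\mathcal E$ is dominated by maintaining the compressed oracle under the sparse encoding, which costs $O(q^2)\cdot\text{poly}(n,B)$, together with a single invocation of ${\cal E}^*$. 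The main conceptual step\,---\,reducing the extraction-failure event to the purely classical database property $\SUC\cup\CL$\,---\,relies crucially on the improved database-matching estimate of Lemma~\ref{lemma:link}/Corollary~\ref{cor:linkgen}, which avoids an otherwise prohibitive square-root loss; the quantum heavy lifting has already been absorbed into Lemma~\ref{lem:SUCCLbound}, so here I only need to verify that the definition of $\SUC$ correctly captures ``prover succeeds, extractor fails'' in the collision-free case, which is straightforward once the appropriate enhanced algorithm is set up.
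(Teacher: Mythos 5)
Your proposal follows the same approach as the paper's proof: define the extractor that runs $\mathcal{P}^*$ and $\mathcal{V}$ against the compressed oracle, measure the database, invoke Lemma~\ref{lemma:link}/Corollary~\ref{cor:linkgen} to transfer the verifier's classical checks to checks against $D$ at an additive cost of $2(\kappa+1)/|\mathcal{Y}|$, observe that acceptance together with extraction failure and collision-freeness forces $D\in\SUC$, and then close with $\Pr[D\in\SUC\cup\CL]\le\qQTC{\bfbot}{\SUC\cup\CL}^2$ and Lemma~\ref{lem:SUCCLbound}. The only cosmetic difference is that you reach for Corollary~\ref{cor:linkgen} where the paper applies Lemma~\ref{lemma:link} directly (the verifier's queries are non-adaptive once $\kappa$ is fixed, so either works).
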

	We note that the above bound on $\varepsilon_\text{\rm ex}$ is asymptotically tight, except for the factor $\ell$. Indeed, the binding property of the hash-based commitment can be invalidated by means of a collision finding attack, which succeeds with probability $\Omega(q^{3}/2^{n})$. Furthermore the trivial soundness attack, which potentially applies to a $\frak{S}$-sound$^*$ \CnO protocol $\Pi$, can be complemented with a Grover search, yielding an attack against ${\sf FS}[\Pi]$ that succeeds with probability $\Omega(q^2 p_{triv}^{\mathfrak S})$. The non-tightness by a factor of $\ell$ is very mild in most cases. In particular, the number of commitments $\ell$ is polynomial in $\lambda$ and thus in $n$. For the most common case of a parallel repetition of a protocol with a constant number of commitments, using a hash function with output length linear in $\lambda$ (e.g. $n=3\lambda$) results in $\ell=O(n)=O(\lambda)$. 
	

	\begin{proof}
		We consider an arbitrary but fixed $\lambda \in \N$.
		For simplicity, we assume that $|c|$ is the same for all $c \in {\cal C}_{\lambda}$, and thus equal to $\kappa = \kappa(\lambda)$. If it is not, we could always make the prover output a couple of dummy outputs $m_i$ to match the upper bound on $|c|$. Let $\cal P^*$ be a dishonest prover that, after making $q$ queries to a RO $H$, outputs $(\instance,\pi)=(\instance,{\bf y},{\bf m}_\circ)$ 
		plus some (possibly quantum) auxiliary output $Z$. In the experiment ${\cal V}^{\cal E} \circ {\cal P}^*{}^{\cal E}(\lambda)$, our extractor $\cal E$ works as follows while simulating all queries to $H$ (by $\cal P^*$ and $\cal V$) with the compressed oracle: 
		\begin{enumerate}
			\item Run $\cal P^*(\lambda)$ to obtain $(\instance,\pi,Z)$ where $\pi=({\bf y},{\bf m}_\circ)$ with ${\bf m}_\circ = (m_1,\ldots,m_\kappa)$.
			\item Run ${\cal V}(\lambda,\instance,\pi)$ to obtain $v$. In detail: obtain $h_0 := H(\instance,{\bf y})$ and $h_j := H(m_j)$ for $j \in \{1,\ldots,\kappa\}$, and  set $v := \tt accept$ if and only if the pair consisting of ${\bf x} = \big( (\instance, {\bf y}),m_1,\ldots,m_\kappa \big)$ and ${\bf h} = (h_0,h_1,\ldots,h_\kappa)$ satisfies the relation $\tilde R$, defined to hold if and only if
			$$
			(h_1,\ldots,h_\kappa) = {\bf y}_c  \quad \wedge \quad V(\instance,c,{\bf m}_\circ)  \quad \text{where} \quad c := \gamma(h_0) \, .
			$$
			\item Measure the internal state of the compressed oracle to obtain $D$.
			\item Run ${\cal E}^*(\instance,{\bf m})$ on input $\instance$ and ${\bf m} := D^{-1}(\bf y)$ to obtain $w$. 
		\end{enumerate}
		Note that in the views of both  $\cal P^*$ and $\cal V$, the interaction with $H$ and the interaction with $\cal E$ differ only in that their oracle queries are answered by a compressed oracle instead of a real random-oracle in the latter case. This simulation is perfect and therefore $\varepsilon_\text{\rm sim}(\lambda,q,n) = 0$.
		
		Considering ${\cal P}^*$ as the algorithm $\cal A$ in Lemma~\ref{lemma:link}, the additional classical oracle queries that $\cal V$ performs in ${\cal V} \circ {\cal P}^*$ then match up with the algorithm $\tilde{\cal A}$, with $h_0,\ldots,h_\kappa$ here playing the role of $y_1,\ldots,y_\ell$ in Lemma~\ref{lemma:link}. 
		Thus,  
		\begin{align*}
			\Pr\bigl[  {\bf h} \neq D({\bf x}) \bigr]  
			\leq 2(\kappa(\lambda)+1)\cdot 2^{-n} \, .
		\end{align*}
		Therefore, we can bound the figure of merit $\varepsilon_\text{\rm ex}$ as
	\shortorfullversion{
\begin{align*}
			\varepsilon_\text{\rm ex}(\lambda,q,n) &= \Pr\bigl[v = {\tt accept} \,\wedge\, (\instance,w)\notin R\bigr] \\
			&= \Pr\bigl[  ({\bf x}, {\bf h}) \in \tilde R \,\wedge\, (\instance,w)\notin R\bigr] \\
			&\leq \Pr\bigl[ \big({\bf x}, D({\bf x})\big) \in \tilde R \,\wedge\, (\instance,w)\notin R\bigr] + 2(\kappa(\lambda)+1)\cdot 2^{-n} \\
			&\leq \Pr[\big({\bf x}, D({\bf x})\big) \in \tilde R \,\wedge\, (\instance,w)\notin R \,|\, D \not\in \SUC \cup \CL] \\
			&\qquad+ \Pr[D \in \SUC \cup \CL] + 2(\kappa(\lambda)+1)\cdot 2^{-n} \, .
		\end{align*}
}{
	\begin{align*}
		&\varepsilon_\text{\rm ex}(\lambda,q,n) = \Pr\bigl[v = {\tt accept} \,\wedge\, (\instance,w)\notin R\bigr] = \Pr\bigl[  ({\bf x}, {\bf h}) \in \tilde R \,\wedge\, (\instance,w)\notin R\bigr]\\
		& \leq \Pr\bigl[ \big({\bf x}, D({\bf x})\big) \in \tilde R \,\wedge\, (\instance,w)\notin R\bigr] + 2(\kappa(\lambda)+1)\cdot 2^{-n} \\
		&\leq \!\!\Pr[\!\big(\!{\bf x},\! D({\bf x})\big) \!\in\! \tilde R \! \wedge\!(\instance,\!w)\!\notin\! R |D \!\not\in \!\SUC \!\cup\! \CL]\!+ \!\Pr[D \!\in\! \SUC \!\cup\! \CL]\! +\! 2(\kappa(\lambda)\!+\!1)\!\cdot\! 2^{-n} \!\!\!\!\!\!.
	\end{align*}
	}
		Using the definition of $\tilde R$, understanding that $c := \gamma \circ D(\instance,{\bf y})$, we can write the first term as 
		\shortorfullversion{
		\begin{align*}
			\Pr\bigl[ &D({\bf m}_{\circ}) = {\bf y}_c \, \wedge \, V(\lambda, \instance, c, {\bf m}_{\circ}) \,\wedge\, (\instance,w)\notin R \,|\, D \not\in \SUC \cup \CL\bigr]\\
			&\leq \Pr\bigl[ V(\lambda, \instance, c, {\bf m}_c) \text{ for } {\bf m}:= D^{-1}({\bf y}) \,\wedge\, (\instance,w)\notin R \,|\, D \not\in \SUC \cup \CL\bigr]\\
			&\leq\Pr\bigl[D\in\SUC \,|\, D \not\in \SUC \cup \CL\bigr]  = 0 \, ,
		\end{align*}
	}{
	\begin{align*}
			\Pr\bigl[ &D({\bf m}_{\circ}) = {\bf y}_c \, \wedge \, V(\lambda, \instance, c, {\bf m}_{\circ}) \,\wedge\, (\instance,w)\notin R \,|\, D \not\in \SUC \cup \CL\bigr]\\
			&\leq \Pr\bigl[ V(\lambda, \instance, c, {\bf m}_c) \text{ for } {\bf m}:= D^{-1}({\bf y}) \,\wedge\, (\instance,w)\notin R \,|\, D \not\in \SUC \cup \CL\bigr]\\
			&\leq\Pr\bigl[D\in\SUC \,|\, D \not\in \SUC \cup \CL\bigr] \\
			& = 0 \, ,
		\end{align*}
	}
		where the first equality
		exploits that $D(m) = y$ iff $m = D^{-1}(y)$ for $D \not\in \CL$.

		We may thus conclude that
		\begin{align*} \varepsilon_\text{\rm ex}(\lambda,q,n)
			&\leq (2\kappa(\lambda)+1)\cdot 2^{-n} + \Pr\bigl[D\in \SUC\cup \CL\bigr] \\
			&\leq (2\kappa(\lambda)+1)\cdot 2^{-n} + \qQTC{\bot}{\SUC \cup \CL}^2 \, ,
		\end{align*}
		\shortorfullversion{using Eq. \eqref{eq:QTC} in the last inequality}{where the last inequality is by definition (\ref{eq:QTC}) of $\qQTC{\bot}{\cdot\,}$}. The \onlyfullversion{claimed} bound now follows from Lemma~\ref{lem:SUCCLbound}. 
		%
		\qed
	\end{proof}

	\subsection{The Unruh-\Trafo with a Compressing Hash Function} 
	\label{sec:UnruhWithCompression}
	
	We conclude this section by showing an improvement to the {\em Unruh \trafo}~\cite{Unruh2015}, which follows directly from our result above. 
	At the core of the Unruh \trafo is a generic technique to transform any \sigp into a \CnO protocol.  In~\cite{Unruh2015}, this \trafo is presented in combination with parallel repetition and the \FST as a means to construct (online-extractable) NIZK 
proofs of knowledge in the QROM. The entire \trafo was  later dubbed the Unruh \trafo.
	
\shortorfullversion{T}{	In fact, t}he Unruh \trafo was the first NIZK proof of knowledge in the QROM; the QROM security of the \FST was only established \shortorfullversion{much}{several years} later \cite{DFMS19,LZ19}. Despite being significantly less efficient than the \FST, the Unruh \trafo is still useful in certain cases because it puts weaker requirements on the underlying \sigp. 
	
	Here, to allow for a modular analysis, we consider the first step of the Unruh \trafo, i.e., the \trafo from a \sigp into a \CnO protocol, as an individual \trafo, which we refer to as the {\em pre-Unruh \trafo}, formally defined below. We stress that we allow the RO $H$ to be {\em compressing}, i.e. $|{\cal Y}| < |{\cal X}|$, while the extraction technique of \cite{Unruh2015} required $H$ to be a {\em length-preserving} RO. This obviously has a significant positive impact on the efficiency of the Unruh \trafo. 

	Let $\Sigma = ({\cal P}_\circ,{\cal V}_\circ)$ be a \sigp. We write $a_\circ \leftarrow {\cal P}_\circ$ to denote the first message in $\Pi_\circ$ as produced by ${\cal P}_\circ$ (for a given instance $\instance$). Furthermore, we write $z(a_\circ,c)$ for ${\cal P}_\circ$'s response then upon receiving challenge $c \in \cal C$.\footnote{We note that $z(a_\circ,c)$ may be a {\em randomized} function of $a_\circ$ and $c$. Furthermore, $z(a_\circ,c)$ is typically computed by ${\cal P}_\circ$ by means of the {\em randomness used to produce} $a_\circ$. } 
	
	\begin{definition}[Pre-Unruh \trafo]\label{def:pU}
		Let $\Sigma =({\cal P}_\circ,{\cal V}_\circ)$ be a \sigp as above. Then, the pre-Unruh-\trafo $\mathsf{pU}[\Sigma]=({\cal P},{\cal V})$ of $\Pi_\circ$ is the \CnO protocol with first message 
		$$
		a := (a_\circ, (y_i)_{i \in\mathcal C})
		$$
		where $a_\circ \leftarrow {\cal P}_\circ$ and for each $i \in \mathcal{C}$, $y_i := H(z_i)$ for $z_i := z(a_\circ,i))$, and with response $z: = z_c$ upon challenge $c \in \cal C$. 
		To verify, ${\cal V}$ runs ${\cal V}_\circ$ on $(a_\circ,c,z)$ and checks if $H(z)=y_c$; if both are true, it accepts, otherwise it rejects. 
	\end{definition}
	%
	Clearly, $\mathsf{pU}[\Sigma]$ is only efficient if $\Sigma$ has at most polynomially many possible challenges (which can always be obtained by restricting the challenge space). As mentioned, the resulting \CnO protocol can then be repeated in parallel and made non-interactive using the \FST. We will now provide a fairly straightforward corollary to conclude the security of the more efficient variant of the (full) Unruh \trafo that allows for a compressing RO, given by the composition of the pre-Unruh \trafo introduced above, parallel repetition and the \FST. In the following, denote the $r$-fold parallel repetition of a (\CnO) \sigp $\Pi$ by $\Pi^r$ and use the notation $\mathsf{Unr}_r[\Sigma] := \mathsf{FS}\left[\mathsf{pU}[\Sigma]^r\right]$ for the Unruh \trafo with $r$-fold parallel repetition. 
	\onlyfullversion{
		\begin{remark}
		A proof in $\Pi=\mathsf{Unr}_r[\Sigma]$ can be generated in time $T^\Pi_{\mathcal P}=rT^\Sigma_{\mathcal P}+(\ell_0 r+1) T_H$, and verified in time $T^\Pi_{\mathcal V}=rT^{\Sigma}_{\mathcal V}+(1+r)T_H$, where $T^{\Sigma}_{\mathcal P}, T^{\Sigma}_{\mathcal V}$ and $T_H$ are the prover and verifier runtime of $\Sigma$, and the time required for computing one hash, respectively. 
	\end{remark}
	}
	It is straightforward to verify that the pre-Unruh \trafo does not harm most security properties of the \sigp. In particular, it tightly preserves soundness 
	and honest-verifier zero-knowledge (in the QROM). It also preserves $\mathfrak S$-soundness in a certain sense.
	
	\begin{proposition}\label{prop:pU-is-nice-with-S-soundness}
		Let $\Sigma$ be an $\mathfrak S$-sound \sigp with challenge space size $\ell=\ell(\lambda)$ with extractor runtime $T$. Then $\Pi := \mathsf{pU}[\Sigma]$ is $\mathfrak S$-sound as a \CnO protocol with extractor runtime $T'\le T+O(\ell)$. Furthermore, suppose that membership in $\mathfrak S$ is checkable in time $T_{\mathfrak S}$. Then $\Pi$ is $\mathfrak S$-sound${}^*$ with extractor runtime $T''\le T'+\ell^2 T_{\mathfrak S}+\ell T_{\cal V}$, where $T_{\mathcal V}$ is the runtime of $\Pi$'s verification predicate $\mathcal V$. 
	\end{proposition}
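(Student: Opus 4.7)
The plan is to build both extractors for $\Pi = \mathsf{pU}[\Sigma]$ directly from the $\mathfrak{S}$-soundness extractor of the underlying \sigp $\Sigma$. The key observation is that, viewed as a \CnO protocol, the challenge in $\Pi$ is the singleton $\{c\}$ for some $c \in \mathcal{C}$, the opened message $m_c$ coincides with the $\Sigma$-response $z_c$, and the predicate $V^\Pi(\instance, \{c\}, (m_c), a_\circ)$ is exactly the acceptance of the $\Sigma$-transcript $(a_\circ, c, m_c)$; the hash-based commitment check $H(m_c) = y_c$ is enforced by the \CnO verifier $\mathcal{V}$ separately and is not part of $V^\Pi$ itself. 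Thus accepting \CnO transcripts of $\Pi$ correspond one-to-one to accepting $\Sigma$-transcripts with the same first message $a_\circ$.

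The first claim then follows immediately: I define $\mathcal{E}^\Pi_\mathfrak{S}(\instance, m_1, \ldots, m_\ell, a_\circ, S)$ to gather $z_c := m_c$ for $c \in S$ (well-defined because the $\mathfrak{S}$-soundness precondition for $\Pi$ forces $m_c \neq \bot$ on $S$), and then invoke $\Sigma$'s own $\mathcal{E}_\mathfrak{S}$ on these, whose input assumption is met verbatim by that same precondition. The $O(\ell)$ overhead is just the lookup and assembly of the $m_c$'s.

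For the second claim, where $S$ is not given, I first compute $S^* := \{c \in \mathcal{C} : V^\Pi(\instance, \{c\}, (m_c), a_\circ) = 1\}$ by running $\ell$ evaluations of $V^\Pi$, at cost $\ell T_{\mathcal V}$. The $\mathfrak{S}$-soundness$^*$ precondition provides some $S \in \mathfrak{S}$ with $S \subseteq S^*$, so monotonicity of $\mathfrak{S}$ forces $S^* \in \mathfrak{S}$. I then iteratively shrink $S^*$ to a set $S' \in \mathfrak{S}_{\min}$ by repeatedly deleting any element whose removal keeps the set in $\mathfrak{S}$; a straightforward loop performs at most $\ell$ rounds, each scanning the at-most-$\ell$ remaining candidates with one $\mathfrak{S}$-membership query, for a total cost of $O(\ell^2 T_{\mathfrak{S}})$. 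A final call to the part-one extractor with $S'$ produces the witness and yields the claimed runtime $T'' \le T' + \ell^2 T_{\mathfrak{S}} + \ell T_{\mathcal V}$.

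There is essentially no serious obstacle beyond the notational bookkeeping between the singleton view of a challenge in $\Pi$ (as required by the \CnO framework) and the raw $c \in \mathcal{C}$ view in $\Sigma$. The only mildly subtle point is using the monotonicity of $\mathfrak{S}$ to convert the purely existential precondition of $\mathfrak{S}$-soundness$^*$ into the concrete, constructively accessible set $S^* \in \mathfrak{S}$ from which a minimal-in-$\mathfrak{S}$ subset can be pruned out efficiently.
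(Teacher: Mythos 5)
Your proposal is correct and follows essentially the same approach as the paper's proof: pass $\{m_c\}_{c\in S}$ to $\Sigma$'s extractor for the $\mathfrak S$-soundness part, and for $\mathfrak S$-soundness$^*$ first compute the set of challenges $c$ for which the verification predicate holds, prune it down to a member of $\mathfrak S_{\min}$ using $O(\ell^2)$ membership tests, and invoke the first extractor. Your account spells out the pruning loop and the singleton-challenge identification in slightly more detail than the paper, but the construction and runtime accounting are the same.
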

	
	\begin{proof}
		Let $\mathcal E_{\Sigma}$ be the extractor for $\Sigma$ guaranteed to exist by Definition~\ref{def:S-sound-Ord}. Note that for $\Pi = \mathsf{pU}[\Sigma]$ regarded as a \CnO protocol, for each challenge exactly one of the commitments has to be opened. For such protocols, we use $c$ and $\{c\}$ interchangeably (where $c$ is a challenge in $\Pi$). We define an extractor $\mathcal E_{\Pi}$ as follows. On input $(\instance, m_1,...,m_\ell, a_{\circ}, S)$, run $w=\mathcal E_{\Sigma}(\instance,a_\circ, S, \{m_c\}_{c\in S})$, then output $w$. The only runtime overhead of $\mathcal E_{\Pi}$ results from having to parse its input and preparing the input for $\mathcal E_{\Sigma}$.
		
		We continue to define an $\mathfrak S$-soundness${}^*$ extractor $\mathcal E^*_{\Pi}$ for $\Pi$ as follows. On input $(\instance, m_1,...,m_\ell, a_{\circ})$, compute $b_c=\mathcal V(\instance, a_\circ, c, m_c)$ for all $c\in \mathcal C$, and set $\hat S=\{c\in\mathcal C \, |\, b_c=1\}$. Using at most $\ell(\ell+1)/2$ membership tests for $\mathfrak S$, find $S\subseteq \hat S$ such that $S\in\mathfrak S_{\min}$. Finally, run $w=\mathcal E_{\Pi}(\instance, m_1,...,m_\ell,a_{\circ}, S)$ and output $w$. The runtime statement is straightforward.
		\qed\end{proof}
	
	
	Using Proposition~\ref{prop:pU-is-nice-with-S-soundness} above and  Lemma~5.3 from \cite{DFMS21} to argue $\mathfrak S^{\vee r}$-soundness${}^*$ of the parallel repetition of $\mathsf{pU}[\Pi]$, and using Theorem \ref{thm:extractor} to argue online extractability of its \FST
	, we obtain the online-extractability of the Unruh \trafo with computationally binding commitments, i.e., when using a {\em compressing} hash function for the commitments.
	\begin{corollary}\label{cor:Unruh}
		Let $\Sigma$ be an $\mathfrak S$-sound \sigp 
		with challenge space size $\ell_{0}$. Then $\Pi := \mathsf{Unr}_r[\Sigma] = \mathsf{FS}[\mathsf{pU}[\Sigma]^r]$ is a \PoKOE in the QROM (as in Definition \ref{def:PoKOnline}) with  $\varepsilon_\text{\rm sim}=0$ and
		\begin{equation}
			\varepsilon_{\mathrm{ex}}(\lambda,q,n) \le (22r\ell_{0}+60)q^{3}2^{-n} + 20 q^2 \left(p_{triv}^{\mathfrak S}\right)^r \, .
		\end{equation}
		The online extractor \onlyfullversion{for $\Pi$ }runs in time $T^\Pi_{\mathcal E}\le rT^{\mathsf{pU}[\Sigma]}_{\mathcal E}+O(q^2) \cdot poly(n,B)$, where $T_{\mathcal E}^{\mathsf{pU}[\Sigma]}$ is the runtime of $\mathsf{pU}[\Sigma]$'s $\mathfrak S$-soundness${}^*$ extractor as given in Proposition~\ref{prop:pU-is-nice-with-S-soundness}. 
	\end{corollary}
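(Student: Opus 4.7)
The plan is to obtain this as a direct chaining of three results: Proposition~\ref{prop:pU-is-nice-with-S-soundness}, Lemma~5.3 of \cite{DFMS21} (parallel repetition preserves $\mathfrak{S}$-soundness${}^*$), and our main Theorem~\ref{thm:extractor} applied to the \CnO protocol obtained after the pre-Unruh transform and parallel repetition. The heavy lifting has already been carried out by these prior results; the task here is essentially bookkeeping of parameters.

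First, I would apply Proposition~\ref{prop:pU-is-nice-with-S-soundness} to conclude that $\mathsf{pU}[\Sigma]$ is an $\mathfrak{S}$-sound${}^*$ \CnO protocol with $\ell_{0}$ commitments, with extractor runtime bounded as stated there. Next, I would invoke Lemma~5.3 from \cite{DFMS21}, which tells us that the $r$-fold parallel repetition $\mathsf{pU}[\Sigma]^{r}$ is an $\mathfrak{S}^{\vee r}$-sound${}^*$ \CnO protocol (for the appropriate parallel-repetition variant $\mathfrak{S}^{\vee r}$ of $\mathfrak{S}$), now with $r\cdot\ell_{0}$ commitments in total. Here it is important to note that the trivial cheating probability multiplies across independent repetitions, i.e.\ $p^{\mathfrak{S}^{\vee r}}_{triv} = \bigl(p^{\mathfrak{S}}_{triv}\bigr)^{r}$, since an adversary playing the trivial strategy in the parallel protocol must succeed in each of the $r$ independent instances simultaneously.

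Now, since $\mathsf{Unr}_{r}[\Sigma] = \mathsf{FS}\bigl[\mathsf{pU}[\Sigma]^{r}\bigr]$ is the Fiat-Shamir transform of an $\mathfrak{S}^{\vee r}$-sound${}^*$ ordinary \CnO protocol, Theorem~\ref{thm:extractor} applies directly. Substituting $\ell\mapsto r\ell_{0}$ for the total number of commitments and $p^{\mathfrak{S}}_{triv}\mapsto \bigl(p^{\mathfrak{S}}_{triv}\bigr)^{r}$ for the trivial cheating probability into the bound
\[
\varepsilon_{\text{\rm ex}}(\lambda,q,n) \leq (22\ell + 60)q^{3}2^{-n} + 20 q^{2} p^{\mathfrak{S}}_{triv}
\]
of Theorem~\ref{thm:extractor} yields the claimed bound, and $\varepsilon_{\text{\rm sim}}=0$ is inherited from Theorem~\ref{thm:extractor} since the simulation (replacing the RO by the compressed oracle) is perfect.

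For the runtime of the extractor, I would simply observe that our extractor here is the one from Theorem~\ref{thm:extractor} instantiated with the $\mathfrak{S}^{\vee r}$-soundness${}^*$ extractor for $\mathsf{pU}[\Sigma]^{r}$. The latter, by the standard construction for parallel repetition, runs the $\mathfrak{S}$-soundness${}^*$ extractor of $\mathsf{pU}[\Sigma]$ on each of the $r$ subinstances, giving the additive $r\cdot T^{\mathsf{pU}[\Sigma]}_{\mathcal{E}}$ contribution; on top comes the $O(q^{2})\cdot\mathrm{poly}(n,B)$ cost of running the compressed oracle, as recorded in Theorem~\ref{thm:extractor}. I do not anticipate a genuine obstacle in this proof; the only point that requires a small amount of care is verifying that the parameter $\ell$ in Theorem~\ref{thm:extractor} corresponds to the total number of commitments across all parallel copies (since every parallel copy contributes $\ell_{0}$ fresh hash-based commitments), and that the formal notion of $\mathfrak{S}^{\vee r}$-soundness${}^*$ used by Lemma~5.3 of \cite{DFMS21} matches the hypothesis of Theorem~\ref{thm:extractor}. \qed
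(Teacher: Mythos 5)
Your proposal is correct and follows essentially the same approach the paper takes (which is only sketched there): chain Proposition~\ref{prop:pU-is-nice-with-S-soundness}, Lemma~5.3 of~\cite{DFMS21} for $\mathfrak{S}^{\vee r}$-soundness${}^*$ of the parallel repetition, and Theorem~\ref{thm:extractor} with the substitutions $\ell \mapsto r\ell_0$ and $p_{triv}^{\mathfrak{S}} \mapsto (p_{triv}^{\mathfrak{S}})^r$. Your parameter bookkeeping, including the identity $p_{triv}^{\mathfrak{S}^{\vee r}} = (p_{triv}^{\mathfrak{S}})^r$ and the extractor-runtime accounting, is accurate.
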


	\section{Online Extractability of the FS-\Trafo: \\ The Case of \MCnO Protocols}\label{sec:MCnO}
	
	For an ordinary \CnO protocol with reasonable concrete security (e.g., 128 bits),
	the number of commitments $\ell$ might be considerable. 
	In this case, the communication complexity of  the protocol (and thus the size of the non-interactive proof system, or digital-signature scheme, obtained via the \FST) can be reduced by using a {\em Merkle tree} to collectively commit to the $\ell$ strings~$m_i$. Such a construction is mentioned in \cite{Fischlin05}, and it is used in the construction of the digital-signature schemes Picnic2 and Picnic3 \cite{KKW18,KZ20,CD+19}. The \MCnO mechanism shrinks the commitment information from $\ell\cdot n$ to $n$, at the expense of increasing the cost of opening $|c|$ values $m_i$ by an additive term of about $\lessapprox |c|\cdot n\cdot \log\ell$. 
	
	The cost of opening can, in fact, be slightly reduced again, by streamlining the opening information. When opening several leaves of a Merkle tree, the authentication paths overlap, so opening requires a number of hash values less than $h$ per leaf, where $h$ is the height of the tree. This overlap was observed and exploited in the octopus authentication algorithm which constitutes one of the optimizations of the stateless hash-based signature scheme gravity-SPHINCS \cite{AE18}, as well as in Picnic2 and Picnic3 \cite{KZ20}. In the following section, we formalize tree-based collective commitment schemes with ``octopus'' opening.

	\subsection{\MCnO Protocols}\label{sec:Octopus}

	In line with Remark~\ref{rem:generalCandO}, we can consider \CnO protocols with a different choice of commitment scheme, compared to the default choice of committing by element-wise hashing. Here, we discuss a particular choice of an alternative commitment scheme, which gives rise to more efficient \CnO protocols in certain cases when $\ell$ is large. Informally, we consider \CnO protocols where $m_1,\ldots,m_\ell$ is committed to by using a {\em Merkle tree}, and individual $m_i$'s are opened by announcing the corresponding authentication paths. 

To make this more formal, we introduce the following notation. 
		For simplicity, we assume that $\ell$ is a power of $2$, and thus $\ell = 2^h$ for $h \in \N$. We then consider the {\em full binary tree} $\Tree = \{0,1\}^{\leq h}$ of depth $h$, where the vertices are identified by bit strings. The root is denoted by $\emptyset$; the $i$-th leave is denoted by $\lf(i) \in \{0,1\}^h$ and is given by the binary representation of $i \in [\ell]$.  
	The {\em authentication path} for the $i$-th leaf is the subtree that consists of all the ancestors of $\lf(i)$ and their siblings: 
	\begin{align*}
		\Auth(i) := \mathsf{Anc}(\lf(i)) \cup \{ \mathsf{sib}(v) \,|\, \emptyset \neq v \in \mathsf{Anc}(\lf(i)) \} \, ,
	\end{align*}
	where $\mathsf{Anc}(v) := \{ u \in \Tree \,|\, \exists \,w \!:\! u\|w \!=\! v \}$ and $\mathsf{sib}(u\|b) := u\|(1-b)$ for any $b \in \{0,1\}$. 
	Finally, for any subset $c \subseteq \{1,\ldots,\ell\}$, we let $\Auth(c) := \bigcup_{i \in c} \Auth(i)$ be the union of the authentication paths of the considered leaves, and we define the {\em octopus} $\Octo(c)$ to be the restriction of $\Auth(c)$ to its leaves, but excluding the leaves $\lf(i)$ for $i \in c$, i.e., 
	$$
	\Octo(c) := \mathsf{leaves}(\Auth(c))  \setminus \{ \lf(i) \,|\, i \in c \} 
	$$ 
	where, for any subtree $T$ of $\Tree$, $\mathsf{leaves}(T) := \{ v \in T \,|\, (v\|0),(v\|1) \not\in T \}$. 
	
	Extending on the above notation, for a given hash function $H: {\cal X} \to {\cal Y}$, where ${\cal X} = \{0,1\}^{\leq B}$ and ${\cal Y} = \{0,1\}^n$ for sufficiently large $B$, we define the {\em Merkle tree} of ${\bf m} = (m_1,...,m_\ell) \in {\cal X}^\ell$ to be the {\em labeled} binary tree that has its leaves $\lf(1),\ldots,\lf(\ell)$ labeled by $H(m_1),...,H(m_\ell)$, respectively, and each internal vertex is labeled by the hash of the labels of its two children. 
	Formally, 
	$$
	\MTree_H({\bf m}) := \big\{\big(v,l_v({\bf m})\big) \,\big|\, v \in \Tree \big\}
	$$ 
	with the labeling $l_v({\bf m})$ recursively defined as 
	$$
	l_v({\bf m}) := H\big(l_{v\|0}({\bf m}) \| l_{v\|1}({\bf m})\big) \:\text{ for }\: v \in \{0,1\}^{< h} \quad$$ and
	$$l_{\lf(i)}({\bf m}) := H(m_i) \:\text{ for }\: i \in \{1,\ldots,\ell\} \, , 
	$$
	where we leave the dependency of the labeling on $H$, i.e., $l_v = l_v^H$, implicit. 
	We also write $\MRoot_H({\bf m})$ then for the root label $l_{\emptyset}({\bf m})$. 
	In the same spirit, we write $\MAuth_H(c, {\bf m}) := \big\{\big(v,l_v({\bf m})\big) \,\big|\, v \in \Auth(c) \big\}$ for the labeled authentication path and $\MOcto_H(c, {\bf m}) := \big\{\big(v,l_v({\bf m})\big) \,\big|\, v \in \Octo(c) \big\}$ for the labeled octopus, using the same labeling function as for the Merkle tree. 
	
	\onlyshortversion{To make this more formal, we introduce the following notation (see Appendix~\ref{app:MerkleTree} for a formal discussion, and see Fig.~\ref{fig:Octopus} for an example) For simplicity, we assume that $\ell$ is a power of $2$. We write $\MTree_H({\bf m})$ for the Merkle tree of messages ${\bf m} = (m_1,\ldots,m_\ell)$ computed using hash function $H$; more formally, the (labels of the) vertices in the Merkle tree are recursively computed as $l_v({\bf m}) := H\big(l_{v\|0}({\bf m}) \| l_{v\|1}({\bf m})\big)$, with the leaves being the hashes of the $m_i$'s. $\MRoot_H({\bf m})$ then denotes the root of the Merkle tree. Furthermore, for $c \subseteq [\ell]$, we write $\MAuth_H(c, {\bf m})$ for the union of the authentication paths for all messages $m_i$ with $i \in c$, and the {\em octopus} $\MOcto_H(c, {\bf m})$ denotes all the vertices needed to compute all the authentication paths in $\MAuth_H(c, {\bf m})$, but excluding the hashes of the actual messages $m_i$ with $i \in c$ (see Fig.~\ref{fig:Octopus}). }

	\begin{figure}\centering
		\scalebox{0.4}{%
			
			\begin{tikzpicture}[label distance=4mm,scale=1.5,octo/.style = {circle,fill=black!=20,inner sep=.15cm}]
				\begin{pgfonlayer}{nodelayer}
					\node[octo]  (0) at (-6, 0) {};
					\node[octo,preaction={
						draw,yellow,inner sep = 0.2cm,
						double=yellow,
						double distance=0.3cm,
					}]  (1) at (-2, 0) {};
					\node[octo]  (2) at (2, 0) {};
					\node[octo]  (3) at (6, 0) {};
					\node[octo]  (4) at (-4, 2) {};
					\node[octo,preaction={
						draw,yellow,inner sep = 0.2cm,
						double=yellow,
						double distance=0.3cm,
					}]  (5) at (4, 2) {};
					\node[octo, label=above:\Huge$y$]  (6) at (0, 4) {};
					\node[octo,preaction={
						draw,yellow,inner sep = 0.2cm,
						double=yellow,
						double distance=0.3cm,
					},label=below:\Huge$H(m_2)$]  (7) at (-5, -2) {};
					\node[octo,label=below:\Huge$H(m_3)$]  (8) at (-3, -2) {};
					\node[octo,label=below:\Huge$H(m_1)$]  (9) at (-7, -2) {};
					\node[octo,label=below:\Huge$H(m_4)$]  (10) at (-1, -2) {};
					\node[octo,label=below:\Huge$H(m_5)$]  (11) at (1, -2) {};
					\node[octo,label=below:\Huge$H(m_6)$]  (12) at (3, -2) {};
					\node[octo,label=below:\Huge$H(m_7)$]  (13) at (5, -2) {};
					\node[octo,label=below:\Huge$H(m_8)$]  (14) at (7, -2) {};
				\end{pgfonlayer}
				\begin{pgfonlayer}{edgelayer}
					\draw (6.center) to (4.center);
					\draw (4.center) to (0.center);
					\draw (4.center) to (1.center);
					\draw (6.center) to (5.center);
					\draw (5.center) to (2.center);
					\draw (5.center) to (3.center);
					\draw (0.center) to (9.center);
					\draw (7.center) to (0.center);
					\draw (1.center) to (8.center);
					\draw (1.center) to (10.center);
					\draw (2.center) to (11.center);
					\draw (2.center) to (12.center);
					\draw (3.center) to (13.center);
					\draw (3.center) to (14.center);
				\end{pgfonlayer}
		\end{tikzpicture}}
		\caption{The Merkle tree $\MTree_H({\bf m})$ for ${\bf m} = (m_1,\ldots,m_8)$ with $\MRoot_H({\bf m}) = y$. The yellow vertices mark the octopus $\MOcto_H(\{1\},{\bf m})$, which is revealed (along with~$m_1$) when opening the commitment $y$ to $m_1$. 
		}\label{fig:Octopus}
	\end{figure}
	
	
	A {\em \MCnO}  protocol is now defined to be a variation of a \CnO protocol, as hinted at in Remark~\ref{rem:generalCandO}, where the first message of the protocol, i.e., the commitment of ${\bf m} = (m_1,\ldots,m_\ell)$, is computed as $y = \MRoot_H({\bf m})$, and the response $z$ for challenge-set $c$ then consists of the messages ${\bf m}_c = (m_i)_{i \in c}$ together with $O = \MOcto_H(c,{\bf m})$. The verifier $\cal V$ then accepts if and only if ${\bf m}_c$ and $O$ ``hash down to'' $y$ and the predicate $V(\lambda,\instance,c,{\bf m}_c,a)$ is satisfied. More formally, the former means that $\cal V$ computes $\MAuth_H(c, {\bf m})$ from $O \cup \{(\lf(i),H(m_i)) \,|\, i \in c \}$ in the obvious way, and then checks whether $l_{\emptyset}({\bf m}) = y$.  This verification is denoted by $\OctoVerify^H(c,y,{\bf m}_c,O)$, see Fig.~\ref{fig:merklesigp}\onlyshortversion{ in Section~\ref{sec:sigpfigures}}.
	
		\begin{figure}\centering
	\begin{tikzpicture}
	\begin{pgfonlayer}{nodelayer}
	\node  (0) at (-19, 3) {};
	\node  (1) at (-15, 3) {};
	\node  (2) at (-15, 2) {};
	\node  (3) at (-19, 2) {};
	\node  (4) at (-19, 1) {};
	\node  (5) at (-15, 1) {};
	\node  (7) at (-14, 3.75) {$\cal V$};
	\node  (8) at (-20, 3.75) {$\cal P$};
	\node  (9) at (-17, 3.25) {$a_\circ, y = \MRoot_H({\bf m})$};
	\node  (10) at (-17, 2.25) {$c$};
	\node  (11) at (-17, 1.25) {${\bf m}_c, O=\MOcto_H(c, {\bf m})$};
	\node  (35) at (-13.25, 0.75) {};
	\node  (in) at (-11.5, 0.875) {$\OctoVerify^H(c,y,{\bf m}_c,O) \,\wedge\, V(\instance,c,{\bf m}_c,a_\circ)$};
	\node  (40) at (-13.25, 2.25) {$c \leftarrow{\cal C} \subseteq 2^{[\ell]}$};
	\end{pgfonlayer}
	\begin{pgfonlayer}{edgelayer}
	\draw [style=arrow] (0.center) to (1.center);
	\draw [style=arrow] (2.center) to (3.center);
	\draw [style=arrow] (4.center) to (5.center);
	\end{pgfonlayer}
	\end{tikzpicture}
	\caption{A Merkle-tree based C\&O \sigp, formally introduced in Section~\ref{sec:Octopus}.\label{fig:merklesigp}}
\end{figure}
	
	Looking ahead, we may also consider a variation where the verifier resamples the challenge $c$ if the resulting octopus is bigger than a given bound. Formally, this means that the challenge space of the \MCnO  protocol is restricted to those challenges $c \in [\ell]$ for which $\Octo(c)$ is not too large.

	\subsection{Online Extractability of the Fiat-Shamir \Trafo}
	
	The analysis in Section \ref{sec:FSCnO} can be generalized to the case of FS-transformed \MCnO protocols. To that end, we generalize the notation from that section as follows. Let $\Pi$ be a \MCnO protocol with number of messages to be committed equal to $\ell=2^h$ where $h$ is the height of the commitment Merkle tree.\footnote{As in the previous section we assume that $\ell$ is a power of 2 for ease of exposition.} 
	
	For a given database $D \in \DB$, recall from Section~\ref{sec:FSCnO} the definition of $D^{-1}$; 
	applied to a tuple ${\bf y} = (y_1,\ldots,y_\ell) \in {\cal Y}^\ell$ of commitments, $D^{-1}$ attempts to recover the corresponding committed messages $m_1,\ldots,m_\ell$. Here, in a similar spirit but now considering the Merkle-tree commitment, $\MRoot_D^{-1}$ attempts to recover the committed messages from the root label 
	of the Merkle tree. 

	In more detail, for a commitment $y \in {\cal Y} = \{0,1\}^{n}$ we reverse engineer the Merkle tree in the obvious way (see Fig.~\ref{partialMTree}\onlyshortversion{ in Appendix~\ref{App:OmittedProofs}} for an example); namely, accepting a small clash in notation with the labeling function $l_v({\bf m})$ defined for a tuple ${\bf m} \in {\cal M}^\ell$, we set the root label $l_\emptyset(y) := y$, and recursively define 	
	$$
	\big(l_{v\|0}(y) , l_{v\|1}(y)\big)  := \mathsf{split} \circ D^{-1}\big(l_v(y)\big) \in {\cal Y} \times {\cal Y} 
	$$
	for $\emptyset \neq v \in \{0,1\}^{\leq h}$, where $\mathsf{split}$ maps any $2n$-bit string, parsed as $y_1\|y_2$ with $y_1,y_2 \in \{0,1\}^{n}$, to the pair $(y_1,y_2)$ of $n$-bit strings, while it maps anything else to $(\bot,\bot)$.
	Then, accepting a small clash in notation again, we set 
	$$
	\MTree_D(y) := \{ l_v(y) \,|\, v \in \{0,1\}^{\leq h} \} \, ,
	$$
	and finally\onlyshortversion{, with $\lf(i)$ denoting the $i$-th leaf in the tree, }
	$$
	\MRoot_D^{-1}(y) := \big(D^{-1}\bigl(l_{\lf(1)}(y)\bigr),\ldots,D^{-1}\bigl(l_{\lf(\ell)}(y)\bigr)\big) \, .
	$$
	%
	Following the strategy we used in Section~\ref{sec:FSCnO}, we define the database property 
	\begin{equation*} 
		\SUC := \left\{D \,\bigg|\, \begin{array}{c}\exists\, y \in {\cal Y}\text{ and }\instance\in \INST\text{ so that } {\bf m}:= \MRoot_D^{-1}(y)\text{ satisfies} \\
			V(\instance,c,{\bf m}_c) \text{ for } c := \gamma \circ D(\instance, y)  \;\text{and}\;
			\big(inst,{\cal E}^*(\instance,  {\bf m})\big) \not\in R
		\end{array} \right\} ,
	\end{equation*}
	and our first goal is to show that $\qQTC{\bot}{\SUC \cup \CL}$ is small. 
	
	\shortorfullversion{
	\begin{lemma}\label{lem:SUC-Oc-CLbound}$
		\qQTC{\bot}{\SUC \cup \CL} \leq 2eq^{3/2}2^{-n/2} + q\sqrt{10 \max\left(q \ell\cdot 2^{-n+1}, p_{triv}^{\mathfrak S}\right)} \, .
		$
	\end{lemma}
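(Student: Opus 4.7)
The plan is to closely mirror the proof of Lemma~\ref{lem:SUCCLbound}, tracking only the modifications that the Merkle-tree commitment structure forces into the case analysis. As in the ordinary case, I would start by applying the chain rule (\ref{eq:CapacityChainRule}) together with the monotonicity and union-bound properties from (\ref{eq:QTC_props}) to decompose
$$
\qQTC{\bot}{\SUC \cup \CL} \leq \sum_{s=0}^{q-1} \bigl( \QTC{\SZ}{\neg\SZ[s+1]} + \QTC{\SZ \setminus \CL}{\CL} + \QTC{\SZ \setminus \SUC}{\SUC} \bigr).
$$
The first term vanishes and the second is bounded by $2e\sqrt{(s+1)/2^n}$ via Example~5.28 of~\cite{CFHL21} exactly as before, contributing $2eq^{3/2}2^{-n/2}$ after summation. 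All the new work is concentrated in the third term.

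To bound $\QTC{\SZ \setminus \SUC}{\SUC}$, I would apply Theorem~\ref{thm:simple} with $\P = \SZ \setminus \SUC$ and $\P' = \SUC$, analyzing $(x,D)$ with $D(x) = \bot$ via three cases parallel to those in Lemma~\ref{lem:SUCCLbound}: (Case~1) $D \in \SUC$; (Case~2) $D \not\in \SUC$ and $x$ is a ``hash query''\,---\,which in the Merkle setting now covers both commit queries $x \in \cal M$ and internal-node queries $x \in \{0,1\}^{2n}$; (Case~3) $D \not\in \SUC$ and $x = (\instance, y)$ is a ``challenge query''. Case~3 is essentially unchanged: the value $u$ assigned to $D(\instance, y)$ determines $c = \gamma(u)$, while ${\bf m} = \MRoot_D^{-1}(y)$ is unaffected by the new database entry (by domain separation, $x$ is not invoked by the Merkle inversion), so $\frak{S}$-soundness$^*$ yields the same $p_{triv}^{\mathfrak S}$ bound.

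The main obstacle is Case~2. Here I need to show that $D[x \!\mapsto\! u] \in \SUC$ forces $u$ to coincide with one of the labels in some Merkle tree associated with an $(\instance, y)$ already recorded in $D$. The contrapositive closely parallels the ordinary case: if $u$ differs from every label $l_v(y)$ for every $v \in \{0,1\}^{\leq h}$ and every $(\instance, y)$ with $D(\instance, y) \neq \bot$, then the recursive computation of $\MRoot_{D[x\mapsto u]}^{-1}(y)$ never calls $D^{-1}$ at $u$ and thus agrees with $\MRoot_D^{-1}(y)$; combined with $D[x\!\mapsto\!u](\instance, y) = D(\instance, y)$ (again by domain separation), this would contradict $D \not\in \SUC$. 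Counting targets: each of the at most $s \leq q$ entries $(\instance, y)$ with $D(\instance, y) \neq \bot$ contributes at most $2\ell - 1$ labels (the number of nodes in the associated tree), giving $\Pr[U \in \L] \leq s(2\ell - 1)/2^n \leq q\ell \cdot 2^{-n+1}$. This factor of roughly $2\ell$, versus $\ell$ in the ordinary case, is precisely the source of the $2^{-n+1}$ in the claimed bound. Case~1 is handled symmetrically by counting the at most $2\ell - 1$ labels in the single Merkle tree witnessing $D \in \SUC$. Combining the three bounds via Theorem~\ref{thm:simple} yields $\QTC{\SZ \setminus \SUC \setminus \CL}{\SUC} \leq \sqrt{10 \max(q \ell \cdot 2^{-n+1}, p_{triv}^{\mathfrak S})}$, and summing over $s$ completes the proof.
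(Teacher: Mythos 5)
Your proposal is correct and follows the same route as the paper's proof: the identical decomposition into $\QTC{\SZ\setminus\CL}{\CL}$ and $\QTC{\SZ\setminus\SUC}{\SUC}$, the same three-case analysis under Theorem~\ref{thm:simple}, and the same counting of at most $2\ell-1$ Merkle-tree labels per recorded challenge query, which yields the factor $2^{-n+1}$ in Cases~1 and~2. Your contrapositive argument in Case~2\,---\,that $u$ avoiding every label of $\MTree_D(y)$ (together with domain separation) keeps both the reverse-engineered tree and the challenge unchanged\,---\,is exactly the paper's reasoning.
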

}{\begin{lemma}\label{lem:SUC-Oc-CLbound}
		Let $\Pi$ be an $\mathfrak{S}$-sound \CnO protocol with $p^{\mathfrak S}_{triv}$ as defined in~\eqref{eq:ptriv}. Then
		$$
		\qQTC{\bot}{\SUC \cup \CL} \leq 2eq^{3/2}2^{-n/2} + q\sqrt{10 \max\left(q \ell\cdot 2^{-n+1}, p_{triv}^{\mathfrak S}\right)} \, .
		$$
	\end{lemma}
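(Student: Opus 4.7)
The plan is to mimic the proof of Lemma~\ref{lem:SUCCLbound}, keeping the same top-level decomposition, and only adjust the case analysis for the transition into $\SUC$ to account for the Merkle-tree reconstruction $\MRoot_D^{-1}$ in place of component-wise inversion $D^{-1}$. Concretely, using (\ref{eq:CapacityChainRule}) with the telescoping sequence $\P_s := \SUC \cup \CL \cup \neg\SZ[s]$ together with the sub-additivity and intersection rules (\ref{eq:QTC_props}), I reduce to bounding the three capacities $\QTC{\SZ}{\neg\SZ[s+1]}$, $\QTC{\SZ\setminus\CL}{\CL}$ and $\QTC{\SZ\setminus\SUC}{\SUC}$ summed over $s=0,\ldots,q-1$. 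The first vanishes and the second is at most $2e\sqrt{q/2^n}$ by Example~5.28 of~\cite{CFHL21}, exactly as before, giving the additive term $2eq^{3/2}2^{-n/2}$.

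For the third capacity, I invoke Theorem~\ref{thm:simple} with $\P=\SZ\setminus\SUC$ and $\P'=\SUC$, and for each fixed $(D,x)$ with $D(x)=\bot$ bound the classical probability $P[U\in\L^{x,D}]$ in three cases, parallel to Lemma~\ref{lem:SUCCLbound}. The key new observation is that membership of a database $D$ in $\SUC$ depends only on values $D(\xi)$ that are touched during the recursive reconstruction of $\MRoot_D^{-1}(y)$ for some ``challenge query'' input $(\instance,y)$ stored in $D$. For such a reconstruction, the set of hash outputs that occur as labels of the reverse-engineered Merkle tree (see Fig.~\ref{partialMTree}) has size at most $2\ell-1 \le 2\ell$, comprising at most $\ell-1$ internal labels and $\ell$ leaf labels.

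Case~1 ($D\in\SUC$) uses the witnesses $\instance, y$ from the definition of $\SUC$: a newly assigned value $u=D[x\mapsto u](x)$ can destroy membership in $\SUC$ only if $u$ equals one of the at most $2\ell$ labels of the reconstructed tree for $(\instance,y)$, yielding $P[U\in\L]\le 2\ell/|\mathcal Y|$. Case~2 (commit query, meaning $x\in\mathcal X$ is not of ``challenge'' form) is the dual: if $D\notin\SUC$, then for $D[x\mapsto u]\in\SUC$ the new value $u$ must appear as one of the $\le 2\ell$ labels in the reconstructed Merkle tree of some $(\instance,y)$ already recorded in $D$ (otherwise the reconstruction, and hence the whole $\SUC$-predicate, would be identical for $D$ and $D[x\mapsto u]$). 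Since $|\{(\instance,y)\,|\,D(\instance,y)\ne\bot\}|\le s\le q$, we get $P[U\in\L]\le 2q\ell/|\mathcal Y|$, which is where the factor $2^{-n+1}$ in the bound arises. Case~3 ($x=(\instance,y)$ a challenge query) is essentially identical to the ordinary case: changing $D(x)$ can only produce $D[x\mapsto u]\in\SUC$ via that very input, so $c=\gamma(u)$ and ${\bf m}=\MRoot_D^{-1}(y)$ are fixed, and the probability reduces to $\max_{S\notin\mathfrak S}P[\gamma(U)\in S]\le p_{triv}^{\mathfrak S}$.

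The main obstacle is the careful verification in Case~1 and Case~2 that the only way a single-point update of $D$ can flip $\SUC$-membership is via the values $u$ appearing as labels in a reconstructed tree; this requires tracking how $\MRoot_D^{-1}$ propagates through $\mathsf{split}$ and $D^{-1}$ and confirming that if $u$ is not among the $\le 2\ell$ labels (in Case~2) or not among the tree labels for the witness $y$ (in Case~1) then the entire reconstruction, and hence the verification predicate and the extractor's output, is unaffected. Once the three case bounds are in place, combining them via Theorem~\ref{thm:simple} gives $\QTC{\SZ\setminus\SUC}{\SUC}\le \sqrt{10\max(2q\ell/|\mathcal Y|,p_{triv}^{\mathfrak S})}$, summing over $s$ yields the factor $q$, and adding the collision term produces the stated bound.
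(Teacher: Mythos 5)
Your proposal is correct and follows essentially the same route as the paper's proof: the same telescoping decomposition into $\QTC{\SZ\setminus\CL}{\CL}$ and $\QTC{\SZ\setminus\SUC}{\SUC}$, the same three-case analysis under Theorem~\ref{thm:simple}, and the same key observation that a single-point update of $D$ can only flip $\SUC$-membership when the new value $u$ collides with one of the $\le 2\ell-1$ labels of a reverse-engineered Merkle tree rooted at some $(\instance,y)$ stored in $D$. The only immaterial difference is that you round $2\ell-1$ up to $2\ell$ in Cases~1 and~2, which yields the same final bound $q\ell\cdot 2^{-n+1}$.
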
}
	The proof works exactly as the proof of Lemma \ref{lem:SUCCLbound}, accounting for some syntactic differences due to the Merkle tree commitment. In particular, where in Case 1 and 2 of the proof of Lemma \ref{lem:SUCCLbound} we have to exclude $U$ from falling on one of the hash values $y_1,\ldots, y_\ell$ in order to keep the $\bf m$ that was constructed from the database intact, we now have a similar restriction for $U$, but with respect to the whole tree $\MTree_D(y)$. 
\onlyshortversion{		The full proof can be found in Appendix \ref{App:OmittedProofs}.}

	\begin{proof}
		As in the proof of of Lemma \ref{lem:SUCCLbound}, we can bound
		\begin{align}
		\qQTC{\bot}{\SUC \cup \CL} &\leq  \sum_{s=0}^{q-1} \big(\QTC{\SZ \backslash \CL}{\CL} +  \QTC{ \SZ \backslash \SUC}{\SUC} \big) \label{eq:Merksplitthecapacity}
		\end{align}
		and use that
		\begin{equation}\label{eq:Merkcollision}
		\QTC{\SZ \backslash \CL}{\CL} \leq 2e\sqrt{(s+1)/2^n} \leq 2e\sqrt{q/2^n} \, .
		\end{equation}
		Thus, it remains to control the second term, which we will do again by means of Theorem~\ref{thm:simple} with $\P := \SZ\!\setminus \SUC$ and $\P' := \SUC$. 
		
		To this end, we consider arbitrary but fixed $D \in \DB$ and input $x \in \cal X$. By Remark~\ref{rem:IndOfD(x)}, we may assume that $D(x) = \bot$. Furthermore, for $\P|_{D|^x}$ to be non-empty, it must be that $D \in \SZ$, i.e., $D$ is bounded in size. We now distinguish between the following cases for the considered $D$ and $x$.
		
		\paragraph{Case 1:} $D \in \SUC$. In particular, $\bot \in \SUC|_{D|^x} = \P'_{D|^x}$. So, Theorem~\ref{thm:simple} instructs us to set $\L:= \P_{D|^x}$, where we leave the dependency of $\L$ on $D$ and $x$ implicit. Given that $D \in \SUC$, we can consider $\instance$ and $y$ as promised by the definition of $\SUC$ above, i.e., such that
		$V(\instance,c, {\bf m}_{c})$ and $(\instance,{\cal E}^*(\instance, {\bf m}))  \notin R$
		for
		\begin{equation}\label{eq:EqualCandM}
		c := \gamma \circ D(\instance, y)  \quad\text{and}\quad {\bf m}:= \MRoot_D^{-1}(y) \, .
		\end{equation}
		Note that, since $D(x) = \bot$ and $V(\instance,c, {\bf m}_{c})$ holds, which in particular means that $c$ must be defined, it must be that $x \neq (\instance,y)$. Therefore
		\begin{equation}\label{eq:EqualC}
		\gamma \circ D(\instance,y) = \gamma\circ D[x \mapsto u](\instance,y) \, .
		\end{equation}
		Our goal now is to show the final implication in 
		$$
		u \in \L \:\Longleftrightarrow\: D[x \!\mapsto\! u] \in \P \:\Longrightarrow\: D[x \!\mapsto\! u] \not\in \SUC  \:\Longrightarrow\: u\in \MTree_D(y) \, .
		$$
		We will do this by showing that $u\notin \MTree_D(y)$ implies
		\begin{equation}\label{eq:EqualM}
		\MRoot_D^{-1}(y) = \MRoot_{D[x \mapsto u]}^{-1}(y) \, .
		\end{equation}
		Indeed, the contraposition $u\notin \MTree_D(y) \Rightarrow D[x \!\mapsto\! u] \in \SUC$ of the claimed implication then follows from the fact that (\ref{eq:EqualC}) and (\ref{eq:EqualM}) together imply that $c$ and $\bf m$ remain unchanged when replacing $D$ by $D[x \!\mapsto\! u]$ in (\ref{eq:EqualCandM}), and so $D[x \!\mapsto\! u] \in \SUC$ as well. 

		Towards showing (\ref{eq:EqualM}), 
		exploiting again that $D(x)=\bot$, it follows by definition of the reverse engineered labeling function $l_v(y)$ that $x\neq (l_{v||0}(y),l_{v||1}(y))$ for any $v$ with $l_{v||0}(y) \neq \bot \neq l_{v||1}(y)$, i.e., $x$ is not equal to any pair of siblings in $\MTree_D(y)$ with non-$\bot$ labeling 
		(see Figure~\ref{partialMTree}). 
				Due to a similar reasoning, $x \neq m_i$ for any $i$. 
		It now follows by definition of the reverse engineered Merkle tree and of $\MRoot^{-1}$ that if $u \notin \MTree_D(y)$ then $\MTree_D(y) = \MTree_{D[x\mapsto u]}(y)$ and $\MRoot_D^{-1}(y) = \MRoot_{D[x \mapsto u]}^{-1}(y)$, as claimed.

		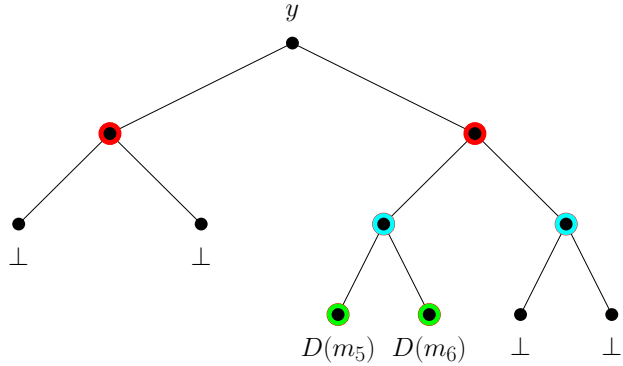
\begin{figure}\centering
			\scalebox{0.4}{%
				\pgfdeclarelayer{nodelayer}
				\pgfdeclarelayer{edgelayer}
				\pgfsetlayers{edgelayer,nodelayer}
				\begin{tikzpicture}[label distance=4mm,scale=1.5,octo/.style = {circle,fill=black!=20,inner sep=.15cm}]
				\begin{pgfonlayer}{nodelayer}
				\node[octo,label=below:\Huge$\bot$]  (0) at (-6, 0) {};
				\node[octo,label=below:\Huge$\bot$]  (1) at (-2, 0) {};
				\node[octo,preaction={
					draw,red,inner sep = 0.2cm,
					double=aqua,
					double distance=0.3cm,
				}]  (2) at (2, 0) {};
				\node[octo,preaction={
					draw,red,inner sep = 0.2cm,
					double=aqua,
					double distance=0.3cm,
				}]  (3) at (6, 0) {};
				\node[octo,preaction={
					draw,red,inner sep = 0.2cm,
					double=red,
					double distance=0.3cm,
				}]  (4) at (-4, 2) {};
				\node[octo,preaction={
					draw,red,inner sep = 0.2cm,
					double=red,
					double distance=0.3cm,
				}]  (5) at (4, 2) {};red
				\node[octo,label=above:\Huge $y$]  (6) at (0, 4) {};
				\node[octo,preaction={
					draw,red,inner sep = 0.2cm,
					double=green,
					double distance=0.3cm,
				},label=below:\Huge$D(m_5)$]  (11) at (1, -2) {};
				\node[octo,preaction={
					draw,red,inner sep = 0.2cm,
					double=green,
					double distance=0.3cm,
				},label=below:\Huge$D(m_6)$]  (12) at (3, -2) {};
				\node[octo,label=below:\Huge$\bot$]  (13) at (5, -2) {};
				\node[octo,label=below:\Huge$\bot$]  (14) at (7, -2) {};
				\end{pgfonlayer}
				\begin{pgfonlayer}{edgelayer}
				\draw (6.center) to (4.center);
				\draw (4.center) to (0.center);
				\draw (4.center) to (1.center);
				\draw (6.center) to (5.center);
				\draw (5.center) to (2.center);
				\draw (5.center) to (3.center);
				\draw (2.center) to (11.center);
				\draw (2.center) to (12.center);
				\draw (3.center) to (13.center);
				\draw (3.center) to (14.center);
				\end{pgfonlayer}
				\end{tikzpicture}}
			\caption{Example of a reverse engineered Merkle tree $\MTree_D(y)$, with the $\bot$-children of the $\bot$-labels omitted. Since $D(x) = \bot$, $x\neq (l_{u}(y),l_{w}(y))$ for any two siblings $(u,w)$ in $\MTree_D(y)$, i.e., nodes with the same color. Assuming that $u \not\in \MTree_D(y)$ then implies that reprogramming $D$ to $D[x\mapsto u]$ does not affect the reverse engineered Merkle tree.  }\label{partialMTree}
		\end{figure}

		Thus, we can bound
		\begin{equation}\label{eq:Merkcase1}
		P[U \!\in\! \L] \leq P[ U \!\in\! \MTree_D(y)] \leq \frac{2 \cdot 2^h-1}{|{\cal Y}|}=\frac{2\ell-1}{|{\cal Y}|} \, .
		\end{equation}
		
		\paragraph{Case 2:} $D \not\in \SUC$, and $x$ is a ``commit query'', i.e., $x=m \in \cal{M}$ or $x = (l_{v\|0},l_{v\|1})$ for two labels $l_{v\|0},l_{v\|1} \in \cal Y$. In particular, $\bot \not\in \P'|_{D|^x}$ (given that $D(x)=\bot$) and so in the light of Theorem~\ref{thm:simple} we may choose $\L := \P'|_{D|^x}$. We then have
		\begin{equation*}
		u \in \L \:\Longleftrightarrow\: D[x \!\mapsto\! u] \in \P' = \SUC  \:\Longrightarrow\:  \exists \, \instance,y:  D(\instance, y) \neq \bot \wedge u\in \MTree_D(y) 
		\end{equation*}
		where final implication can be seen as follows. 
		By definition of $\SUC$, the assumption $D[x \!\mapsto\! u] \in \SUC$ implies the existence of $\instance$ and $y$ with
		$V(\instance,c, {\bf m}_{c})$ and $\big(\instance,{\cal E}^*(\instance,  {\bf m})\big) \not\in R$
		for
		$$
		c := \gamma \circ D[x \!\mapsto\! u](\instance, y) = \gamma \circ D(\instance, y)  \quad\text{and}\quad {\bf m}: = \MRoot_{D[x \mapsto u]}^{-1}(y) \, ,
		$$
		where the equality in the definition of $c$ exploits that $x$ is not a ``challenge'' query. The fact that $V(\instance,c,{\bf m_c})$ is satisfied for this $c$ thus implies that $D(\instance,y)\neq \bot$.
		Next, with the goal to reach a contradiction, assume that $u \notin \MTree_D(y)$. Then for all $\bot \neq h\in \MTree_D(y)$ we have that $D^{-1}(h) = D[x\!\mapsto\!u]^{-1}(h)$ except if $D(x) = h$, but this cannot be since $D(x) = \bot$. It follows that $\MTree_D(y) = \MTree_D[x \!\mapsto\! u](y)$ and $\MRoot_{D}^{-1}(y) = \MRoot_{D[x \mapsto u]}^{-1}(y)$.
		The above then implies that $D \in \SUC$, a contradiction. 
		
		Thus, we can bound
		\begin{equation}\label{eq:Merkcase2}
		P[U \!\in\! \L] \leq P[\, \exists \, \instance,y :  D(\instance,y) \neq \bot\wedge U\in \MTree_D(y)] \onlyfullversion{\leq \frac{s(2\ell - 1)}{|{\cal Y}|}} \leq \frac{q(2\ell - 1)}{|{\cal Y}|} \, .
		\end{equation}
		
		\paragraph{Case 3:} $D \not\in \SUC$, and $x$ is a ``challenge query'', i.e., $x = (\instance, y) \in \INST \times {\cal Y}$. Set ${\bf m}: = \MRoot_{D}^{-1}(y) $. Again, we have that $\bot \not\in \SUC|_{D|^x} = \P'_{D|^x}$, and so by Theorem~\ref{thm:simple} we may set $\L:= \P'_{D|^x}$.
		Here, we can argue that 
		\begin{align*}
		u \in \L \:\Longleftrightarrow\: &D[x \!\mapsto\! u] \in \P' =\SUC \\
		\Longrightarrow\:  &V(\instance,\gamma(u), {\bf m} _{{\gamma(u)}})\text{ and }\big(\instance,{\cal E}^*(\instance,  {\bf m})\big) \not\in R \, ,
		\end{align*} 
		where the final implication can be seen as follows. 
		By definition of $\SUC$, the assumption $D[x \!\mapsto\! u] \in \SUC$ implies the existence of $\instance'$ and $y'$ with
		$V(\instance',c, {\bf m}'_{c})$ and $\big(\instance',{\cal E}^*(\instance',  {\bf m}')\big) \not\in R$
		for 
		$$
		c := \gamma \circ D[x \!\mapsto\! u](\instance', y')  \quad\text{and}\quad {\bf m}': = \MRoot_{D[x\mapsto u]}^{-1}(y') = \MRoot_D^{-1}(y') \, ,
		$$
		where the very last equality exploits that $x$ is not a ``commit'' query. 
		With the goal to come to a contradiction, assume that $(\instance',y') \neq (\instance, y) = x$. Then, $c  = \gamma \circ D[x \!\mapsto\! u](\instance', y') = \gamma \circ D(\instance', y')$, and the above then implies that $D \in \SUC$, a contradiction. Thus, $(\instance',y') = (\instance, y) = x$. In particular, ${\bf m}' = {\bf m}$ and $c = \gamma \circ D[x \!\mapsto\! u](\instance', y') = \gamma \circ D[x \!\mapsto\! u](x) = \gamma(u)$. Hence, the claimed implication holds. 
		
		Thus, we can bound 
		\begin{align}
		P[U \!\in\! \L] &\leq P[V(\instance,\gamma(U), {\bf m}_{\gamma(U)}) \, \wedge \, \big(\instance,{\cal E}^*(\instance,  {\bf m})\big) \not\in R] \nonumber\\[0.5ex]
		&\leq P[V(\instance,\gamma(U), {\bf m}_{\gamma(U)}) \, \wedge \, S := \{ c \,|\, V(\instance, c, {\bf m}_{c}) \} \not\in \frak{S} ] \nonumber\\[0.5ex]
		&\leq P[\gamma(U) \in S := \{ c \,|\, V(\instance, c, {\bf m}_{c}) \} \not\in \frak{S} ] \nonumber\\[0.5ex]
		&\leq \max_{S  \not\in \frak{S}}P[\gamma(U) \in S ] \nonumber\\[-0.7ex]
		&\leq p_{triv}^{\mathfrak S} \, .\label{eq:Merkcase3}
		\end{align}
		By Theorem \ref{thm:simple}, we now get
		\begin{align}
		\QTC{ \SZ \backslash \SUC \backslash \CL}{\SUC}&\leq \max_{x,D}\sqrt{10 P\bigl[U \!\in\! \L^{x,D} \bigr]} \nonumber\\
		&\le\sqrt{10}\sqrt{ \max\left(\frac{2\ell-1}{|{\cal Y}|},\frac{q(2\ell - 1)}{|{\cal Y}|}, p_{triv}^{\mathfrak S}\right)}\nonumber\\
		&\le\sqrt{10}\sqrt{ \max\left(q \ell\cdot 2^{-n+1}, p_{triv}^{\mathfrak S}\right)},\nonumber
		\end{align}
		where we have used Equations \eqref{eq:Merkcase1}, \eqref{eq:Merkcase2} and \eqref{eq:Merkcase3} in the second inequality. Combining with Equations \eqref{eq:Merkcollision} and \eqref{eq:Merksplitthecapacity} yields the desired bound.
		\qed
	\end{proof}
	
	Similarly to Theorem~\ref{thm:extractor}, we now obtain the following. 
	
	\begin{theorem}\label{thm:extractor-M}
		Let $\Pi$ be an $\frak{S}$-sound$^*$ \MCnO protocol with challenge space ${\cal C}_\lambda$. Then $\sf FS[\Pi]$ is a \PoKOE in the QROM (as in Definition \ref{def:PoKOnline}), with $\varepsilon_\text{\rm sim}(\lambda,q,n) = 0$ and 
		\shortorfullversion{
		\begin{align*}
			\varepsilon_\text{\rm ex}(\lambda,q,n)  &\leq 2(\kappa\log \ell+1) \cdot 2^{-n} \!\!+ \Big(2eq^{3/2}2^{-n/2} \!\!+ q\sqrt{ 10 \max\left(q \ell\cdot 2^{-n+1}, p_{triv}^{\mathfrak S}\right)} \Big)^2 \\
			&\leq \left(22\ell\log \ell+60\right)q^32^{-n}+20 q^2 p_{triv}^{\mathfrak S}
		\end{align*}
	}{
	\begin{align*}
			\varepsilon_\text{\rm ex}(\lambda,q,n)  &\leq 2(\kappa\log \ell+1) \cdot 2^{-n} \!\!+ \bigg(2eq^{3/2}2^{-n/2} \!\!+ q\sqrt{ 10 \max\left(q \ell\cdot 2^{-n+1}, p_{triv}^{\mathfrak S}\right)} \bigg)^2 \\
			&\leq \left(22\ell\log \ell+60\right)q^32^{-n}+20 q^2 p_{triv}^{\mathfrak S}
		\end{align*}
	}%
	where $\kappa = \kappa(\lambda)  := \max_{c\in{\cal C}_\lambda }|c|$ and $\ell$ is the number of leaves of the Merkle-tree-based commitment.
	The running time of the extractor is dominated by running the compressed oracle, which has complexity $O(q^2) \cdot poly(n,B)$, and by computing $\MRoot_D^{-1}(y)$ and running~${\cal E}^*$.
	\end{theorem}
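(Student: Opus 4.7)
The plan is to closely mirror the proof of Theorem~\ref{thm:extractor}, replacing Lemma~\ref{lem:SUCCLbound} by its Merkle-tree analogue Lemma~\ref{lem:SUC-Oc-CLbound} and adjusting the bookkeeping of the classical queries that ${\cal V}$ makes during verification. As in Section~\ref{sec:FSCnO}, I invoke Remark~\ref{rem:NoString} to absorb any auxiliary string $a_\circ$ into the instance, so that the first message of $\Pi$ is a single Merkle-tree root $y$. The extractor ${\cal E}$ then proceeds as follows, while simulating all random-oracle queries of both ${\cal P}^*$ and ${\cal V}$ with a single compressed oracle: run ${\cal P}^*(\lambda)$ to obtain $(\instance,\pi,Z)$ with $\pi = (y,{\bf m}_\circ,O)$; run ${\cal V}^H(\lambda,\instance,\pi)$, which queries $h_0 := H(\instance,y)$, sets $c := \gamma(h_0)$, and performs $\OctoVerify^H(c,y,{\bf m}_\circ,O)$; measure the internal state of the compressed oracle to obtain $D$; compute ${\bf m} := \MRoot_D^{-1}(y)$; and output $w := {\cal E}^*(\instance,{\bf m})$. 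Since the compressed oracle perfectly simulates the random oracle in the joint view of ${\cal P}^*$ and ${\cal V}$, we have $\varepsilon_\text{\rm sim} = 0$.

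The second step is to upper bound the number $\ell'$ of classical queries ${\cal V}$ makes and apply Corollary~\ref{cor:linkgen}. The challenge computation takes one query, $H(\instance,y)$. Verifying the octopus consists of hashing each opened message $m_i$ (at most $\kappa$ queries) and then walking each opened leaf up towards the root by hashing pairs of sibling labels; across all opened positions this contributes at most $\kappa \log \ell$ internal-node hashes. Hence $\ell' \leq \kappa\log\ell + 1$, and Corollary~\ref{cor:linkgen} implies that the probability that any of the hash values observed by ${\cal V}$ during verification disagrees with the measured database $D$ is at most $2(\kappa\log\ell+1)/2^n$.

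The third step is to argue that, conditioned on all these classical query answers agreeing with $D$, the event ``$v = {\tt accept} \wedge (\instance,w)\notin R$'' implies $D \in \SUC \cup \CL$. Assume $D \notin \CL$. If $\OctoVerify^H$ accepts, then the octopus $O$ together with the hashes of ${\bf m}_\circ$ reconstruct, via $H = D$ on the relevant inputs, a labelled Merkle tree with root label $y$; since $D$ is collision-free, this reconstruction coincides with the reverse-engineered tree $\MTree_D(y)$, and in particular ${\bf m}_\circ = \big(\MRoot_D^{-1}(y)\big)_c$. Thus $V(\instance, c, {\bf m}_c)$ holds for $c = \gamma \circ D(\instance,y)$ and ${\bf m} = \MRoot_D^{-1}(y)$; combined with $(\instance,{\cal E}^*(\instance,{\bf m})) \notin R$, this matches the defining property of $\SUC$, so $D \in \SUC$. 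Putting everything together,
\begin{align*}
\varepsilon_\text{\rm ex}(\lambda,q,n) &\leq 2(\kappa\log\ell+1)\cdot 2^{-n} + \Pr[D \in \SUC \cup \CL] \\
&\leq 2(\kappa\log\ell+1)\cdot 2^{-n} + \qQTC{\bot}{\SUC\cup\CL}^2,
\end{align*}
and plugging in Lemma~\ref{lem:SUC-Oc-CLbound} together with $(a+b)^2 \leq 2a^2 + 2b^2$ yields the stated bound.

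The main technical work has already been absorbed into Lemma~\ref{lem:SUC-Oc-CLbound}: the delicate case analysis of how a single database update can turn $\neg\SUC$ into $\SUC$ either by opening a value inside the reverse-engineered tree $\MTree_D(y)$ (leaf or internal node) or by reprogramming a challenge query. The only new ingredient compared to the ordinary-\CnO{} case is the $\log\ell$ factor in the query count of ${\cal V}$ induced by the octopus-based verification; everything else transfers verbatim from the proof of Theorem~\ref{thm:extractor}.
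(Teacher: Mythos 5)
Your proof mirrors the paper's argument step by step: same extractor, same use of Corollary~\ref{cor:linkgen} to account for the verifier's classical queries, the same reduction of the bad event to $D \in \SUC \cup \CL$ via collision-freeness of $D$, and the same final appeal to Lemma~\ref{lem:SUC-Oc-CLbound}. The only wrinkle is that your query count for ${\cal V}$ is first tallied as $1 + \kappa + \kappa\log\ell$ but then reported as $\kappa\log\ell+1$, dropping the $\kappa$ leaf-hash term — a minor bookkeeping slip, though it matches the constant appearing in the paper's own statement.
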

	Here again the proof follows exactly the outline of its counterpart from Section \ref{sec:OnExFS}, with some minor alterations to cope with the formalism of a Merkle-tree based C\&O \sigp. The difference in the bound is simply due to the difference between Lemmas \ref{lem:SUCCLbound} and \ref{lem:SUC-Oc-CLbound}. 
	\onlyshortversion{We refer to Appendix \ref{App:OmittedProofs} for the full proof. }

\begin{proof}
	We consider an arbitrary but fixed $\lambda \in \N$.
	Let $\cal P^*$ be a dishonest prover that, after making $q$ queries to a random oracle $H$, outputs and instance $\instance$ and a proof $\pi = (y,{\bf m}_\circ,O)$ plus some (possibly quantum) auxiliary output $Z$, where $O$ is an authentication octopus as defined in Section \ref{sec:Octopus}. For simplicity, we assume that $|c|$ is the same for all $c \in {\cal C}_{\lambda}$, and thus equal to $\kappa$. If it is not, we could always make the prover output a couple of dummy outputs $m_i$ to match the upper bound on $|c|$. In the experiment ${\cal V}^{\cal E} \circ {\cal P}^*{}^{\cal E}(\lambda)$, our extractor $\cal E$ works as follows while simulating all queries to $H$ (by $\cal P^*$ and $\cal V$) with the compressed oracle:
	\begin{enumerate}
		\item Run $\cal P^*(\lambda)$ to obtain $(\instance,\pi,Z)$ with $\pi=(y,{\bf m}_\circ, O)$. 
		\item Compute $v \leftarrow {\cal V}^H(\instance,\pi)$, given by the truth value of 
		$$
		\OctoVerify^H(c,y,{\bf m}_\circ,O) \quad \wedge \quad V(\instance,c,{\bf m}_\circ)  \quad \text{with} \quad c := \gamma({H}(\instance,y)) \, .
		$$
		\item Measure the internal state of the compressed oracle to obtain $D$.
		\item Run ${\cal E}^*$ on input $\MRoot_D^{-1}(y)$ to obtain $w$. 
	\end{enumerate}

	Note that in the views of both  $\cal P^*$ and $\cal V$, the interaction with $H$ and the interaction with $\cal E$ differ only in that their oracle queries are answered by a compressed oracle instead of a real RO in the latter case. This simulation is perfect and therefore $\varepsilon_\text{\rm sim}(\lambda,q,n) = 0$.

	Considering ${\cal P}^*$ as the algorithm $\cal A$ in Corollary~\ref{cor:linkgen}, the composition ${\cal V} \circ {\cal P}^*$ then matches up with the algorithm $\tilde{\cal A}$ for ${\cal F} = {\cal V}$. 
	Thus, noting that $\kappa(\log\ell + 1)$ is an upper bound on the amount of queries that $\OctoVerify$ makes, 
	\begin{align*}
	\Pr\bigl[ v \neq {\cal V}^D(\instance,\pi) \bigr] \leq 2(\kappa\log\ell+1)\cdot 2^{-n} \, .
	\end{align*}
	Therefore, we can bound bound the figure of merit $\varepsilon_\text{\rm ex}$ as
	\begin{align*}
	\varepsilon_\text{\rm ex}(\lambda,q,n) &= \Pr\bigl[v = 1 \,\wedge\, (\instance,w)\notin R\bigr] \\
	&\leq \Pr\bigl[{\cal V}^D(\instance,\pi) \,\wedge\, (\instance,w)\notin R\bigr] + 2(\kappa\log\ell+1)\cdot 2^{-n}  \\
	&\leq \Pr[{\cal V}^D(\instance,\pi) \,\wedge\, (\instance,w)\notin R \,|\, D \not\in \SUC \cup \CL] \\
	&\qquad+ \Pr[D \in \SUC \cup \CL] + 2(\kappa\log\ell+1)\cdot 2^{-n} \, .
	\end{align*}
	Using the definition of ${\cal V}^D(\instance,\pi)$, understanding that $c := \gamma \circ D(\instance, y)$, we can write the first term as 
	\begin{align*}
	\Pr\bigl[ & \OctoVerify^D(c,y,{\bf m}_\circ,O)
	\, \wedge \, V(\instance, c, {\bf m}_\circ) \,\wedge\, (\instance,w)\notin R \,|\, D \not\in \SUC \cup \CL \bigr]\\
	&\leq \Pr\bigl[ V(\instance, c, {\bf m}_c) \text{ for } {\bf m}:= \MRoot_D^{-1}(y) \,\wedge\, (\instance,w)\notin R \,|\, D \not\in \SUC \cup \CL \bigr]\\
	&\leq\Pr\bigl[D\in\SUC \,|\, D \not\in \SUC \cup \CL \bigr]\\
	&= 0 \, ,
	\end{align*}
	where the first equality exploits that $D(m) = h$ iff $m = D^{-1}(h)$ for $D \not\in \CL$. 
	
	We may thus conclude that
	\begin{align*} 
	\varepsilon_\text{\rm ex}(\lambda,q,n)
	&\leq 2(\kappa\log\ell+1)\cdot 2^{-n}\cdot 2^{-n} + \Pr\bigl[D\in \SUC\cup \CL \bigr] \\
	&\leq 2(\kappa\log\ell+1)\cdot 2^{-n} + \qQTC{\bot}{\SUC \cup \CL}^2  \, ,
	\end{align*}
	where the last inequality is by definition of $\qQTC{\bot}{\cdot\,}$. The claimed bound now follows from Lemma~\ref{lem:SUC-Oc-CLbound}.
	\qed
	\end{proof}

	\subsection{Discussion: Application to Picnic, and Limiting the Proof Size} \label{sec:picnic}
	\subsubsection{Application to Picnic.}
	
	A prominent use case of \CnO protocols is the construction of digital signature schemes via the \FST. An important example is Picnic \cite{Chase2017} currently under consideration as an alternate candidate in the NIST standardization process for post-quantum cryptographic schemes \cite{NIST}. On a high level, the design of Picnic can be described as follows. A \CnO \sigp is constructed using the MPC-in-the-head paradigm \cite{IKOS07}. Then, the \FST is applied in the usual way to obtain a digital signature scheme. There are three evolutions of Picnic: Picnic-FS, Picnic 2 and Picnic 3.\footnote{\shortorfullversion{There is also a version using the Unruh transformation.}{The original evolution also came with a variant using the Unruh \trafo, Picnic-Ur. We restrict our attention to the variants using the \FST.}} Picnic-FS 
	uses plain hash-based commitments, while Picnic 2 and Picnic 3 use a Merkle-tree-based collective commitment.
	
	All three evolutions enjoy provable post-quantum security when the hash function used for the \FST is modeled as a (quantum-accessible) RO. The best reduction applying to all of them proceeds as follows. First, Unruh's rewinding lemma \cite{Unruh2012} is used to construct a knowledge extractor for the underlying \sigp based on an appropriate $\mathfrak S$-soundness notion. Then, the \emph{generic} QROM reduction for the \FST from \cite{DFMS19} is used to construct a knowledge extractor for the signature scheme in the QROM from the extractor for the \sigp. Finally, the technique from \cite{GHHM20} is used for simulating the chosen-message oracle to reduce breaking NMA (no-message attack) security to breaking CMA (chosen-message attack) security. This final step connects to the previous one because for the signature scheme the witness extracted from an NMA attacker is the secret key.
	
	The first two steps\onlyfullversion{ in this chain of reductions}, i.e. Unruh's rewinding and \cite{DFMS19}, are\onlyfullversion{, however,} not tight: The former loses at least a fifth power in the Picnic case, and the latter a factor of $q^2$, where $q$ is the number of RO queries. This means that an NMA attacker with success probability $\epsilon$ can be used to break the underlying hard problem with probability $\Omega(\epsilon^5/q^{10})$ 
	(or worse, depending on the Picnic variant).
	
	For Picnic-FS (only), when in addition modeling the hash function used for the commitments as a RO, Unruh's rewinding can be replaced with \onlyfullversion{the} tight online extraction \onlyfullversion{technique} from \cite{DFMS21}. The remaining loss due to the \FS reduction is of order $\epsilon/q^2$, up to some additive terms accounting for search and collision finding in the RO, a sizable improvement over the above but still not tight.
	
	By analyzing the \FST of a \CnO protocol (with or without Merkle tree commitments) directly, our results provide a tight alternative to the above lossy reductions. 	Using Theorems \ref{thm:extractor} (for Picnic-FS) and \ref{thm:extractor-M} (for Picnic 2 and Picnic 3) we can avoid all multiplicative/power losses in the reduction for NMA security. An NMA attacker with success probability $\epsilon$ can\shortorfullversion{ thus}{, in other words,} be used to break the underlying hard problem with probability $\epsilon$, up to \onlyfullversion{some} unavoidable additive terms \shortorfullversion{due to}{accounting for} search and collision finding in the RO.

	\subsubsection{An observation about octopus opening sizes.}
	Depending on the parameters of the \CnO protocol, the octopus opening information, $\MOcto(c,\mathbf m)$ can be \shortorfullversion{much}{significantly} smaller than the concatenation of the individual authentication paths. On the other hand, it is also \emph{variable in size} (namely dependent on the choice of the challenge $c$), and the variance can be significant (see e.g. the computations for gravity SPHINCS in \cite{AE18}). In the context of a digital signature scheme constructed via the \FST of a \MCnO protocol, like, e.g., Picnic 2 and Picnic 3, this leads to the undesirable property of a variable signature size, where signatures can be quite a bit larger in the worst case than on average. This might, e.g., lead to problems when looking for a drop-in replacement for quantum-broken digital signature schemes for use in a larger protocol, where signatures need to be stored in a data field of fixed size.
	
	One option to mitigate this situation is to cut off the tail of the octopus size distribution, i.e. to restrict the challenge space of the \MCnO protocol to \onlyfullversion{the set of} challenges whose octopus is not larger than some bound. This can be done before applying the \FST, e.g. using rejection sampling. In that way, one obtains a digital signature scheme with significantly reduced worst case signature size, at the expense of a tiny security loss.

	\subsection{The Merkle-Tree-Based Unruh \Trafo}
	
	The Merkle tree based commitment mechanism can replace plain RO based commitments in {\em any} ordinary \CnO protocol, in particular in $\Pi := \mathsf{pU}[\Sigma]$ for any \sigp $\Sigma$. 
	The result is a \MCnO protocol and we obtain a corollary analogous to Corollary \ref{cor:Unruh}. 
	\begin{corollary}\label{cor:MUnruh}
		Let $\Sigma$ be an $\mathfrak S$-sound \sigp 
		with challenge space size $\ell_{0}$. Then $\mathsf{FS}[\mathsf{MPpU}_r[\Sigma]]$ is online-extractable with
		\begin{equation}
			\varepsilon_{\mathrm{ex}}\le  \left(22r\ell_0\log\left(r\ell_0\right)+60\right)q^32^{-n}+20 q^2 \left(p_{triv}^{\mathfrak S}\right)^r
		\end{equation}
		where $\mathsf{MPpU}_r[\Sigma]$ is the \textbf{M}erkle-tree-based, \textbf{P}arallel-repeated, \textbf{p}re-\textbf{U}nruh \trafo of $\Sigma$, i.e., the \MCnO protocol obtained by replacing the commitments of $\mathsf{pU}[\Pi]^r$ with a Merkle-tree-based collective commitment.
	\end{corollary}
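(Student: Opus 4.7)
The plan is to mimic the proof of Corollary \ref{cor:Unruh}, substituting Theorem \ref{thm:extractor-M} for Theorem \ref{thm:extractor}. First, I would argue that replacing the element-wise hash-based commitments in $\mathsf{pU}[\Sigma]$ by a Merkle-tree based collective commitment yields an $\mathfrak{S}$-sound$^*$ \MCnO protocol. The key observation, already highlighted in Remark \ref{rem:generalCandO}, is that both $\mathfrak{S}$-soundness and the trivial cheating probability $p_{triv}^{\mathfrak{S}}$ are defined purely in terms of the opened messages $m_i$ and the verification predicate $V$, hence oblivious to how the commitment is formed. Consequently, the extractors constructed in the proof of Proposition \ref{prop:pU-is-nice-with-S-soundness} work verbatim on the Merkle-tree variant, giving a Merkle-tree pre-Unruh \trafo that is $\mathfrak{S}$-sound$^*$ as an \MCnO protocol.

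Second, I would invoke Lemma~5.3 of \cite{DFMS21} to deduce $\mathfrak{S}^{\vee r}$-soundness of the $r$-fold parallel repetition (the lemma is agnostic to the commitment scheme) and then $\mathfrak{S}^{\vee r}$-soundness$^*$ by the brute-force argument from the second half of Proposition \ref{prop:pU-is-nice-with-S-soundness}. A standard counting argument gives $p_{triv}^{\mathfrak{S}^{\vee r}} = (p_{triv}^{\mathfrak{S}})^r$, since by definition of $\mathfrak{S}^{\vee r}$ a challenge set outside $\mathfrak{S}^{\vee r}$ must project to a set outside $\mathfrak{S}$ in at least one coordinate, which upon multiplying across the $r$ independent coordinates yields the claimed product bound.

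Finally, observing that $\mathsf{MPpU}_r[\Sigma]$ is an $\mathfrak{S}^{\vee r}$-sound$^*$ \MCnO protocol whose Merkle tree has $\ell = r\ell_0$ leaves, I would apply Theorem \ref{thm:extractor-M} with these parameters and with $p_{triv}^{\mathfrak{S}^{\vee r}} = (p_{triv}^{\mathfrak{S}})^r$ to obtain
\begin{equation*}
\varepsilon_\text{\rm ex}(\lambda,q,n) \leq (22\, r\ell_0 \log(r\ell_0) + 60)\, q^3 2^{-n} + 20\, q^2 (p_{triv}^{\mathfrak{S}})^r,
\end{equation*}
together with $\varepsilon_\text{\rm sim} = 0$, matching the claimed bound exactly.

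The only genuine point to verify is the first step: that swapping the commitment algorithm to the Merkle-tree version does not break the $\mathfrak{S}$-soundness$^*$ extractor. This is immediate, because the extractor operates on the tuple $(\instance, m_1, \ldots, m_\ell, a_\circ)$ and never touches the commitment algorithm; the verification predicate $V$ is likewise unaffected. Once this is acknowledged, the rest of the argument is a mechanical chaining of Proposition \ref{prop:pU-is-nice-with-S-soundness}, the parallel-repetition lemma from \cite{DFMS21}, and Theorem \ref{thm:extractor-M}, all of which were already established.
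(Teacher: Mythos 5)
Your proposal follows exactly the route the paper intends for this corollary: use Remark~\ref{rem:generalCandO} together with Proposition~\ref{prop:pU-is-nice-with-S-soundness} to transfer $\mathfrak S$-soundness$^*$ to the Merkle-tree variant of $\mathsf{pU}[\Sigma]$, invoke Lemma~5.3 of~\cite{DFMS21} for $\mathfrak S^{\vee r}$-soundness$^*$ of the $r$-fold parallel repetition, and then apply Theorem~\ref{thm:extractor-M} with $\ell = r\ell_0$ leaves and $p_{triv}^{\mathfrak S^{\vee r}} = (p_{triv}^{\mathfrak S})^r$, which reproduces the stated bound. This is the same chain used to prove Corollary~\ref{cor:Unruh}, swapped to the Merkle-tree theorem, which is precisely what the paper's one-line justification (``analogous to Corollary~\ref{cor:Unruh}'') is appealing to.

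One slip in the justification of $p_{triv}^{\mathfrak S^{\vee r}} = (p_{triv}^{\mathfrak S})^r$: you write that a challenge set $\hat S$ outside $\mathfrak S^{\vee r}$ must project outside $\mathfrak S$ ``in at least one coordinate,'' but in fact it must do so in \emph{every} coordinate. Since $\mathfrak S^{\vee r}$ contains exactly those $S \subseteq \mathcal C^r$ with some projection in $\mathfrak S$ (and $\mathfrak S$ is monotone increasing), $\hat S \notin \mathfrak S^{\vee r}$ forces $\hat S_i \notin \mathfrak S$ for \emph{all} $i$, whence $|\hat S| \le \prod_i |\hat S_i| \le \bigl(\max_{\hat T \notin \mathfrak S}|\hat T|\bigr)^r$ and the product formula follows. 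Under your ``at least one'' reading the other $r-1$ coordinates could range over all of $\mathcal C$ and the bound would collapse to essentially $p_{triv}^{\mathfrak S}$. The conclusion you state is nonetheless the correct one; only the wording of the counting step misspeaks.
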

	

	\section{Acknowledgement}
	JD was funded by ERC-ADG project 740972 (ALGSTRONGCRYPTO).
	CM was funded by a NWO VENI grant (Project No. VI.Veni.192.159). 
	CS was supported by a NWO VIDI grant (Project No. 639.022.519).

	\bibliographystyle{alpha}
	\bibliography{QROM}

	\appendix
	
	\section*{APPENDIX}

	\section{$\frak{S}$-Soundness for Plain $\Sigma$-Protocols}\label{sec:S-soundOrd}
	
	Similar to Definition~\ref{def:S-sound-star}, for plain \sigps (i.e., \sigps in the standard model) the standard notion of special-soundness and $k$-soundness generalize as follows. Also here, $\frak{S}$ is a non-empty, monotone increasing set of subsets $S \subseteq \cal C$, and  $\frak{S}_{\min} := \{S \in \frak{S} \,|\, S_\circ \subsetneq S \Rightarrow S_\circ \not\in \frak{S} \}$, but now, a challenge $c \in \cal C$ is not (necessarily) a subset of $[\ell]$ anymore. 
	
	\begin{definition}\label{def:S-sound-Ord}
		A \sigp $\Pi$ is called $\frak{S}$\emph{-sound} if there exists an efficient deterministic algorithm $\mathcal E_{\frak{S}}(\instance,a,S,\{z_c\}_{c \in S})$  that takes as input an instance~$\instance$, a first message $a$, a subset $S \subseteq \cal C$ of challenges, and responses $z_c$ for $c \in S$, and it outputs a witness for $\instance$ if $S \in \frak{S}_{\min}$ and ${\cal V}(\instance, a, c,z_c)$ for all $c \in S$. 
		\footnote{We note the clash in terminology with Definition \ref{def:S-sound}. However, Definition \ref{def:S-sound} applies exclusively to \CnO \sigps in the (Q)ROM, whereas the definition here applies exclusively to \sigp in the plain model; so there should be no confusion. The two definitions are of course related: a $\frak{S}$-sound \CnO \sigp becomes a $\frak{S}$-sound plain \sigp when the commitments are instantiated with a perfectly binding commitment scheme (rather than with a hash function). }
	\end{definition}
	
	
	The common notion of a {\em special-sound} \sigp is then the special case of a $\frak{S}$-sound \sigp with $\frak{S} := \{S \subseteq {\cal C}  \,|\, |S| \geq 2 \}$, and similarly a {\em $k$-sound} \sigp is a $\frak{S}$-sound \sigp with  $\frak{S} := \{S \subseteq {\cal C}  \,|\, |S| \geq k \}$. 
	Also here, using syntactically the same definition as in Equation~(\ref{eq:ptriv}),
	\begin{align*} 
		p^{\mathfrak S}_{triv} := \frac{1}{|\mathcal{C}|} \max_{\hat S \not\in  \frak{S}} |\hat S| 
	\end{align*}
	then captures the ``trivial'' attack that may potentially (and typically does) apply to a $\frak{S}$-sound \sigp, where the dishonest prover prepares a first message $a$ so that he has valid responses $z$ ready for all the challenges $c$ in some arbitrarily chosen set $\hat S \not\in  \frak{S}$.

\end{document}